\newcommand{\subtitle}[1]{%
  \posttitle{%
    \par\end{center}
    \begin{center}\large#1\end{center}
    \vskip0.5em}%
}
\newtheorem{lemma}{Lemma}
\newtheorem{theorem}{Theorem}
\newtheorem*{theorem*}{Theorem}
\newtheorem{corollary}{Corollary}
\theoremstyle{definition}
\newtheorem{definition}{Definition}
\newtheorem{remark}{Remark}
\newtheorem*{remark*}{Remark}
\newtheorem{example}{Example}
\newtheorem*{example*}{Example}
\newtheorem{assumption}{Assumption}
\newtheorem{condition2}{Condition}[section]
\newenvironment{conditionp}[1]{
  
  \conditionalt
}{\endconditionalt}
\def\thm@space@setup{%
  \thm@preskip=\parskip \thm@postskip=0pt
}
\newcommand{\R}{\mathbb{R}}
\newcommand{\E}{\mathbb{E}}
\title{On Gaussian Process Priors in Nonparametric Conditional Moment Restriction Models \thanks{Email address: \texttt{sid.kankanala@yale.edu}. I thank
	Xiaohong Chen for continuous guidance and support throughout this project. I thank  Donald Andrews, Yuichi Kitamura and seminar participants at Yale for valuable comments and suggestions that improved the quality of this paper. All errors are my own. }}
\author{Sid Kankanala}
\affil{Department of Economics, Yale University}
\date{\today}
\begin{document}
\maketitle

\begin{center}
\large

\end{center}

\bigskip

\begin{abstract}

	\noindent
	\normalsize
	This paper studies quasi-Bayesian estimation and uncertainty quantification for an
unknown function that is  identified by a nonparametric conditional moment restriction. We derive contraction rates for a class of Gaussian process priors. Furthermore, we provide conditions under which a Bernstein–von Mises theorem holds for the  quasi-posterior distribution. As a consequence, we show that  optimally-weighted quasi-Bayes credible sets have exact asymptotic frequentist coverage. This extends classical results on the frequentist validity of optimally weighted quasi-Bayes credible sets for parametric generalized method of moments (GMM) models.

	\bigskip


\end{abstract}

\newpage

\section{Introduction} \label{intro}
This paper considers a general conditional moment restriction model  where an unknown  function $h_0$ is identified from the restriction \begin{align}
    \label{intro-cmr} \E[\rho(Y,h_0(X))|W] = \mathbf{0}.
\end{align}
Here, $Y \in \R^{d_y}$ is a vector of observables,  $X \in \R^{d}$ is a vector of  regressors and $W \in \R^{d_w}$ is a vector of conditioning (or instrumental) variables. The vector $\rho(.) = [ \rho_{1}(.) , \dots , \rho_{d_{\rho}}(.)   ]' $ is a  known $d_{\rho}$ dimensional vector of generalized residual functions and $h_0(.)$ is an unknown function of interest. The model in (\ref{intro-cmr}) has a long history in econometrics and statistics (e.g. \citealp{liao2011posterior}; \citealp{chen2012estimation}). As a special case, it nests the nonparametric instrumental variable model (NPIV) studied in \citet{newey2003instrumental}; \citet{hall2005nonparametric}; \citet*{blundell2007semi} and the nonparametric quantile instrumental variable (NPQIV) model studied in \citet*{chernozhukov2007instrumental,horowitz2007nonparametric}. 

The conditional moment restriction model in  (\ref{intro-cmr}) represents a general class of  ill-posed inverse problems with an unknown and possibly nonlinear operator. In this setup, the (possibly nonlinear) operator $h \rightarrow m(W,h) = \E[\rho(Y,h(X))|W]$ smoothes out features of $h$. Moreover, the operator is not known as it depends on the true data generating process through the population conditional expectation operator $\E(.|W)$. One consequence of this, from a statistical perspective, is that feasible procedures which replace $m(W,h)$ with a finite sample analog $\widehat{m}(W,h)$ are sensitive to features of the generalized residual $\rho(.)$, that would otherwise be smoothed out in the case of a fully known operator. In particular, if $\rho(.)$ is pointwise nonsmooth or discontinuous in its arguments, as in the case of a NPQIV model, a feasible estimator $\widehat{m}(W,h)$ may inherit similar properties in finite samples.

In this paper, we study the quasi-Bayesian posterior distribution that arises from the conditional moment restriction in (\ref{intro-cmr}). If we denote the conditional mean of the residual at a function $h$ by $m(W,h) = \E[\rho(Y,h(X))|W]$, the unknown function is identified by the restriction $ \E \big( \| m(W,h_0) \|_{\ell^2}^2 \big) = 0 $. As such, estimation can be based on the finite sample objective function $h \rightarrow \E_n \big[    \widehat{m}(W,h) ' \widehat{\Sigma}(W)  \widehat{m}(W,h)            \big] $, where  $\widehat{m}(W,h)$ denotes a feasible estimator of $m(W,h)$ and $\widehat{\Sigma}(W)$ is a positive semi-definite weighting matrix. In a quasi-Bayes framework, we view this objective function as a pseudo-likelihood. When combined with a prior $\mu$, this leads to the quasi-posterior distribution
\begin{equation}
\label{posterior-intro}
\mu(.|\mathcal{Z}_n) =    \frac{\exp\big(    - \frac{n}{2}  \E_n \big[    \widehat{m}(W,.) ' \widehat{\Sigma}(W)  \widehat{m}(W,.)            \big]       \big) d \mu(.)  }{\int \exp\big(    - \frac{n}{2}  \E_n \big[    \widehat{m}(W,h) ' \widehat{\Sigma}(W)  \widehat{m}(W,h)            \big]       \big) d \mu(h) }         .
\end{equation} 
The quasi-posterior in (\ref{posterior-intro}) was proposed by \citet{liao2011posterior}. Using sieve based priors, they established posterior consistency. Posterior contraction rates, also using sieve based priors, were obtained in \citet{kato2013quasi} for the special case of a nonparametric instrumental variable model. The main results of this paper develop the limit theory for the quasi-Bayes posterior  when $\mu$ belongs to a class of sufficiently smooth Gaussian process priors. As a first step, our analysis extends the preceding work on contraction rates for the quasi-posterior  in that $(i)$ we establish contraction rates for the general conditional moment restriction model in (\ref{intro-cmr})  and $(ii)$ we do so using a class of  Gaussian process priors. Beyond contraction rates, we also provide conditions under which a Bernstein–von Mises  (BvM) theorem holds for the  quasi-posterior distribution. To be specific, if $h \rightarrow \mathbf{L}(h)$ is a sufficiently smooth linear functional, we show that the induced quasi-posterior distribution of $\mathbf{L}(h)$ can be well approximated (in the sense of weak convergence in probability) by a suitable Gaussian measure. We use this to provide frequentist guarantees for quasi-Bayesian credible sets that are centered around the posterior mean. In particular, we show that such credible sets have asymptotically exact frequentist coverage, provided that the quasi-Bayes posterior in (\ref{posterior-intro}) is optimally weighted. This extends classical results (e.g. \citealp{chernozhukov2003mcmc}) on the frequentist validity of optimally weighted quasi-Bayes credible sets for parametric generalized method of moments (GMM) models. 

Our approach to obtaining contraction rates is similar to the general strategy employed in  inverse problems (e.g. \citealp{knapik2018general}) in that we first
 obtain contraction rates for a suitable direct problem, i.e  $  d_{w}(h,h_0) = \| \Pi_n  \big[m(W,h) -   m(W,h_0)  \big] \|_{L^2(\mathbb{P})}  $ where $\Pi_n : L^2(\mathbb{P}) \rightarrow L^2(\mathbb{P})$ is a sample size dependent orthogonal  projection operator.  Similar to the approach in  \citet{giordano2020consistency}; \citet*{monard2021consistent,monard2021statistical}, we consider rescaled Gaussian process priors which concentrate (with high probability) on bounded subsets of a sufficiently smooth function class.  By combining this with contraction rates for the direct problem, we obtain contraction rates for stronger metrics such as $d(h,h_0) = \| h - h_0 \|_{L^2(\mathbb{P})}$. To obtain asymptotic Bernstein-von Mises Gaussian approximations, we  make use of posterior local fluctuation and change of parametrization arguments that are frequently used in the analysis of Gaussian process priors in density or regression frameworks (e.g. \citealp{castillo2015bernstein}; \citealp{monard2021statistical}). In the setup considered here, these arguments are suitably modified to account for $(i)$ a quasi-posterior as opposed to a traditional likelihood; $(ii)$ a general conditional moment restriction as in (\ref{intro-cmr}); $(iii)$ an unknown and possibly nonlinear operator $h \rightarrow m(W,h)$ that must be estimated as a first step and $(iv)$ a generalized residual function $\rho(.)$ that may be nonlinear and pointwise nonsmooth in its arguments.

This setting considered in this paper is closely related to several strands of literature that study statistical inverse problems in distinct setups. Rates of convergence and confidence sets for the model in (\ref{intro-cmr}) was established in \citet{chen2012estimation, chen2015sieve} under a frequentist sieve based framework. There is a large literature (e.g. \citealp*{knapik2011bayesian}; \citealp*{agapiou2013posterior}; \citealp{florens2016regularizing}; \citealp*{gugushvili2020bayesian}) that studies Gaussian priors within the context of linear inverse regression models with Gaussian noise and a known linear operator. In these cases, the model is conjugate with a known Gaussian posterior distribution and statistical properties can be analyzed directly. In the nonlinear case, \citet{monard2021statistical} consider Bayesian inference with Gaussian process priors on a class of nonlinear inverse regression models with Gaussian noise and a known nonlinear operator. Similar to their analysis, we consider additional posterior regularization obtained through scaling the Gaussian process. In our conditional moment setting, this scaling is crucial to $(i)$ control the nonlinearity and ill-posedness of the inverse problem and $(ii)$  obtain high probability guarantees on the behavior of the first stage estimator $\widehat{m}(W,h)$ used to approximate the unknown nonlinear operator $h \rightarrow m(W,h)$.

The paper is organized as follows. Section \ref{review} provides a brief review of Gaussian process priors. Section \ref{sec3} introduces the quasi-Bayes framework and main assumptions. Section \ref{sec4} develops the quasi-Bayes limit theory and main theoretical results. Section \ref{sec6} contains proofs and auxiliary results for all the statements in the main text.

\subsection{Notation} \label{notation}
Given positive sequences $(x_n)_{n=1}^{\infty} $and $ (y_n)_{n=1}^{\infty}$, we write $x_n \lessapprox y_n$  if $\limsup_{n \rightarrow \infty} x_n/y_n < \infty $ and $x_n \asymp y_n$ if $x_n \lessapprox y_n \lessapprox x_n$.  Given a positive definite matrix $\Sigma \in \R^{k \times k }$, the induced inner product and norm on $\R^k$ is denoted by $  \langle ., . \rangle_{\Sigma} $    and  $ \| . \|_{\Sigma}$,  respectively. That is, $ \langle u , v \rangle_{\Sigma} = u' \Sigma v$. The  Euclidean norm $(\Sigma = I_k)$ is denoted by $\| .\|_{\ell^2}$. Let $\| . \|_{\infty}$ denote the usual supremum norm on functions and vectors. We use $\E$ and $\mathbb{P}$ to denote the usual expectation and probability operators. Let $\E_n$ and $\mathbb{P}_n$ denote the empirical analog of $\E$ and $\mathbb{P}$, respectively. Given a random vector $Z$, let $L^2(Z)$ denote the usual $L^2$ space of $\R$ valued functions that are measurable with respect to the $\sigma$ algebra generated by $Z$. Similarly, the  $L^2$ space corresponding to $\R^d$ valued functions is denoted by $L^2(Z,\R^d)$. In both cases, the $L^2$ norm  is denoted by $\| f \|_{L^2(\mathbb{P})}^2 = \int \| f(Z) \|_{\ell^2}^2 d  \mathbb{P}$. Given a cube $\mathcal{X} \subset \R^d$, we use $ \mathbf{H}^{p} = ( \mathbf{H}^p(\mathcal{X}), \| . \|_{\mathbf{H}^p}) $ to denote the usual $p$-Sobolev space of functions on $\mathcal{X}$. The $L^2$ space with respect to the Lebesgue measure on $\mathcal{X}$ is denoted by $L^2(\mathcal{X})$. We denote a $p$-Sobolev ball of radius $M > 0 $ by $\mathbf{H}^p(M) = \{ h \in \mathbf{H}^p : \| h \|_{\mathbf{H}^p} \leq M   \}$.

\section{Review} \label{review}

In this section, we briefly review Gaussian random elements and the related topic of generating covariance operators and Hilbert scales through self-adjoint operators. For further details on Gaussian process priors, we refer to \citet{ghosal2017fundamentals}.

\begin{definition}[Gaussian random elements] \label{gre}
Given a probability space $\Omega$ and a separable Banach space $(\mathbb{B}, \| . \|_{\mathbb{B}})$, we say $G : \Omega \rightarrow \mathbb{B}$ is a Gaussian random element if it is a Borel measurable mapping and the random variable $L(G)$ is normally distributed for every $L$ in the dual space $\mathbb{B}^*$ of $\mathbb{B}$. In the special case where $G$ can be viewed as a map into a separable Hilbert space $(\mathcal{H}, \langle . \: ,  . \: \rangle_{\mathcal{H}})$, we refer to its mean as the unique $\mu \in \mathcal{H}$ that satisfies $  \E[\langle G  , h \rangle ] = \langle \mu , h \rangle  $ for every $h \in \mathcal{H}$. The covariance operator of a mean-zero Gaussian random element on $(\mathcal{H}, \langle . \: ,  . \: \rangle_{\mathcal{H}})$ is the continuous, linear, compact self-adjoint operator $ \Lambda : \mathcal{H} \rightarrow \mathcal{H}$ that satisfies \begin{align}
    \label{covariance} \E[\langle G , h_1 \rangle \langle G , h_2 \rangle ] = \langle h_1 , \Lambda h_2 \rangle \; \; \; \; \; \; \forall \: h_1,h_2 \in \mathcal{H}.
\end{align}
\end{definition}

A Gaussian process over an index set $T$ is a stochastic process $\{ G_t : t \in  T \}$ such that the vector $(G_{t_1} , \dots , G_{t_k})$ is multivariate normal distributed, for every $t_1,\dots,t_k \in T$ and $k \in \mathbb{N}$. Furthermore, if $\{G_t :t \in T \}$ is a Borel measurable random element with sample paths in a separable subset of $\ell^{\infty}(T)$, then $G$ is a Gaussian random element in this space. The covariance of a Gaussian process $\{ G_t : t \in T \}$ may either refer to the operator $\Lambda$ in Definition \ref{gre} (when the sample paths lie in a Hilbert space) or to the covariance function $C(s,t) = \text{Cov}(G_s,G_t)  $.

Consider a mean-zero Gaussian process $G$ with realizations in a separable Hilbert space  $\mathcal{H}$ with covariance operator $\Lambda$. By the spectral theorem, there exists an orthonormal basis of eigenfunctions $(e_i)_{i=1}^{\infty} \subset \mathcal{H}$ that diagonalizes the operator $\Lambda$. Furthermore, if $\lambda_i $ denotes the non-negative eigenvalue associated to $e_i$, the sequence is of trace class $(\sum_{i=1}^{\infty} \lambda_i < \infty)$ and $G$ admits an expansion  of the form \begin{align}
    \label{gexpand} G \stackrel{d}{=}  \sum_{i=1}^{\infty} \sqrt{\lambda_i} Z_i e_i \;, \; \; \; \; \; \; \; \; Z_i \stackrel{i.i.d}{\sim} N(0,1).
\end{align}
Conversely, given any non-negative sequence $(\lambda_i)_{i=1}^{\infty}$ with $\sum_{i=1}^{\infty} \lambda_i < \infty$, the representation in (\ref{gexpand}) defines a Gaussian random element on $(\mathcal{H}, \langle . \: ,  . \: \rangle_{\mathcal{H}})$. As the following definition illustrates, this leads to a family of Gaussian random elements that differ only by a smoothness scale.
\begin{definition}[Sobolev norms and Gaussian Series] \label{gseries}
Suppose $(e_i)_{i=1}^{\infty} $ is an orthonormal sequence of a Hilbert space $(\mathcal{H}, \langle . \: ,  . \: \rangle_{\mathcal{H}})$ of functions over a domain $\mathcal{X} \subseteq \R^d$. Fix any $\beta   \in \R $. Given any function $f \in \mathcal{H}$ and its unique basis expansion $f = \sum_{i=1}^{\infty} \langle f , e_i \rangle_{\mathcal{H}} e_i$, we denote the $\beta$ Sobolev norm with respect to $(e_i)_{i=1}^{\infty}$ by \begin{align}
    \label{sobolev-norm} \| f \|_{\mathcal{H}^{\beta}}^2 = \sum_{i=1}^{\infty}  \left| \langle f , e_i \rangle_{\mathcal{H}} \right|^2 i^{2 \beta/d}.
\end{align}
For $\beta > 0 $, the norm $\| . \|_{\mathcal{H}^{\beta}}$ measures smoothness or regularity of $f$ with respect to the basis $(e_i)_{i=1}^{\infty} $, with higher values of $\beta$ leading to greater regularity. For example, if $\mathcal{H} = L^2[0,1]^d $ and $(e_i)_{i=1}^{\infty}$ are the usual Fourier basis, this coincides with the usual Sobolev norm. The subset of $\mathcal{H}$ for which the norm is finite is denoted by $\mathcal{H}^{\beta} = \{ f \in \mathcal{H} :  \| f \|_{\mathcal{H}^\beta} < \infty   \}$.\footnote{For $\beta > 0$, we define $\mathcal{H}_{-\beta}$ as the dual space of $ \mathcal{H}^{\beta}$. As any $f \in \mathcal{H}$ defines a continuous linear functional on $\mathcal{H}^{\beta}$, we can identify $\mathcal{H} \subset \mathcal{H}^{-\beta}$. Moreover, the operator norm of such an $f \in \mathcal{H}$ agrees with the definition $\| . \|_{\mathcal{H}^{- \beta}}$ in (\ref{sobolev-norm}).} For $\beta > 0$ and $f \in \mathcal{H}$, the norm $\| . \|_{\mathcal{H}^{-\beta}}$ is always finite. In particular, $\| . \|_{\mathcal{H}^{-\beta}}$  is a weaker norm on $\mathcal{H}$.\

Given any $\alpha > 0$, we can define a Gaussian random element on $\mathcal{H}$ via \begin{align}
    \label{sobolev-gauss} G = \sum_{i=1}^{\infty} i^{-(1/2 + \alpha/d)} Z_i e_i \;, \; \; \; \; \; \; \; \; Z_i \stackrel{i.i.d}{\sim} N(0,1).
\end{align}
For every $\beta < \alpha$, we have $\E \|  G  \|_{\mathcal{H}^\beta}^2 = \sum_{i=1}^{\infty} i^{-1 + 2(\beta - \alpha)/d } < \infty $. In particular, $ \mathbb{P}( \| G \|_{\mathcal{H}^\beta} < \infty)  = 1 $ and the sample realizations of the Gaussian random element can be viewed as being almost $\alpha$ regular with respect to $(e_i)_{i=1}^{\infty}$. The Reproducing Kernel Hilbert Space (RKHS) $\mathbb{H}$ of the Gaussian Process $G$ in (\ref{sobolev-gauss}) is the set of exactly $\alpha$ regular functions, in the sense that  \begin{align}
\label{RKHS-0}  \mathbb{H}  =  \bigg \{  g \in \mathcal{H}  : \|g \|_{\mathbb{H}}^2 =   \sum_{i=1}^{\infty} i^{ 1 +   2\alpha /d}  \left| \langle g ,  e_i \rangle_{\mathcal{H}}      \right|^2 < \infty               \bigg \} .
\end{align}
Intuitively, $\mathbb{H}$ determines the support and small ball concentration properties of $G$.
\end{definition}
The family $\{ \mathcal{H}^{\beta} : \beta \in \R \}$ in Definition \ref{gseries} is an example of a  Hilbert or smoothness scale. More generally, Hilbert scales can be generated through a densely defined self-adjoint operator $L$. The following definition clarifies this connection.

\begin{definition}[Hilbert Scale] \label{hil-scale}
Let $(\mathcal{H}, \langle . \: ,  . \: \rangle_{\mathcal{H}})$ be a Hilbert space and $ D(L) \subseteq \mathcal{H}$ an open dense subset of $\mathcal{H}$. Suppose $L : D(L) \subseteq \mathcal{H} \rightarrow  \mathcal{H}$ is an unbounded self-adjoint operator that is coercive, i.e $\langle L(x) , x \rangle_{\mathcal{H}} \geq  \gamma \| x \|^2$ for some $\gamma > 0 $ and all $x \in D(L)$. For $ k \in \mathbb{N}$, denote the domain of $L^k$ by $D(L^k) \subseteq \mathcal{H}$. Then the family $\{ L^k : k \in \mathbb{N}\}$ is defined on the dense subset $\mathcal{H}_{\infty} = \bigcap_{k=1}^{\infty} D(L^k)$. By spectral theory, $L^{s}$ can be defined as an operator on $\mathcal{H}_{\infty}$ for every $s \in \R$.\footnote{For more details on the functional calculus of self-adjoint operators, see e.g. \cite{reed2012methods}.} We can define an inner product and norm on $\mathcal{H}_{\infty} $ by  \begin{align}
    \label{hs-norm} \langle h,g \rangle_{\mathcal{H}^s} = \langle L^s h  \; ,  \; L^s g \rangle_{\mathcal{H}} \;, \; \| h \|_{\mathcal{H}^s} = \|  L^s h \|_{\mathcal{H}} \; \; \; \; \; \; \; \forall \: h,g \in \mathcal{H}_{\infty}.
\end{align}
Denote the completion of $\mathcal{H}_{\infty}$ with respect to $\| . \|_{\mathcal{H}^s}$ by $\mathcal{H}^s$. The family $\{ \mathcal{H}^s : s \in \R \}$ is referred to as (see e.g. \citealp{mair1996statistical}; \citealp{mathe2001optimal})  the  Hilbert scale generated by the operator $L$.
\end{definition}
If the operator $L$ in the preceding definition admits a compact self-adjoint inverse $L^{-1}$, we can define a Gaussian Process by viewing $L^{-s}$ for $s > 0$ as a covariance operator. In this case, the Gaussian Process can be expressed in a  similar form to (\ref{sobolev-gauss}), where $(e_i)_{i=1}^{\infty}$ are the eigenfunctions of $L^{-1}$. The following example illustrates the essential idea for a commonly used class of priors.
\begin{example*}[Mat\'ern Gaussian Priors]  For a cube $\mathcal{X} \subset \R^d$, the Mat\'ern covariance function is defined by \begin{align}
    C(x,y) =  \frac{2^{1-\alpha}}{\Gamma(\alpha)} \bigg(  \sqrt{2 \alpha} \frac{\| x-y \|}{l} \bigg)^{\alpha} B_{\alpha}\bigg( \sqrt{2 \alpha}  \frac{\| x-y \|}{l} \bigg) \; \; \; \; \forall \; x,y \in \mathcal{X} \; ,
\end{align}
where $ l > 0$ is a correlation length scale parameter, $\alpha > 0$ is a smoothness parameter, $\Gamma(.)$ is the Gamma function and $B_{\alpha}$ is the modified Bessel function of the second kind. The Mat\'ern Gaussian process $G_{\alpha}$ satisfies (see e.g. \citealp{borovitskiy2020matern}) the stochastic partial differential equation $$ \bigg( \frac{2 \alpha}{l^2} - \Delta  \bigg)^{ \frac{\alpha}{2} + \frac{d}{4}} G_{\alpha} =   \mathcal{W} \:, $$
where $\Delta$ is the Laplacian and $\mathcal{W}$ is Gaussian white noise, renormalized by a fixed constant.\footnote{To ensure invertability of the Laplacian on $\mathcal{X}$, one typically restricts the functions to satisfy certain Dirichlet or Neumann boundary conditions.} In particular, the covariance operator $\Lambda$ of $G_{\alpha}$ diagonalizes in the same eigenbasis as the Laplacian. As the eigenvalues $ (\kappa_i)_{i=1}^{\infty}  $ of the Laplacian  scale at rate $\kappa_i \asymp i ^{2/d}$, it follows that the eigenvalues $(\lambda_i)_{i=1}^{\infty}$ of the covariance operator $\Lambda$ scale at rate $ \lambda_i \asymp i^{-(1 +2 \alpha/d)}$. Suppose $(e_i)_{i=1}^{\infty}$ are eigenfunctions of the Laplacian (on $\mathcal{X} = [0,1]^d$ they coincide with the usual Fourier basis) and $ \{\mathcal{H}^{\beta} : \beta \in \R \}$ are the associated smoothness scales (as in Definition \ref{gseries}). It follows from the representation in (\ref{gexpand}) and Definition \ref{gseries} that the same paths of $G_{\alpha}$ are $\beta$ regular for every $\beta < \alpha$. In particular, a larger smoothness parameter $\alpha $  induces a more regular process.
\end{example*}

\begin{figure}[htbp]
  \centering

  \begin{subfigure}[b]{0.48\textwidth}
    \centering
    \includegraphics[width=\textwidth, height=4.5cm]{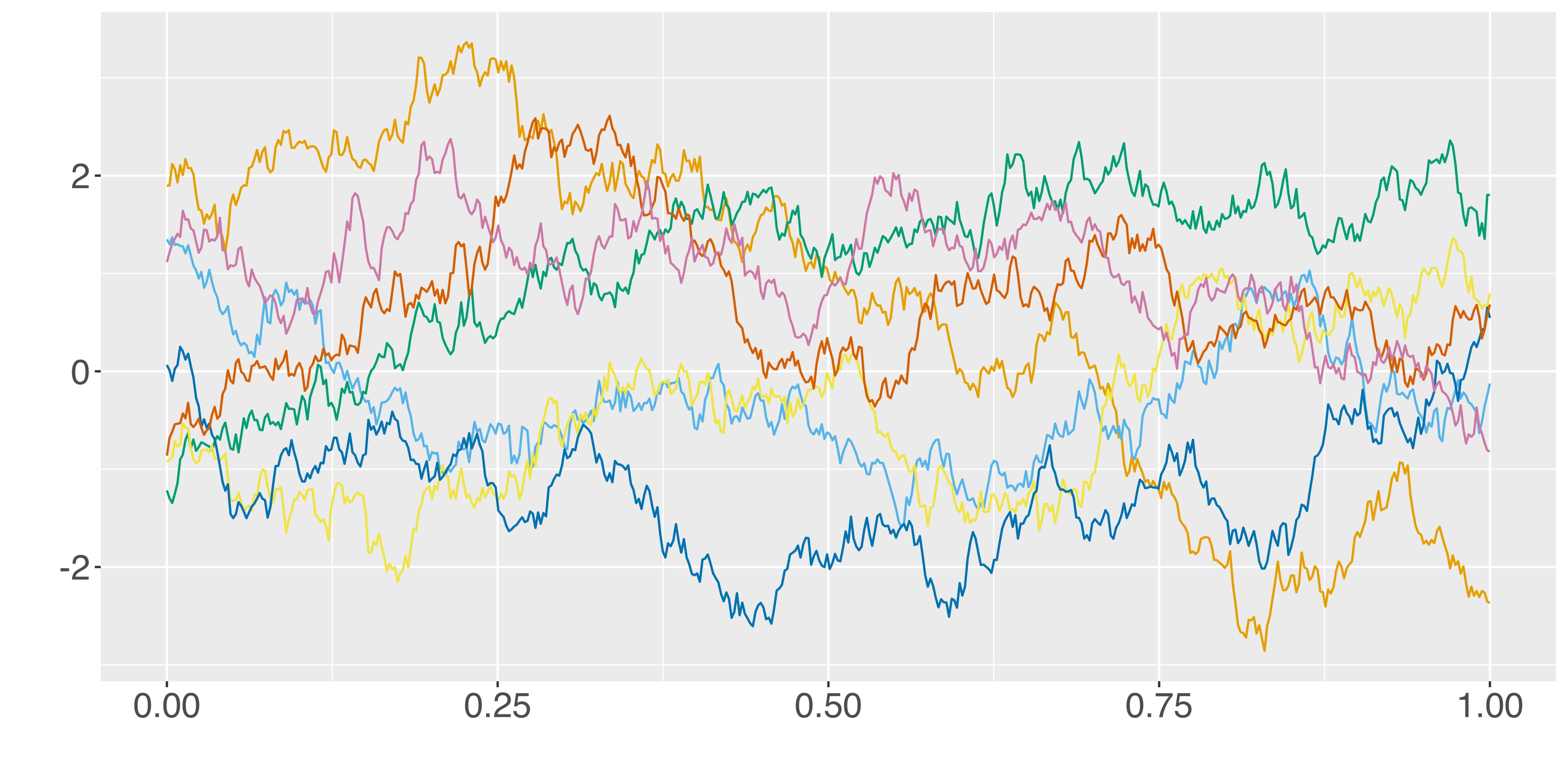}
    \caption{$\alpha=0.5$}
    \label{fig:figure1}
  \end{subfigure}
  \hfill
  \begin{subfigure}[b]{0.48\textwidth}
    \centering
    \includegraphics[width=\textwidth, height=4.5cm]{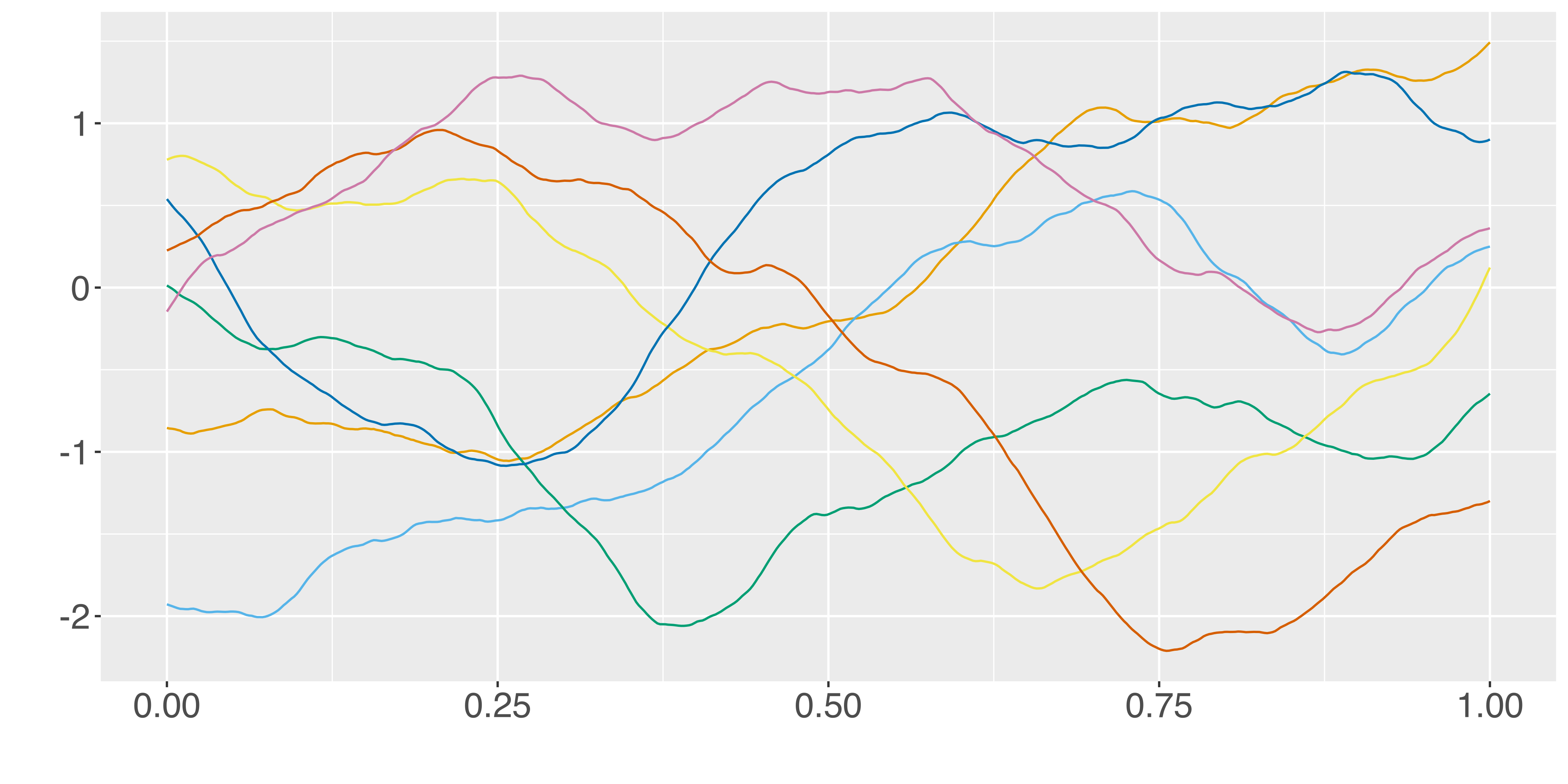}
    \caption{$\alpha=1.5$}
    \label{fig:figure2}
  \end{subfigure}
  
  \vspace{0.5cm} 
  
  \begin{subfigure}[b]{0.48\textwidth}
    \centering
    \includegraphics[width=\textwidth, height=4.5cm]{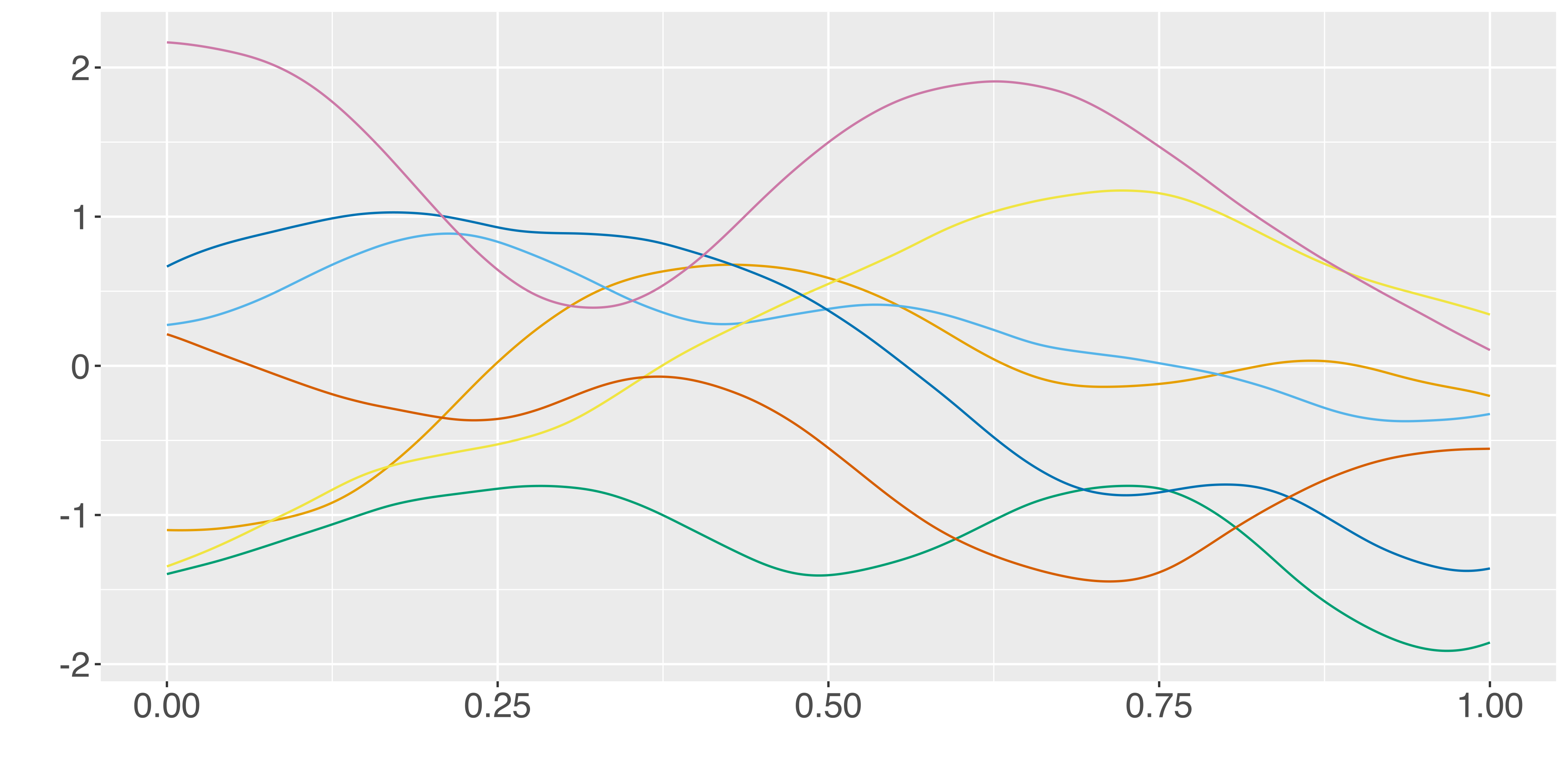}
    \caption{$\alpha=3$}
    \label{fig:figure3}
  \end{subfigure}
  \hfill
  \begin{subfigure}[b]{0.48\textwidth}
    \centering
    \includegraphics[width=\textwidth, height=4.5cm]{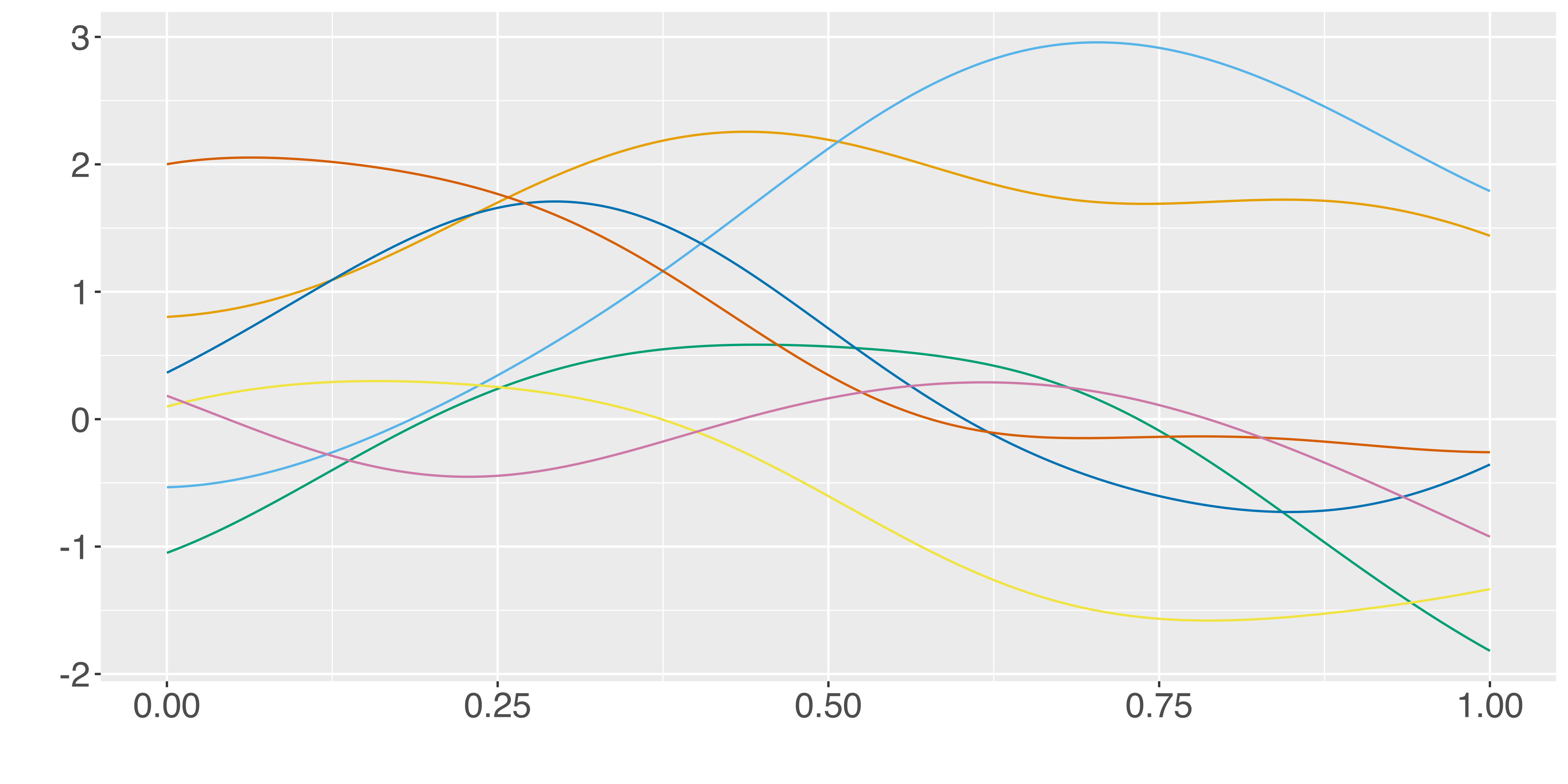}
    \caption{$\alpha=10$}
    \label{fig:figure4}
  \end{subfigure}

  \caption{Draws of a Mat\'ern Gaussian Process on $\mathcal{X} = [0,1]$ with varying regularity $\alpha$.}
  \label{fig1}
\end{figure}

\section{Framework and Assumptions} \label{sec3}
In this section, we introduce the conditional moment restriction (CMR) framework and state our main assumptions on the model.
\subsection{Model} \label{sec2model}
Suppose we observe a random sample $\mathcal{Z}_n = \{ (X_i,Y_i,W_i)  \}_{i=1}^n $, where $Y \in \R^{d_y}$ is a vector of observable variables,  $X \in \R^{d}$ is a vector of  regressors with support $\mathcal{X} \subset \R^d$ and $W \in \R^{d_w}$ is a vector of conditioning (or instrumental) variables with support $\mathcal{W} \subset \R^{d_w}$. We are interested in a structural function $h_0(X) : \mathcal{X} \rightarrow \R$ that is assumed to satisfy the conditional moment restriction:  \begin{equation}
\label{cmr}    \E \big[  \rho(Y,h_0(X)) \big| W      \big] = \mathbf{0} .
\end{equation}
Here,  $\rho(.) = [ \rho_{1}(.) , \dots , \rho_{d_{\rho}}(.)   ] $ is a vector of generalized residual functions with functional forms that are assumed known up to the structural function $h_0(.)$.

\begin{example} \label{ex1} [Nonparametric Instrumental Variables] The observed data consists of a real valued scalar $Y$, a vector $X$ of endogenous inputs and a vector $W$ of instrumental variables. The structural function of interest $h_0$ is identified by the conditional moment restriction \begin{align}
    \label{npiv} \E[Y - h_0(X)|W] = \mathbf{0}.
\end{align}
The generalized residual function is $\rho(Y,h(X)) = Y  - h(X)$. As a special case with $W=X$, the function of interest is the conditional mean $h_0(X) = \E[Y|X]$.
\end{example}

\begin{example}  \label{ex2} [Nonparametric Quantile IV] The observed data is as in Example \ref{ex1}. Given a quantile $\gamma \in (0,1)$, the structural function of interest $h_0$ is identified by the conditional moment restriction \begin{align}
    \label{npqiv} \E \big[ \mathbbm{1} \{ Y - h_0(X) \leq 0  \}    ] = \gamma.
\end{align}
The generalized residual function is $\rho(Y,h(X)) = \mathbbm{1} \{ Y - h(X) \leq 0 \} - \gamma$.
\end{example}
Many commonly used statistical and econometric models can be reformulated as a conditional moment restriction of the form in $(\ref{cmr})$. In Example \ref{ex1}, the generalized residual function is a linear function of $h_0$, while in Example \ref{ex2}, the residual function is  nonlinear and nonsmooth in $h_0$. Thus, Examples \ref{ex1} and \ref{ex2} may be seen as representative of two distinct classes of statistical models, characterized by regularity of the generalized residual function.

Given a function $h: \mathcal{X} \rightarrow \R$, we denote the conditional mean of the residual at $h$ by \begin{equation}
\label{condm} m(W,h) =   \E \big[  \rho(Y,h(X)) \big| W      \big].
\end{equation}
As the true function is assumed to satisfy $m(W,h_0) = \mathbf{0}$, the function can be identified as a minimizer of the objective function \begin{align} Q(h) =  \E[ m(W,h)' \Sigma(W) m(W,h) ] \:,            \label{cmr-true-obj} \end{align}
where $\Sigma(W) \in \R^{d_{\rho} \times d_{\rho}} $ is a suitable positive definite matrix. As the true data generating process is not assumed to be known, it is infeasible to work with $Q(h)$ directly. Denote by $\widehat{m}(W,h)$ any feasible estimator of $m(W,h)$ in  $ (\ref{condm}) $. The usual conditional moment restriction objective function is given by \begin{align}
    \label{cmr-obj} Q_n(h) =    \E_n \big[    \widehat{m}(W,h) ' \widehat{\Sigma}(W)  \widehat{m}(W,h)            \big] \; ,
\end{align}
where $\widehat{\Sigma}(W) = \widehat{\Sigma}_n(W)   $ is a (possibly data dependent) positive definite weighting matrix. In a quasi-Bayes framework, (\ref{cmr-obj}) is viewed as a psuedo-likelihood for the model. Given a prior probability measure $\mu$, the quasi-Bayes posterior induced from $\mu$ and (\ref{cmr-obj}) is denoted by \begin{equation}
\label{posterior-0}
\mu(.|\mathcal{Z}_n) =    \frac{\exp\big(    - \frac{n}{2}  \E_n \big[    \widehat{m}(W,.) ' \widehat{\Sigma}(W)  \widehat{m}(W,.)            \big]       \big) d \mu(.)  }{\int \exp\big(    - \frac{n}{2}  \E_n \big[    \widehat{m}(W,h) ' \widehat{\Sigma}(W)  \widehat{m}(W,h)            \big]       \big) d \mu(h) }         .
\end{equation}

\subsection{Assumptions} \label{assump}

In this section, we state our main assumptions on the model and data generating process.

\begin{assumption} \label{data}
$(i)$ $X$ has  support on a cube $\mathcal{X} \subset \R^d$ and the density of $X$ with respect to the Lebesgue measure is bounded away from $0$ and $\infty$ on $\mathcal{X}$. (ii) $W$ has support on a cube $\mathcal{W} \subset \R^{d_w} $  and the density of $W$  with respect to the Lebesgue measure is  bounded away from $0$ and $\infty$ on $\mathcal W$. 
\end{assumption}

Assumption \ref{data} is standard. We can always ensure $X,W$ are contained in cubes by transforming them through  suitable homeomorphisms $ \tilde{X} =  \Phi(X) $ and $ \tilde{W} = \Psi(W)$. The conditional moment  restriction in (\ref{intro-cmr}) then holds with  $\tilde{X} , \tilde{W}$ and $ \tilde{h}_0 = h_0 \circ \Phi^{-1}  $. The bounded density condition facilitates the analysis as it implies, among other things, that   $  \| f \|_{L^2(\mathbb{P})}^2$ and $f \rightarrow \int_{\mathcal{X}} \left| f(x) \right|^2 dx $ are equivalent metrics over $L^2(X)$.

\begin{assumption}\label{residuals}$(i)$  $  \| \rho(Y,h(X)) \|_{L^2(\mathbb{P})}  < \infty  $ for every $h \in L^2(X  )$.  $(ii)$ For some  $ \kappa  \in (0,1]$, $ t \geq d/ \kappa  $ and any $M < \infty$, there exists $C_1 = C_1(M) < \infty$ such that
\begin{align*}
 & \sup_{w \in \mathcal{W}} \E_{} \bigg(  \sup_{h \in  \mathbf{H}^t(M) :  \| h' - h \|_{ \infty } \leq \xi}  \|   \rho(Y, h(X) ) - \rho(Y,h'(X))            \|_{\ell^2}^2 \big|W=w   \bigg) \leq C_1^2  \xi^{2 \kappa}  , \\ & \sup_{h \in  \mathbf{H}^t(M) :  \| h' - h \|_{ L^2(\mathbb{P}) } \leq \xi} \;  \sup_{w \in \mathcal{W}} \E \bigg(   \|   \rho(Y, h(X) ) - \rho(Y,h'(X))            \|_{\ell^2}^2 \big|W=w    \bigg) \leq C_1^2 \xi^{2 \kappa}
\end{align*}
holds for all $h' \in \mathbf{H}^{t}(M)$ and $\xi > 0$ small enough.
\end{assumption}
Assumption \ref{residuals} is  similar to conditions frequently imposed in the literature (e.g. \citealp*{chen2003estimation}) to facilitate analysis involving non smooth objective functions. In particular, it allows for a pointwise discontinuous residual function $\rho(.)$. However, it requires that the residual function be  uniformly continuous in $L^2(\mathbb{P})$ expectation. The parameter $\kappa$  determines the modulus of continuity. It holds with  $\kappa = 1$ for the NPIV model (Example \ref{ex1}) and $\kappa=1/2$ for the NPQIV model (Example \ref{ex2}).

Before stating the remaining assumptions, we fix any sufficiently large  $t \geq d/\kappa$ that satisfies Assumption  \ref{residuals}. 

\begin{assumption}\label{residuals2} 

There exists $\epsilon, \delta > 0 $  such that for any $M > 0 $, there exists finite constants $C_2(M) , C_3(M), C_4(M) < \infty$ that satisfy

\begin{align*} &  (i) \; \; \; \;  \sup_{w \in \mathcal{W}} \E \bigg(  \sup_{h \in \mathbf{H}^{t}(M)}  \|  \rho(Y,h(X))  \|_{\ell^2}^2 \big| W=w        \bigg) \leq C_2^2 \;  , \\ & (ii) \; \; \; \; \E \bigg(  \sup_{h \in \mathbf{H}^{t}(M)}  \|  \rho(Y,h(X))  \|_{\ell^2}^{2+ \epsilon}       \bigg) \leq C_3^2 \; , \\ & (iii)  \; \; \; \; \mathbb{P} \bigg(  \sup_{h,h' \in \mathbf{H}^{t}(M) : \|h - h' \|_{L^2(\mathbb{P})} \leq \delta  }  \|  \rho(Y,h(X)) - \rho(Y,h'(X))  \|_{\ell^2} \leq C_4 \bigg) = 1.
\end{align*}
\end{assumption}
Assumption $\ref{residuals2}$ imposes weak moment bounds on the residual function. The assumption is trivially satisfied with bounded residual functions, such as in a NPQIV model. For more general cases, observe that if $t > d/2$, the Sobolev inequality \citep[5.6.3]{evans2022partial} implies that $\mathbf{H}^t$ embeds into a H\"{o}lder space. In particular, the functions in $\mathbf{H}^t(M)$ are bounded in $\| . \|_{\infty}$ norm. In most cases, this can be used to verify  Assumption $\ref{residuals2}$ directly.

\begin{assumption}\label{identif}
$(i)$ There exists a unique $h_0 \in \mathbf{H}^t$ that satisfies the conditional moment restriction $\E_{}[m^2(W,h_0)] = 0$. $(ii)$ For every $M > 0$, there exists a constant $C_5(M) < \infty$ such that $\| m(W,h) - m(W,h') \|_{L^2(\mathbb{P})} \leq C_5  \| h - h' \|_{L^2(\mathbb{P})}$ holds for every $h,h' \in \mathbf{H}^t(M)$.
\end{assumption}

Assumption \ref{identif}$(i)$  is a standard identification condition for the conditional moment restriction model. We consider a relaxation of this assumption in Section \ref{sect-cons}. Assumption \ref{identif}$(ii)$ imposes that the conditional mean function $h \rightarrow m(W,h)$ is Lipschitz over any fixed Sobolev ball $\mathbf{H}^t(M)$ of radius $M$. This is made for convenience and can be relaxed further.

\begin{assumption} \label{svd}
(i) The conditional mean function  is Fr\'echet differentiable as a map $m(W, \, . ) : (L^2(X ), \| . \|_{L^2(\mathbb{P})} ) \rightarrow(L^2(W,\R^{d_{\rho}} ), \| . \|_{L^2(\mathbb{P})} )    $ at $h_0$.  (ii) The Fr\'echet derivative at $h_0$ is a compact injective operator $D_{h_0} : (L^2(X ), \| . \|_{L^2(\mathbb{P})} ) \rightarrow(L^2(W,\R^{d_{\rho}} ), \| . \|_{L^2(\mathbb{P})} ) $. 
\end{assumption}
Assumption \ref{svd} is a mild differentiability restriction. As we illustrate in Section \ref{sec4}, this allows us to study the behavior of the map $h \rightarrow m(W,h)$ in a local neighborhood around $h_0$ through properties of its  linearization $h \rightarrow D_{h_0}[h]$.

\section{Main Results} \label{sec4}

\subsection{Setup}
We denote by $b^K(W) =  ( b_1(W), \dots , b_K(W)   )'$ a vector of first stage basis functions. Here, $K \in \mathbb{N}$ denotes the dimension of the basis. Denote by $\mathcal{V}_K$, the linear subspace of $L^2(W)$ spanned by the basis. Let $\Pi_K : L^2(W) \rightarrow  \mathcal{V}_K  $ denote the $L^2(\mathbb{P})$ orthogonal projection onto $\mathcal{V}_K$. Given a function $h: \mathcal{X} \rightarrow \R$, we estimate the conditional mean function $m(W,h) = \E[\rho(Y,h(X))|W]$ using the empirical analog of $\Pi_K$. Given the observed  data $\mathcal{Z}_n = \{ (X_i,Y_i,W_i)  \}_{i=1}^n $, we estimate the conditional expectation using a least squares projection  onto $\mathcal{V}_K$. That is, \begin{align} \label{cmean}  &  \widehat{m}(w,h)  =      \E_n \big[   \rho(Y,h_{}(X)) \big(b^K(W) \big)'    \big]               [ \widehat{G}_{b,K} ]^{-1}   b^K(w)     \; , \\ &  \text{where} \; \; \;    \widehat{G}_{b,K} = \E_n \big[
 \big(b^K(W) \big) \big( b^K(W) \big)'    \big]      .        \nonumber
\end{align}
This choice is not crucial towards obtaining contraction rates but, as a closed form expression, greatly facilitates the analysis in obtaining inferential results.

As a first step towards defining our prior probability measure $\mu$, we consider a family of Gaussian process priors $G_{\alpha} = \{ G_{\alpha}(x) : x \in \mathcal{X}  \}$ that are indexed by a regularity hyperparameter $\alpha \in \mathcal{L} \subset \R_+$. As in Section \ref{review}, we assume that the regularity hyperparameter $\alpha$ influences the family through the coefficients on a series expansion with respect to particular basis or through the exponent of an unbounded self-adjoint operator (e.g. the Laplacian) on $\mathcal{X}$. In either case, without loss of generality\footnote{If the mapping $\alpha \rightarrow \lambda_{i,\alpha} $ influences the exponent in a different way, the results can be stated in terms of the induced exponent $ s(\alpha) $, i.e $\lambda_{i,\alpha} \asymp i^{-s(\alpha)}$.}, we can express $G_{\alpha}$ as
 \begin{align}
  \label{g-series-4}  G_{\alpha} = \sum_{i=1}^{\infty}  \sqrt{\lambda_{i,\alpha}}  Z_i e_i 
\end{align}
where  $\lambda_{i,\alpha} \asymp i^{-(1+ 2 \alpha/d)} $,  $    Z_i \stackrel{i.i.d}{\sim} N(0,1)$ and $(e_i)_{i=1}^{\infty}$ is an orthonormal basis of $L^2(\mathcal{X}).$

While we do not impose any restrictions on the eigenbasis $(e_i)_{i=1}^{\infty}$ directly, we will require that the sample paths of the Gaussian process $G_{\alpha}$ (for $\alpha \in \mathcal{L}$) belong to a separable linear subspace of the Sobolev space $\mathbf{H}^t$ for some $t > d/2$.\footnote{By the Sobolev inequality \citep[5.6.3]{evans2022partial}, $t > d/2$ implies that same paths of $G$ belong, almost surely, to a H\"{o}lder space. In particular, this ensures that the sample paths are continuous. As the  H\"{o}lder space is separable with respect to the $\| . \|_{\infty}$ norm, $G$ can be viewed as a Gaussian random element on $(C(\mathcal{X}), \|. \|_{\infty})$.} In fact, due to the possible non smoothness of the residual function $\rho$, we will generally require more stringent conditions (see Assumption \ref{residuals}) on the minimal such $t$. In most cases, this can be viewed as a restriction on the hyperparameter set by considering $ \mathcal{L}  \subseteq [\underline{\alpha}, \infty)$ for some minimum regularity $\underline{\alpha} > 0$. Given a Gaussian process $ G_{\alpha}$ of regularity $\alpha$ and first stage sieve dimension $K$, we consider the prior distribution \begin{align} \label{prior1} \mu(.|\alpha,K) \sim \frac{G_{\alpha}}{  \sqrt{\log n}  \sqrt{  K}} . \end{align}
We scale the Gaussian process by the first stage sieve dimension (up to a log term) to provide additional regularization. Let $\widehat{\Sigma}(W) $ denote  a (possibly data dependent) positive semidefinite weighting matrix. The quasi-Bayes posterior induced from $(\mu, \widehat{\Sigma} )$ and the conditional moment restriction in (\ref{cmr}) is given by
 \begin{equation}
\label{posterior}
\mu(.|\alpha,K,\mathcal{Z}_n) =    \frac{\exp\big(    - \frac{n}{2}  \E_n \big[    \widehat{m}(W,.) ' \widehat{\Sigma}(W)  \widehat{m}(W)            \big]       \big) d \mu(.|\alpha,K)  }{\int_{} \exp\big(    - \frac{n}{2}  \E_n \big[    \widehat{m}(W,h) ' \widehat{\Sigma}(W)  \widehat{m}(W,h)            \big]       \big) d \mu(h|\alpha,K) }         .
\end{equation}

In the analysis that follows, we will frequently measure regularity with respect to
the orthonormal basis  $(e_i)_{i=1}^{\infty}$ in (\ref{g-series-4}). To that end, we define the p-Sobolev space and p-Sobolev ball relative to the orthonormal basis $(e_i)_{i=1}^{\infty}$ by 
\begin{align} \label{sob} &  \mathcal{H}^p = \bigg\{  h \in L^2(\mathcal{X}) : h  = \sum_{i=1}^{\infty}  c_i   e_i   \; , \;   \| h \|_{\mathcal{H}^p}^2 =    \sum_{i=1}^{\infty} i^{2p/d}  c_i^2  < \infty     \bigg \} \\ &  \mathcal{H}^p(M) = \{ h \in \mathcal{H}^p : \| h \|_{\mathcal{H}^p} \leq M   \} .  \nonumber \end{align}

\subsection{Consistency} \label{sect-cons}
In this section, we establish consistency of the quasi-Bayes posterior in (\ref{posterior}). From the prior representation in (\ref{prior1}), it is expected that posterior limit theory depends on some interplay between sample path realiziations of the Gaussian process $G_{\alpha}$ and the quasi-Bayes objective function. Intuitively, given a function class $\mathcal{D}$, if the conditional moment restriction in (\ref{cmr})   arises from a nonlinear and/or nonsmooth residual function $\rho(.)$, further (smoothness) restrictions on $\mathcal{D} $ are necessary to quantify the uniform sampling uncertainty of the map $ \mathcal{D} \ni h \rightarrow \E_n\big[ \widehat{m}(W,h)'  \widehat{\Sigma}(W) \widehat{m}(W,h) \big]$. In our setting, it suffices for the Gaussian process to be contained (with high probability) in a sufficiently regular function class $\mathcal{D}$. This is formalized in the following condition.

\begin{condition2} \label{gp-structure}
The Gaussian process $\{G_{\alpha}(x) : x \in \mathcal{X}  \}$ in (\ref{g-series-4})  is a Gaussian random element (in the sense of Definition \ref{gre}) on a separable subspace of the Sobolev space $\mathbf{H}^t$, where $t$ is as in Assumption \ref{residuals}.

Beyond sampling uncertainty, consistency also depends on the weighting matrix $\widehat{\Sigma}(.)$ and the  first stage basis functions $b^K(W) = (b_1(W),\dots,b_K(W))'$ used to construct the  estimator $\widehat{m}$. At a minimum, we will impose the following structure.
\end{condition2}

\begin{condition2} \label{fsbasis} $(i)$ The matrix  $G_{b,K} = \E \big( [b^K(W)] [ b^K(W) ]'    \big)$ is positive definite for every $K$ and  $\zeta_{b,K} =  \sup_{w \in \mathcal{W} }  \|  G_{b,K}^{-1/2} b^K(w) \|_{\ell^2}  \lessapprox \sqrt{K}$. $(ii)$ The eigenvalues of $\widehat{\Sigma}(W)$ are asymptotically bounded away from $0$ and $\infty$: $\mathbb{P}\big(  c \leq   \lambda_{\min}(  \widehat{\Sigma}(W) ) \leq    \lambda_{\max}(  \widehat{\Sigma}(W) )   \leq C   \big)  \rightarrow 1$ for some $0 < c \leq C < \infty$. $(iii)$ For any fixed $ M > 0 $, we have  $ \sup_{h \in \mathbf{H}^t(M)  } \|(\Pi_K - I) m(W,h)   \|_{L^2(\mathbb{P})} \rightarrow 0 $ as  $K \rightarrow \infty$.  
\end{condition2}
If Assumption \ref{data}$(i)$ holds, Condition \ref{fsbasis}$(i)$ is satisfied by splines, Cohen–Daubechies–Vial (CDV) wavelets and Fourier series (see e.g. \citealp{chen2015optimal}; \citealp{belloni2015some}).

\begin{theorem}[Consistency] \label{t1}  Suppose Assumptions \ref{data}-\ref{identif} and Conditions \ref{gp-structure}, \ref{fsbasis} hold. Let $(K_n)_{n=1}^{\infty}$ denote any sequence that satisfies $ n^{d/2(\alpha + d)}  \lessapprox K_n  $ and $ \log(n) K_n = o(n)$. If $h_0 \in \mathcal{H}^p$ for some $p \geq \alpha + d/2$, the quasi-Bayes posterior is consistent at $h_0$. That is, \begin{align}
    \label{const-eq} \mu(h : \| h - h_0  \|_{L^2(\mathbb{P})} > \epsilon \:\big| \: \alpha,K_n,\mathcal{Z}_n) = o_{\mathbb{P}}(1) \; \; \; \; \forall \: \epsilon > 0.
\end{align} 
\end{theorem}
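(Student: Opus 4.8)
The plan is to follow the standard quasi-Bayes testing/posterior-consistency template (adapted to this setting): exhibit a prior mass condition around $h_0$, construct exponentially powerful sieve-based tests against the complement of an $L^2(\mathbb{P})$-ball, and control the denominator of (\ref{posterior}) from below on a high-probability event. Because the quasi-Bayes objective $Q_n(h) = \E_n[\widehat m(W,h)'\widehat\Sigma(W)\widehat m(W,h)]$ is not a genuine log-likelihood, the standard tools have to be reorganized around the direct-problem pseudometric $d_w^2(h,h_0) \approx \|\Pi_K[m(W,h)-m(W,h_0)]\|_{L^2(\mathbb{P})}^2$, and then transferred to $\|h-h_0\|_{L^2(\mathbb{P})}$ using Assumption \ref{svd} (injectivity of the Fréchet derivative $D_{h_0}$ plus Condition \ref{fsbasis}$(iii)$, which gives that $\Pi_K$ does not annihilate the relevant directions). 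Conditions \ref{gp-structure} and \ref{fsbasis}, together with the scaling $\mu(\cdot|\alpha,K)\sim G_\alpha/(\sqrt{\log n}\sqrt K)$, are what confine the posterior to a Sobolev ball $\mathbf H^t(M)$ with probability tending to one, so that Assumptions \ref{residuals}--\ref{identif} can be invoked uniformly.

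The key steps, in order: \textbf{(1) Restriction to a Sobolev ball.} Show that for a sufficiently large $M$, $\mu(h\notin \mathbf H^t(M)\mid \alpha,K_n,\mathcal Z_n)=o_{\mathbb P}(1)$. The rescaling by $\sqrt{\log n}\sqrt{K_n}$ shrinks $\E\|G_\alpha/(\sqrt{\log n}\sqrt{K_n})\|_{\mathbf H^t}^2$ to zero (using that $G_\alpha$ is $\beta$-regular for all $\beta<\alpha$ and $t$ is fixed), so Borell--TIS / Markov give prior mass $1-o(1)$ on $\mathbf H^t(M)$; combined with the denominator lower bound below, this transfers to the posterior. \textbf{(2) Denominator lower bound.} On a high-probability event where $\widehat\Sigma$ has bounded eigenvalues (Condition \ref{fsbasis}$(ii)$) and $\widehat m$ is uniformly close to $\Pi_K m$ over $\mathbf H^t(M)$, show $Q_n(h)\lesssim d_w^2(h,h_0)+o(1)$ near $h_0$, so that $\exp(-\tfrac n2 Q_n(h))$ is not too small on a prior-positive neighborhood of $h_0$; the prior mass of an $L^2$-shrinking ball around $h_0$ is bounded below using $h_0\in\mathcal H^p$ with $p\geq\alpha+d/2$ (so $h_0$ lies in the RKHS up to the required margin) via the usual small-ball estimate for $G_\alpha$, picking up only a polynomial-in-$n$ penalty that is killed by the $n K_n=o(n^2)$-type budget implicit in $\log(n)K_n=o(n)$. \textbf{(3) Tests.} On $\{h\in\mathbf H^t(M): \|h-h_0\|_{L^2(\mathbb P)}>\epsilon\}$, use Assumptions \ref{identif}$(i)$, \ref{svd} and Condition \ref{fsbasis}$(iii)$ to get $d_w^2(h,h_0)\gtrsim \epsilon'>0$ for $K_n$ large; a metric-entropy bound for $\mathbf H^t(M)$ under $\|\cdot\|_\infty$ (finite since $t>d/2$), together with the continuity modulus of $\rho$ from Assumptions \ref{residuals} and \ref{residuals2}, yields uniform control of $|Q_n(h)-\E[\,\cdot\,]|$ and hence tests $\phi_n$ with $\E\phi_n\to0$ and $\sup \E_h(1-\phi_n)\leq e^{-cn}$. \textbf{(4) Assemble.} Combine (1)--(3) in the standard way: on the intersection of the good events, $\mu(\|h-h_0\|_{L^2(\mathbb P)}>\epsilon\mid\cdots)\leq \phi_n + (\text{numerator bound})/(\text{denominator bound}) \to 0$ in probability.

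The main obstacle I expect is \textbf{Step 2 together with the transfer from $d_w$ to $\|\cdot\|_{L^2(\mathbb{P})}$}: because the operator $h\mapsto m(W,h)$ is unknown, nonlinear, and must be estimated by $\widehat m$, the denominator lower bound requires simultaneously (a) a uniform-over-$\mathbf H^t(M)$ bound $\sup_h\|\widehat m(W,h)-\Pi_K m(W,h)\|_{L^2(\mathbb P_n)}=o_{\mathbb P}(1)$ with the right rate — this is where the condition $n^{d/2(\alpha+d)}\lesssim K_n$ and $\log(n)K_n=o(n)$ enter, balancing first-stage bias (Condition \ref{fsbasis}$(iii)$) against the variance $\zeta_{b,K}^2 K/n \lesssim K^2/n$ and the nonsmoothness exponent $\kappa$ from Assumption \ref{residuals} — and (b) a quantitative lower bound of the form $\|h-h_0\|_{L^2(\mathbb P)}\leq \eta$ whenever $d_w(h,h_0)$ is small and $h\in\mathbf H^t(M)$, which is delicate because $D_{h_0}$ is merely compact and injective (ill-posed), so this bound is not uniform in $M$ and genuinely uses the boundedness of the Sobolev ball together with a local linearization argument around $h_0$ justified by Assumption \ref{svd}$(i)$.
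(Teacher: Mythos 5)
Your high-level plan collects the right ingredients — prior small-ball mass at $h_0$ via the RKHS and the hypothesis $p\geq\alpha+d/2$, confinement of the posterior to a Sobolev ball via the $\sqrt{\log n}\sqrt{K_n}$ rescaling and Gaussian concentration, and a transfer from the smoothed metric to $\|\cdot\|_{L^2(\mathbb P)}$ on that ball — but it departs from the paper at two structural points, and the departure at Step (3) is not just stylistic.

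The paper does \emph{not} construct tests, and the classical device you cite ($\sup_h\E_h(1-\phi_n)\leq e^{-cn}$) does not translate cleanly to quasi-Bayes: there is no family of data distributions $\{\mathbb P_h\}$ indexed by the parameter, so $\E_h(1-\phi_n)$ has no meaning. If you try to salvage it by defining $\phi_n$ as the indicator of a bad empirical-process event, you end up reinventing what the paper does more directly in its Step~(ii): on the event where $\widehat G^o_{b,K}\approx I_K$ and the eigenvalues of $\widehat\Sigma$ are bounded, the quasi-likelihood factor satisfies $\exp\bigl(-\tfrac n2\E_n[\widehat m'\widehat\Sigma\widehat m]\bigr)\leq\exp\bigl(-c\,n\,\|\widehat\Pi_K m(W,h)\|^2_{L^2(\mathbb P_n)}\bigr)$, so the integrand is \emph{already} exponentially small whenever $\|\widehat\Pi_K m(W,h)\|_{L^2(\mathbb P_n)}>D\sqrt{\log n}\,\epsilon_n$ with $\epsilon_n=\sqrt{K_n/n}$. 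Since the denominator is bounded below by $\exp(-C n\log(n)\epsilon_n^2)$ and $D$ can be taken large (by taking $M$ large in the Sobolev-ball truncation), the posterior mass on $\{\|\widehat\Pi_K m(W,h)\|_{L^2(\mathbb P_n)}>D\sqrt{\log n}\,\epsilon_n\}$ vanishes without any test construction. Your Step~(3) entropy bound and continuity modulus for $\rho$ do enter the paper's argument, but through Lemma~\ref{emp-c} (uniform $O_{\mathbb P}(\sqrt{K_n/n})$ control of $\|\widehat\Pi_K m-\Pi_K m\|$ over $\mathbf H^t(M)$), not through a test.

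Second, the transfer lemma (Lemma~\ref{aux4}) is purely topological — $\mathbf H^t(M)$ is $\|\cdot\|_{L^2(\mathbb P)}$-compact, $h\mapsto\|m(W,h)\|_{L^2(\mathbb P)}$ is continuous on it with unique zero at $h_0$, so its infimum on $\{\|h-h_0\|\geq\epsilon\}\cap\mathbf H^t(M)$ is strictly positive. No local linearization around $h_0$ is invoked at this stage, and Assumption~\ref{svd} is actually not used in the consistency proof at all (it becomes essential only for the contraction rate and BvM theorems). Your instinct that the bound fails to be uniform in $M$ is exactly right and is why the Sobolev truncation is indispensable, but the mechanism is compactness plus identification, not a perturbation argument. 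The chain the paper actually runs is: Lemma~\ref{emp-c} to swap $\widehat\Pi_K$ for $\Pi_K$, Condition~\ref{fsbasis}(iii) to remove the projection bias uniformly over $\mathbf H^t(M)$, and then Lemma~\ref{aux4} to land on $\|h-h_0\|_{L^2(\mathbb P)}\leq\epsilon$.
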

Theorem \ref{t1} shows that the quasi-Bayes posterior is consistent provided that the regularity of the true function exceeds that of the Gaussian process by a factor of $d/2$. The upper bound constraint on $K_n$ is very weak, it ensures that the least squares estimator $\widehat{m}(w,h)$ is well defined and that it uniformly approximates its population analog $\Pi_K m(w,h)$. On the other end, the theorem does enforce a strict lower bound on how slowly the first stage basis can grow.\footnote{The restriction on $(K_n)_{n=1}^{\infty}$ can be weakened even further in settings where the conditional mean function $m(W,h) = \E[\rho(Y,h(X))|W]$ is known to smooth out features of $h$ in a neighborhood of $h_0$. In particular, Theorem \ref{t1} requires no conditions (except the weak contraction requirement of Assumption \ref{identif}$(ii)$) on the smoothing properties (or ill-posedness) of the mapping $h \rightarrow m(W,h)$ in a local neighborhood around $h_0$.} Intuitively, large values of $K_n$ induce greater sampling uncertainty but also act as a form of regularization by shrinking the Gaussian process in (\ref{prior1}). This regularization is crucial to control the ill-posedness in the model.

Theorem \ref{t1} can be extended in several ways. One possibility is to consider a continuously updated version of the quasi-Bayes objective function. In this case, the data dependent weighting matrix $\widehat{\Sigma}$ may depend pointwise  on both $W$ and the prior realization $h$, i.e $ \widehat{\Sigma} = \widehat{\Sigma}(W,h)$.\footnote{The usual continuously updated objective function takes $\widehat{\Sigma}(W,h)$ to be a suitable estimator of $$ \Sigma(W,h) = \{ \E[ \rho(Y,h(X)) \rho(Y,h(X))'|W  ] \}^{-1} .$$} In this case, the associated quasi-Bayes posterior is \begin{align}
    \label{posteriorcu}  \mu^{CU}(.|\alpha,K,\mathcal{Z}_n) =    \frac{\exp\big(    - \frac{n}{2}  \E_n \big[    \widehat{m}(W,.) ' \widehat{\Sigma}(W,.)  \widehat{m}(W,.)            \big]       \big) d \mu(.|\alpha,K)  }{\int_{} \exp\big(    - \frac{n}{2}  \E_n \big[    \widehat{m}(W,h) ' \widehat{\Sigma}(W,h)  \widehat{m}(W,h)            \big]       \big) d \mu(h|\alpha,K) }  .
\end{align} Another possible avenue, as in \citet{liao2011posterior}, is  to generalize the contraction in Theorem \ref{t1} to settings where the unknown function $h_0$ is not uniquely identified from the data. In this setting, the identified set is given by $\Theta_0 = \{ h    : \| m(W,h) \|_{L^2(\mathbb{P})} = 0\}$. Intuitively, regardless of point identification, draws from the quasi-Bayes posterior should concentrate in areas where the quasi-Bayes objective function is minimized, i.e around the identified set $\Theta_0$. Below, we state a version of Theorem \ref{t1} that accommodates both of the extensions discussed above. To that end, we impose the following analog of Condition \ref{fsbasis}.

\begin{conditionp}{\ref*{fsbasis}$^*$} \label{fsbasis2}
$(i)$ The matrix  $G_{b,K} = \E \big( [b^K(W)] [ b^K(W) ]'    \big)$ is positive definite for every $K$ and  $\zeta_{b,K} =  \sup_{w \in \mathcal{W} }  \|  G_{b,K}^{-1/2} b^K(w) \|_{\ell^2}  \lessapprox \sqrt{K}$. $(ii)$ Over any Sobolev ball, the eigenvalues of $  \widehat{\Sigma}(W,h) $ are asymptotically bounded away from $0$ and $\infty$:  For every $ M > 0$, there exists constants $c,C > 0$ such that $ \mathbb{P} \big(  c \leq \inf_{h \in \mathbf{H}^t(M)} \lambda_{\min}( \widehat{\Sigma}(W,h)) \leq \sup_{h \in \mathbf{H}^t(M)} \lambda_{\max}(\widehat{\Sigma}(W,h)) \leq C    \big) \rightarrow 1$. $(iii)$ For any fixed $ M > 0 $, we have  $ \sup_{h \in \mathbf{H}^t(M)  } \|(\Pi_K - I) m(W,h)   \|_{L^2(\mathbb{P})} \rightarrow 0 $ as  $K \rightarrow \infty$.  
\end{conditionp}

\begin{theorem}[Identified Set Consistency]  \label{t1-2} Let $ \Theta_0 = \{ h \in L^2(\mathcal{X})  : \| m(W,h) \|_{L^2(\mathbb{P})} = 0  \} $ denote the identified set. Suppose Assumptions \ref{data}-\ref{residuals2} and Condition \ref{gp-structure},  \ref{fsbasis2} holds. Let $(K_n)_{n=1}^{\infty}$ denote any sequence that satisfies $ n^{d/2(\alpha + d)}  \lessapprox K_n  $ and $ \log(n) K_n = o(n)$. If there exists some $h_0 \in \Theta_0 \cap \mathcal{H}^{p}$ for $p \geq \alpha+d/2$ that satisfies Assumption \ref{identif}$(ii)$, then the  continuously updated quasi-Bayes posterior $\mu^{CU}(.)$ in (\ref{posteriorcu}) is consistent for the identified set. That is, \begin{align}
    \label{const-eq-2} \mu^{CU}(h :  d(h,\Theta_0)  > \epsilon \:\big| \: \alpha,K_n,\mathcal{Z}_n) = o_{\mathbb{P}}(1) \; \; \; \; \forall \: \epsilon > 0
\end{align} 
where $d(h,\Theta_0) = \inf_{h' \in \Theta_0} \| h-h' \|_{L^2(\mathbb{P})} $.
\end{theorem}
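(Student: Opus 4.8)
The plan is to run the usual quasi-Bayes contraction argument: bound the denominator of (\ref{posteriorcu}) from below, bound the contribution of $\{h : d(h,\Theta_0) > \epsilon\}$ to the numerator from above, and take the ratio. The skeleton is the same as for Theorem \ref{t1}; the two genuinely new points are (a) the weighting matrix $\widehat{\Sigma}(W,h)$ now varies with the prior draw $h$, and (b) point identification of $h_0$ is dropped, so the argument must be recast around the zero set $\Theta_0$ of $h \mapsto \|m(W,h)\|_{L^2(\mathbb{P})}$ rather than around a single $h_0$. Throughout, write $Q_n^{CU}(h) = \E_n[\widehat{m}(W,h)'\widehat{\Sigma}(W,h)\widehat{m}(W,h)]$, fix a (large) $M > 0$ to be chosen at the end, and work on an event $\mathcal{A}_n$ with $\mathbb{P}(\mathcal{A}_n) \to 1$ on which: the uniform eigenvalue bounds of Condition \ref{fsbasis2}$(ii)$ hold over $\mathbf{H}^t(M)$, so that $c\,\E_n[\|\widehat{m}(W,h)\|_{\ell^2}^2] \leq Q_n^{CU}(h) \leq C\,\E_n[\|\widehat{m}(W,h)\|_{\ell^2}^2]$ uniformly on $\mathbf{H}^t(M)$; the first-stage estimator in (\ref{cmean}) is uniformly consistent, $\sup_{h \in \mathbf{H}^t(M)}\|\widehat{m}(W,h) - \Pi_K m(W,h)\|_{L^2(\mathbb{P})} = o_{\mathbb{P}}(1)$; and empirical and population $L^2$ norms of the relevant functions are comparable. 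That $\mathbb{P}(\mathcal{A}_n) \to 1$ uses Conditions \ref{gp-structure} and \ref{fsbasis2} together with Assumptions \ref{data}--\ref{residuals2}, and the uniform eigenvalue control is precisely what absorbs the $h$-dependence in (a): on $\mathcal{A}_n$ every subsequent estimate reduces to one for $\E_n[\|\widehat{m}(W,h)\|_{\ell^2}^2]$.

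For the denominator, the key observation is that $h_0 \in \Theta_0$ gives $m(W,h_0) = \mathbf{0}$, and that the RKHS of $G_\alpha$ is the space of exactly $\alpha$-regular functions, which by (\ref{RKHS-0}) and (\ref{sob}) is $\mathcal{H}^{\alpha + d/2}$; hence the hypothesis $h_0 \in \mathcal{H}^p$ with $p \geq \alpha + d/2$ places $h_0$ in the RKHS. Proceeding exactly as in Theorem \ref{t1}: the rescaling in (\ref{prior1}) multiplies the RKHS norm by $\sqrt{\log n}\sqrt{K_n}$ while (since $\E\|G_\alpha\|_{L^2(\mathbb{P})}^2 < \infty$ and $K_n \to \infty$) leaving the small-ball term negligible, so the concentration function of the rescaled prior at $h_0$ is $\lesssim \log n\, K_n$; combining the resulting prior-mass lower bound $e^{-C\log n\, K_n}$ on a shrinking $L^2(\mathbb{P})$-ball around $h_0$ (intersected with a fixed Sobolev ball, whose complement carries negligible prior mass) with the bound $Q_n^{CU}(h) \lesssim \|h - h_0\|_{L^2(\mathbb{P})}^2 + o_{\mathbb{P}}(1)$ on that ball — which follows from Assumption \ref{identif}$(ii)$ with $m(W,h_0) = \mathbf{0}$, Condition \ref{fsbasis2}$(iii)$ and the first-stage bound on $\mathcal{A}_n$ — yields a denominator that is $\geq e^{-C'\log n\, K_n}$ with probability tending to one; the lower-bound restriction $K_n \gtrsim n^{d/2(\alpha + d)}$ is what keeps the small-ball and RKHS-bias contributions of the same order.

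For the numerator, fix $\epsilon > 0$ and split $\{d(h,\Theta_0) > \epsilon\}$ into its intersection with $\mathbf{H}^t(M)$ and the remainder. The ingredient that replaces point identification is a compactness argument: since $t > d/2$, the Rellich--Kondrachov embedding together with Assumption \ref{data} makes $\mathbf{H}^t(M)$ compact in $L^2(\mathbb{P})$; the set $\{h \in \mathbf{H}^t(M) : d(h,\Theta_0) \geq \epsilon\}$ is a closed, hence compact, subset of it; and $h \mapsto \|m(W,h)\|_{L^2(\mathbb{P})}$ is Lipschitz — hence continuous — on $\mathbf{H}^t(M)$ by Assumption \ref{identif}$(ii)$. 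As this map vanishes exactly on $\Theta_0$, it attains a strictly positive minimum $\delta = \delta(\epsilon,M) > 0$ on that compact set. On $\mathcal{A}_n$, for $h$ in the set, $Q_n^{CU}(h) \gtrsim \E_n[\|\widehat{m}(W,h)\|_{\ell^2}^2] \gtrsim \|\Pi_K m(W,h)\|_{L^2(\mathbb{P})}^2 - o(1) \gtrsim \big(\delta - \sup_{h \in \mathbf{H}^t(M)}\|(\Pi_K - I)m(W,h)\|_{L^2(\mathbb{P})}\big)^2 - o(1) \geq \delta^2/2$ eventually, the last steps using norm comparability on $\mathcal{A}_n$ and Condition \ref{fsbasis2}$(iii)$; so this part of the numerator is $\leq e^{-n\delta^2/4}$. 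For the remainder, Borell's inequality applied to the $\mathbf{H}^t$-norm (a.s.\ finite by Condition \ref{gp-structure}) gives $\mu(\mathbf{H}^t(M)^c \mid \alpha, K_n) = \mathbb{P}(\|G_\alpha\|_{\mathbf{H}^t} > M\sqrt{\log n\, K_n}) \leq e^{-cM^2\log n\, K_n}$.

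Combining the three estimates, on $\mathcal{A}_n$ one obtains
\[
\mu^{CU}\big(d(h,\Theta_0) > \epsilon \mid \alpha, K_n, \mathcal{Z}_n\big) \;\lesssim\; e^{-n\delta^2/4 + C'\log n\, K_n} \;+\; e^{-(cM^2 - C')\log n\, K_n},
\]
where the first term vanishes because $\log n\, K_n = o(n)$ and $\delta > 0$ is fixed, and the second vanishes once $M$ is chosen large enough that $cM^2 > C'$; since $\mathbb{P}(\mathcal{A}_n) \to 1$, this gives (\ref{const-eq-2}). The heaviest technical input sits not in this skeleton but in the uniform-in-$h$ first-stage estimate underlying $\mathcal{A}_n$: because $\rho$ may be pointwise non-smooth, $\sup_{h \in \mathbf{H}^t(M)}\|\widehat{m}(W,h) - \Pi_K m(W,h)\|_{L^2(\mathbb{P})} = o_{\mathbb{P}}(1)$ is a genuine empirical-process bound over the class $\{(y,x,w) \mapsto \rho(y,h(x))\,b^K(w) : h \in \mathbf{H}^t(M)\}$, controlled through the H\"{o}lder-modulus condition of Assumption \ref{residuals} and the moment bounds of Assumption \ref{residuals2}, and is inherited from the proof of Theorem \ref{t1}. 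The work specific to this theorem is lighter: verifying that Condition \ref{fsbasis2}$(ii)$ indeed absorbs the $h$-dependence of $\widehat{\Sigma}(W,h)$, and that the separation constant $\delta$ is positive via the compactness argument above.
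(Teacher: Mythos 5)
Your proposal is correct and follows essentially the same route as the paper: the denominator lower bound via the small-ball/RKHS argument with $h_0 \in \mathcal{H}^p \subseteq \mathbb{H}$, the Borell-type tail bound for $\mu(\mathbf{H}^t(M)^c \mid \alpha, K_n)$, the uniform empirical-process control of $\widehat m$ over $\mathbf{H}^t(M)$ (Lemma \ref{emp-c}), and the compactness argument (Lemma \ref{aux4-e}) that converts separation from $\Theta_0$ into a lower bound on $\|m(W,h)\|_{L^2(\mathbb{P})}$. The only cosmetic difference is that you apply the compactness lemma in contrapositive form to lower-bound $Q_n^{CU}$ directly by a fixed $\delta^2/2$ on the bad set (yielding a numerator bound of order $e^{-n\delta^2/4}$), whereas the paper first shows posterior concentration on $\{\|m(W,h)\|_{L^2(\mathbb{P})} \lesssim \gamma_n\}\cap\mathbf{H}^t(M)$ and then argues containment in $\{d(h,\Theta_0) < \epsilon\}$; both are valid and rest on the same ingredients.
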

Theorem \ref{t1-2} shows that a continuously updated quasi-Bayes posterior is consistent, provided that at least one element of the identified set has sufficient regularity relative to the sample paths of the Gaussian process.

\begin{remark}
Towards verifying Condition \ref{fsbasis2}$(ii)$, consider the usual case where $\widehat{\Sigma}(w,h)$ is  uniformly (over $\mathbf{H}^t(M)$ and $w \in \mathcal{W}$) consistent  for $ \Sigma(w,h) = \{ \E[ \rho(Y,h(X)) \rho(Y,h(X))'|W =w ] \}^{-1}$. In Example \ref{ex1} (NPIV), we have $\Sigma^{-1}(W,h) = \E[u^2|W] +  \E[ \big( h(X)- h_0(X) \big)^2|W  ] $. As functions in $\mathbf{H}^t(M)$ are bounded in $\| . \|_{\infty}$ norm (for $t > d/2$), Condition \ref{fsbasis2}$(ii)$ holds if the conditional variance $ \sigma^2(w) =  \E[u^2|W=w]$ is bounded above and below. In Example \ref{ex2} (NPQIV), we have $\Sigma^{-1}(W) = \mathbb{P} (u \leq h_0(X) - h(X) |W   ) $. In this case, Condition \ref{fsbasis2}$(ii)$ holds if the conditional distribution $u|W$ has full support on $\R$.
\end{remark}
For the remainder of Section \ref{sec4}, we focus on the setting where the true structural function $h_0$ is uniquely identified. Furthermore, unless otherwise specified, all further limit theory is developed with the quasi-Bayes posterior in (\ref{posterior}).\footnote{Extensions to the continuously updated version in (\ref{posteriorcu}) can be handled in a similar manner to Theorem \ref{t1-2}. }

\subsection{Contraction Rates} \label{rates}

In this section, we further develop the quasi-Bayes limit theory. From the concluding statements in the proof of Theorem \ref{t1}, we can deduce that the quasi-Bayes posterior concentrates on bounded sample paths that lie in shrinking local neighborhoods around $h_0$. To be specific, there exists a sequence $\delta_n \rightarrow 0$ and $M > 0$ sufficiently large such that
\begin{align}
   &     \mu( h \notin \Omega_n \: | \alpha,K_n,\mathcal{Z}_n) = o_{\mathbb{P}}(1) \; , \; \;  \Omega_n(M)  = \{  h \in \mathbf{H}^t(M)  : \| h - h_0 \|_{L^2(\mathbb{P}))} \leq \delta_n  \}. \label{nhd}
\end{align}
As the focus in the preceding section was on general consistency, the analysis did not lead to an explicit form for $\delta_n$. In this section, we improve the preceding consistency results by explicitly quantifying the posterior contraction rate $\delta_n$. As we illustrate below, the posterior contraction rate depends on the interplay between $(i)$ Sample path realizations of the Gaussian process prior, $(ii)$ The local curvature of the pseudo-likelihood that defines the quasi-Bayes posterior, $(iii)$ the precise smoothing properties of the operator obtained from linearizing $h \rightarrow m(W,h)$  in a sufficiently regular local neighborhood around the true structural function $h_0$ and $(iv)$ the sequence of basis functions $b^K(W) = (b_1(W) , \dots , b_K(W))'$ used to compute a feasible estimate of the residual conditional mean function $m(W,h) = \E[ \rho(Y,h(X))|W]$.

Given the consistency result in (\ref{nhd}), to determine the rate of convergence, it suffices to restrict our attention to sample path realizations that are sufficiently regular and lie in a local neighborhood around $h_0$. Locally around $h_0$, we approximate the behavior of the map $h \rightarrow m(W,h)$ through a linearized version of it. Depending on the model and assumed hypothesis on the data generating process for $ \mathcal{Z} = (Y,X,W)$, there may be several distinct maps that serve as a reasonable linearization.\footnote{As the existence of such a map  primarily serves as a proof technique, the precise choice is not crucial.} If the map $h \rightarrow m(W,h)$ is sufficiently regular around $h_0$, the natural linearization to consider is the Fr\'echet differential at $h_0$. This is the unique continuous linear operator $D_{h_0} : L^2(X) \rightarrow L^2(W)$ that satisfies 
\begin{align}
    \label{dh0}   \| m(W,h_0+h) - m(W,h_0) - D_{h_0}[h] \|_{L^2(\mathbb{P})} = o(\| h\|_{L^2(\mathbb{P})}) \; \; \; \; \text{as} \; \; \; \; \|h \|_{L^2(\mathbb{P})} \rightarrow 0.
\end{align}
Intuitively, if $D_{h_0}[h]$ closely approximated $m(W,h)$ in local neighborhood around $h_0$, the smoothing properties of the map $h \rightarrow m(W,h)$ can be analyzed through the  simpler linear map $h \rightarrow D_{h_0}[h]$. In the analysis that follows, we relate the smoothing property of $h \rightarrow D_{h_0}[h]$ to a change in regularity relative to the  orthonormal basis $(e_i)_{i=1}^{\infty}$ that defines the Gaussian process in (\ref{g-series-4}). As the smoothness of $h_0$ is also defined relative to this basis through membership in the Sobolev ball (\ref{sob}), this will allow us to study the action of $D_{h_0}$ on $(G_{\alpha},h_0)$  under a  common regularity scale.

It will be convenient in our analysis to define a family of weak norms on $L^2(\mathcal{X})$, all obtained by shrinking the Fourier coefficients of a general function with respect to the basis $(e_i)_{i=1}^{\infty}$. To that end, we employ the following definition.

\begin{definition}[Weak Norms] \label{weak-n}
Let $ \sigma = ( \sigma_i)_{i=1}^{\infty}$ denote a non-negative sequence with $\sigma_i \rightarrow 0$. Given any function $h \in L^2(\mathcal{X})$ with basis expansion $h = \sum_{i=1}^{\infty} \langle h , e_i \rangle e_i$, we define the weak norm
\begin{align}
   h \rightarrow  \| h \|_{w,\sigma}^2 = \sum_{i=1}^{\infty}   \sigma_i^2 \left|  \langle  h , e_i  \rangle \right|^2 .     \label{w-norm}
\end{align}
\end{definition}

The following two conditions quantify the smoothing action of the map $h \rightarrow m(W,h)$ in a local neighborhood around $h_0$ by relating it to a weak norm in (\ref{w-norm}). For $\alpha \in \mathcal{L}$ and $\gamma \in (0,\alpha)$ sufficiently small, denote a smooth local neighborhood around $h_0$ by \begin{align}
    \label{nhd-smooth} \Omega(M,\epsilon,\gamma) =  \{ h \in \mathbf{H}^t(M) \cap \mathcal{H}^{\alpha - \gamma}(M)  : \| h - h_0 \|_{L^2(\mathbb{P})} \leq \epsilon    \}.
\end{align}
where $\mathcal{H}^{p}$ is defined as in (\ref{sob}) for every $ p \geq 0$. The $\gamma$ in (\ref{nhd-smooth}) is used to account for the fact that a Gaussian process $G= G_{\alpha}$ as in (\ref{g-series-4}) does not possess exact regularity $\alpha$. The process has nearly exact regularity $\alpha$ in the sense that $ \E \big (\| G_{\alpha}  \|_{\mathcal{H}^{\alpha}} \big) = \infty$  but $  \E \big (\| G_{\alpha}  \|_{\mathcal{H}^{\beta}} \big) < \infty  $ for every $\beta < \alpha$.

\begin{condition2}[Smoothing Link Condition] \label{gp-link}
 $(i)$ There exists $\epsilon \in (0,1), \gamma \in (0, \alpha)$  and a non-negative sequence of constants  $ \sigma = (\sigma_i)_{i=1}^{\infty}$ such that for any $ M > 0 $, there exists constants $C_1(M),C_2(M) < \infty$ that satisfy $ \|  D_{h_0}[h- h_0]  \|_{L^2(\mathbb{P})} \leq C_1 \| h - h_0  \|_{w,\sigma} $ and $ \| h - h_0  \|_{w,\sigma}  \leq C_2 \|  D_{h_0}[h- h_0]  \|_{L^2(\mathbb{P})}   $ for every $h \in \Omega(M,\epsilon,\gamma) $. $(ii)$ The model is either mildly or severely ill-posed in the sense that $$ \sigma_{i}  \asymp \begin{cases}    i^{- \zeta/d} &  \text{mildly ill-posed} \\  \exp( - R i ^{\zeta/d}) & \text{severely ill-posed}   \end{cases}  $$
for some $R, \zeta \geq 0$.
\end{condition2}

\begin{condition2}[Local Curvature] \label{lcurv} 
 There exists $\epsilon \in (0,1), \gamma \in (0, \alpha)$ such that for any $ M > 0$, there exists a constant  $ B = B(M) < \infty $  that satisfies  $\|  m(W,h)  \|_{L^2(\mathbb{P})} \leq   B \| D_{h_0}[h- h_0] \|_{L^2(\mathbb{P})} $  and $ \|  D_{h_0}[h-h_0]  \|_{L^2(\mathbb{P})} \leq    B \|  m(W,h)  \|_{L^2(\mathbb{P})}  $ for every $h \in \Omega(M,\epsilon,\gamma)$.
\end{condition2}
Condition \ref{gp-link} and \ref{lcurv} are similar to Assumption $4.1$ and $5.2$ in \cite{chen2012estimation}. Condition $\ref{lcurv}$ is trivially satisfied if $h \rightarrow m(W,h)$ is a linear map, as in NPIV (Example \ref{ex1}). If $D_{h_0}^*$ denotes the adjoint of $D_{h_0}$, Condition \ref{gp-link} is  satisfied if the self-adjoint operator $D_{h_0}^* D_{h_0}$ diagonalizes in the same eigenbasis $(e_i)_{i=1}^{\infty}$ that defines the Gaussian process in (\ref{g-series-4}). In this special case, Condition \ref{gp-link} coincides with assumptions frequently used in the literature (e.g. \citealp*{knapik2011bayesian}, Assumption 3.1) for linear inverse problems. A stronger version of Condition \ref{gp-link} is also used in \citet*{gugushvili2020bayesian} within the context of a white noise model with a known linear operator.

\begin{remark}[On Variations of Local Curvature Conditions]
   Variations of of Condition \ref{lcurv} can be used without requiring any significant changes to our analysis. For example, (\citealp*{chernozhukov2023constrained}, Remark A.2.3) and (\citealp*{dunker2014iterative}, Theorem 2) imposes (in our notation) a local curvature condition between the quantity $\| \Pi_K m(W,h) \|_{L^2(\mathbb{P})}$ and $\| \Pi_K D_{h_0}[h-h_0]  \|_{L^2(\mathbb{P})}$ for all sufficiently large $K$. In this setting, our Condition \ref{basis-approx} stated below would instead impose an order for the local linear bias  $ \gamma(K) =   \sup_{h \in \mathcal{H}^{\alpha}(M) \cap \mathbf{H}^t(M) : \| h - h_0 \|_{L^2(\mathbb{P})} \leq \epsilon   } \| (\Pi_K - I) D_{h_0}[h-h_0]  \| _{L^2(\mathbb{P})}$. 
\end{remark}

The quasi-Bayes objective function uses the estimator $\widehat{m}(w,h)$ in (\ref{cmean}) as a feasible analog for the true conditional mean function $m(W,h) = \E[ \rho(Y,h(X))|W]$. The discrepancy between $\widehat{m}(W,h)$ and $m(W,h)$ depends on two factors: $(i)$ the stochastic error between $\widehat{m}(W,h)$ and the population projection  $\Pi_K m(W,h)$ and $(ii)$ the distance between the projection $\Pi_K m(W,h)$ and the true $m(W,h)$. While  the effect of $(i)$ can be quantified\footnote{By Lemma \ref{emp-c}, under Condition \ref{fsbasis}$(i)$, the stochastic error has order $\sqrt{K}/ \sqrt{n}$ uniformly over any Sobolev ball $\mathbf{H}^t(M)$.} under very weak conditions for a large set of basis functions, the effect of $(ii)$ is intrinsic to the choice of first stage basis functions $b^K(W) = (b_1(W) , \dots , b_K(W))'$ used in the projection. Our final condition quantifies the approximation properties of the basis in a smooth local neighborhood around $h_0$.

\begin{condition2}[Basis Approximation] \label{basis-approx} Let $\sigma_K$ be as in Condition \ref{gp-link}. There exists $\epsilon \in (0,1)$ and a sequence of (possibly diverging) constants $ \lambda_K    < \infty$  such that for any $ M > 0 $, there exists  $ B = B(M) < \infty $ that satisfies \begin{align}  \sup_{h \in \mathcal{H}^{\alpha}(M) \cap \mathbf{H}^t(M) : \| h - h_0 \|_{L^2(\mathbb{P})} \leq \epsilon   }   \|  (\Pi_{K} - I) m(W,h)  \|_{L^2(\mathbb{P})}  \leq \lambda_K   B \sigma_{K+1}  K^{- \alpha /d}   \label{biascondition}  \end{align} for all $K $ sufficiently large and  $\alpha \in \mathcal{L}$. 
\end{condition2}
Condition \ref{basis-approx} is similar to assumptions commonly imposed  in the literature.\footnote{For example, Corollary 5.1 of \citep{chen2012estimation} imposes that the bias in  the left side of (\ref{biascondition}) is upper bounded by the stochastic projection error, having order $\sqrt{K}/ \sqrt{n}$. As the optimal $K$ there is chosen to balance the stochastic projection error and the projection bias  $\sigma_{K+1} K^{-\alpha/d}$, this is equivalent to assuming $\lambda_K \lessapprox 1$.} Locally around $h_0$, the map $h \rightarrow m(W,h)$ acts as a smoothing operator that is similar to $D_{h_0}[h-h_0]$. The quantity $\sigma_{K+1} K^{-\alpha/d}$ represents the projection bias of the smoothed function.\footnote{This is the bias of $\| (\Pi_K - I) D_{h_0}[h-h_0]  \| _{L^2(\mathbb{P})}$ if  $b^K(W) = (b_1(W) , \dots , b_K(W))'$ 
 are chosen to be the eigenfunctions generated by the singular value decomposition of $D_{h_0}$.} The sequence $\lambda_K$ in Condition \ref{basis-approx}  represents a slack factor to account for rates that may be slightly larger than the usual local linear bias $\| (\Pi_K - I) D_{h_0}[h-h_0]  \| _{L^2(\mathbb{P})}$, possibily due to a non optimal choice of first stage basis  functions.

Given any Gaussian process $G_{\alpha}$ of regularity $\alpha$, let  $ K_n (\alpha)$ denote a sequence of sieve dimensions defined by \begin{align}
\label{kn} K_n = \sup  \bigg \{ K \in \mathbb{N} :  \frac{  \sqrt{K}}{\sqrt{n}}  \leq   \sigma_K K^{-\alpha/d}       \bigg      \}.
\end{align}
 Intuitively, $(K_n)_{n=1}^{\infty}$ denotes the optimal sieve dimension that balances the stochastic projection error and the local (around $h_0$) projection bias. In particular, if the model is mildly ill-posed, we have $  K_n \asymp n^{d/[2(\alpha + \zeta) + d]}  $. If the model is severely ill-posed, we have $K_n \asymp (\log n)^{d/\zeta}$.

The following result provides contraction rates for the quasi-Bayes posterior in (\ref{posterior}).

\begin{theorem}[Contraction Rates] \label{rate}
Suppose Assumptions \ref{data}-\ref{svd} and Conditions \ref{fsbasis}-\ref{basis-approx} hold. Denote by $K_n$, the sequence defined in (\ref{kn}). Let $\nu_{n} = \lambda_{K_n}$ where $\lambda_K$ is as in Condition \ref{basis-approx}. Suppose $h_0 \in \mathcal{H}^p$ for some $p \geq \alpha + d/2$ and $\nu_{n} \lessapprox n^{c}$ for some $c < 1/2$.
\begin{enumerate}
\item[$(i)$] If the model is mildly ill-posed, there exists a universal constant $ D > 0 $ such that \begin{align} \label{mild-c-rate}
   \mu \bigg( \| h -  h_0 \|_{L^2}   >  D \nu_{n} n^{\frac{-\alpha}{2[\alpha +\zeta] +d}}    \sqrt{\log n}    \: \big| \: \alpha,K_n,  \mathcal{Z}_n  \bigg)   = o_{\mathbb{P}}(1).
\end{align}
\item[$(ii)$] If the model is severely ill-posed, there exists a universal constant $ D > 0 $ such that
\begin{align} \label{sev-c-rate}
 \mu \bigg( \| h -  h_0 \|_{L^2}   >  D   (\log n)^{- \alpha / \zeta} \sqrt{\log \log n}     \: \big| \: \alpha,K_n,  \mathcal{Z}_n  \bigg)   = o_{\mathbb{P}}(1).
\end{align} 
\end{enumerate}
\end{theorem}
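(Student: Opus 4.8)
\textbf{Proof proposal for Theorem \ref{rate}.}

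The plan is to build on the consistency argument already established in Theorem \ref{t1} and the localization in (\ref{nhd}), and then sharpen the rate by a careful evidence-bound (Kullback–Leibler versus testing/complexity) argument adapted to the quasi-Bayes setting. First I would localize: by Theorem \ref{t1} and (\ref{nhd}), the quasi-posterior puts mass $1-o_{\mathbb{P}}(1)$ on the set $\Omega_n(M) = \{h \in \mathbf{H}^t(M): \|h-h_0\|_{L^2(\mathbb{P})} \le \delta_n\}$ for some $\delta_n \to 0$ and large $M$. Moreover, since the prior is the rescaled Gaussian process $G_\alpha/(\sqrt{\log n}\sqrt{K_n})$, a standard Borell–TIS / RKHS small-ball computation shows the prior also concentrates (up to $e^{-cn r_n^2}$-type mass) on the smoother set $\Omega(M,\epsilon,\gamma)$ in (\ref{nhd-smooth}), i.e. on sample paths that additionally lie in $\mathcal{H}^{\alpha-\gamma}(M)$; the scaling by $\sqrt{K_n}$ is exactly what makes these high-probability events usable. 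So it suffices to work on $\Omega(M,\epsilon,\gamma)$, where Conditions \ref{gp-link}, \ref{lcurv}, \ref{basis-approx} are all in force.

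Second, on this smooth local neighborhood I would reduce the strong norm $\|h-h_0\|_{L^2(\mathbb{P})}$ to the weak/prediction norm. Condition \ref{lcurv} gives $\|D_{h_0}[h-h_0]\|_{L^2(\mathbb{P})} \asymp \|m(W,h)\|_{L^2(\mathbb{P})}$, and Condition \ref{gp-link} gives $\|D_{h_0}[h-h_0]\|_{L^2(\mathbb{P})} \asymp \|h-h_0\|_{w,\sigma}$. The quasi-Bayes objective controls (a sample version of) $\|\Pi_{K_n} m(W,h)\|_{L^2(\mathbb{P})}$; using Condition \ref{basis-approx} to bound the projection bias $\|(\Pi_{K_n}-I)m(W,h)\|_{L^2(\mathbb{P})}$ by $\nu_n \sigma_{K_n+1}K_n^{-\alpha/d}$ and Lemma \ref{emp-c} to bound the stochastic error by $\sqrt{K_n/n}$ (which by the definition (\ref{kn}) of $K_n$ is of the same order $\sigma_{K_n}K_n^{-\alpha/d}$), one gets a high-probability bound of the form $\|h-h_0\|_{w,\sigma} \lessapprox \nu_n \sigma_{K_n}K_n^{-\alpha/d}\sqrt{\log n}$ on the bulk of the posterior, where the extra $\sqrt{\log n}$ tracks the prior rescaling and the Gaussian suprema in the complexity bound. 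The final step is the standard interpolation: for $h \in \mathcal{H}^{\alpha-\gamma}(M)$ (smoothness) with $\|h-h_0\|_{w,\sigma}$ small (weak norm), an interpolation inequality between the $\mathcal{H}^{\alpha-\gamma}$-norm and the $\|\cdot\|_{w,\sigma}$-norm converts the weak-norm rate into an $L^2$ rate. In the mildly ill-posed case $\sigma_i \asymp i^{-\zeta/d}$, plugging $K_n \asymp n^{d/[2(\alpha+\zeta)+d]}$ into $\sigma_{K_n}K_n^{-\alpha/d}$ and interpolating yields the exponent $-\alpha/(2[\alpha+\zeta]+d)$; in the severely ill-posed case $\sigma_i \asymp \exp(-Ri^{\zeta/d})$ and $K_n \asymp (\log n)^{d/\zeta}$, the same bookkeeping produces the logarithmic rate $(\log n)^{-\alpha/\zeta}$, with $\sqrt{\log n}$ degrading to $\sqrt{\log\log n}$.

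The technical heart — and the main obstacle — is making the ``posterior bulk'' argument rigorous for a \emph{quasi}-likelihood built from an estimated, possibly nonlinear and nonsmooth operator. Concretely, one must (a) lower-bound the denominator (evidence) $\int \exp(-\tfrac n2 Q_n(h))\,d\mu$ by restricting to a prior ball around $h_0$ and using the RKHS small-ball estimate together with the fact that $Q_n(h)$ is small there (controlling $\widehat{m}(W,h_0)$ via Lemma \ref{emp-c} and the first-stage basis error via Condition \ref{fsbasis}(iii)/\ref{basis-approx}); and (b) upper-bound the numerator on the complement $\{h: \|h-h_0\|_{w,\sigma} > r_n\}$ by a peeling/chaining argument, using Assumptions \ref{residuals}, \ref{residuals2} to get uniform moduli of continuity for $h \mapsto Q_n(h)$ over Sobolev balls despite the nonsmoothness of $\rho$ (this is where the restriction $t \ge d/\kappa$ and the scaling by $\sqrt{K_n}$ are essential — they force the Gaussian draws into a fixed Sobolev ball where the empirical-process entropy is controlled). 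The linearization is purely a bookkeeping device: Assumption \ref{svd} plus Condition \ref{lcurv} lets one pass between $\|m(W,h)\|$ and $\|D_{h_0}[h-h_0]\|$ on the shrinking neighborhood, so no global behavior of $h \mapsto m(W,h)$ is needed. I expect step (b) — the uniform control of the quasi-Bayes objective over the relevant slices of the parameter space, accounting simultaneously for the sampling error of $\widehat m$, the nonsmoothness of $\rho$, and the ill-posed projection bias — to require the most care, while the final interpolation and the arithmetic of the two ill-posedness regimes are routine once the weak-norm rate is in hand.
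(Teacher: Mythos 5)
Your proposal captures the paper's actual strategy — localization, reduction to the weak (prediction) norm via Conditions \ref{gp-link}, \ref{lcurv} and \ref{basis-approx}, and a final interpolation to $L^2$ — and your arithmetic for the two ill-posedness regimes is right. Two places where your sketch diverges from how the paper actually executes, and where the details matter, are worth flagging.

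First, the mechanism for suppressing the numerator on the bad set. You propose a peeling/chaining argument to upper-bound $\int_{\|h-h_0\|_{w,\sigma}>r_n}\exp(-\tfrac n2 Q_n(h))\,d\mu$, treating the quasi-likelihood like a generic likelihood that needs a test. The paper avoids this entirely: since $\E_n[\widehat m'\widehat\Sigma\widehat m] \geq 0.9\,b\,\|\widehat\Pi_K m(W,h)\|^2_{L^2(\mathbb P_n)}$ on the high-probability event $\|\widehat G_{b,K}^o - I\|_{op}$ small (Lemma \ref{aux2}), the exponential weight is trivially at most $\exp(-0.45\,bn D^2\log(n)\epsilon_n^2)$ on the set $\{\|\widehat\Pi_K m(W,h)\|_{L^2(\mathbb P_n)}>D\sqrt{\log n}\,\epsilon_n\}$, so the numerator bound is a one-line calculation and no entropy or testing is needed for it. The empirical-process work (Lemma \ref{emp-c}, Assumptions \ref{residuals}--\ref{residuals2}) is used only to convert the empirical event $\|\widehat\Pi_K m\|_{L^2(\mathbb P_n)}$ small into the population event $\|\Pi_K m\|_{L^2(\mathbb P)}$ small — and that is a one-directional transfer, not a uniform lower bound on $Q_n$. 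Framing it as needing uniform lower bounds on $Q_n$ over the complement is harder than necessary and would wrestle with the ill-posedness unnecessarily.

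Second, and more substantively: you work on $\Omega(M,\epsilon,\gamma)$ with a \emph{fixed} $\gamma>0$ and radius $M$. Running the bias–variance/interpolation step with $h$ only known to lie in $\mathcal H^{\alpha-\gamma}(M)$ for fixed $\gamma$ gives a tail term $\asymp M^2 J^{-2(\alpha-\gamma)/d}$, which after balancing yields an exponent with $\alpha-\gamma$ in place of $\alpha$ — a strictly worse rate than the theorem claims. The paper takes $\gamma=r_n\to 0$ (with $r_n=(\log n)^{-1}$ or $(\log\log n)^{-1}$) and pays for it with a \emph{growing} ball radius $Mr_n^{-1/2}$, justified by $\E\|G_\alpha\|^2_{\mathcal H^{\alpha-r_n}}\lessapprox r_n^{-1}$ and Lemma \ref{aux3}; the bias term then becomes $r_n^{-1}J^{-2\alpha/d}$, which after optimizing $J$ produces exactly the claimed rate, and the $r_n^{-1/2}$ is the source of the extra $\sqrt{\log n}$ (mild) and $\sqrt{\log\log n}$ (severe) factors. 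Your sketch gestures at these log factors coming from "prior rescaling and Gaussian suprema," but the precise account requires letting $\gamma$ shrink; with a fixed $\gamma$ your rate exponent would be strictly suboptimal.

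Aside from these two points — both of which are implementation details rather than wrong ideas — your proposal tracks the paper's Steps (i)–(iv) closely.
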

As a point estimator for $h_0$, we consider the posterior mean \begin{align}
\label{posmean}   \E \big[ h|\mathcal{Z}_n \big] = \int  h d \mu(h|\alpha,K_n,\mathcal{Z}_n). 
\end{align}

Given the posterior contraction rate in Theorem \ref{rate}, we expect that the posterior mean, as a point estimator, achieves a similar rate of convergence. Intuitively, this follows from the preceding result if the posterior tail probabilities appearing in (\ref{mild-c-rate}) and (\ref{sev-c-rate}) decay sufficiently fast. The following result verifies this.

\begin{corollary}
\label{posgp} 
Suppose the hypothesis of Theorem \ref{rate} holds.
\begin{enumerate}
\item[$(i)$] If the model is mildly ill-posed, there exists a universal constant $ D > 0 $ such that \begin{align*}
    \mathbb{P} \bigg(  \|  h_0 -  \E \big[ h|\mathcal{Z}_n \big]  \|_{L^2}  >    D \nu_{n}  n^{\frac{-\alpha}{2[\alpha +\zeta] +d}}    \sqrt{\log n}       \bigg)  \rightarrow 0.
\end{align*}
\item[$(ii)$] If the model is severely ill-posed, there exists a universal constant $ D > 0 $ such that
\begin{align*}
   \mathbb{P} \bigg(  \|  h_0 -  \E \big[ h|\mathcal{Z}_n \big]   \|_{L^2}  >    D   (\log n)^{- \alpha / \zeta} \sqrt{\log \log n}     \bigg)  \rightarrow 0.
\end{align*}
\end{enumerate}
\end{corollary}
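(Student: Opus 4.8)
The plan is to derive the convergence rate for the posterior mean from the posterior contraction rate in Theorem \ref{rate} by controlling the contribution of the posterior tails. Write $\varepsilon_n$ for the relevant rate, i.e. $\varepsilon_n = \nu_n n^{-\alpha/(2[\alpha+\zeta]+d)}\sqrt{\log n}$ in the mildly ill-posed case and $\varepsilon_n = (\log n)^{-\alpha/\zeta}\sqrt{\log\log n}$ in the severely ill-posed case. By Jensen's inequality, $\|h_0 - \E[h|\mathcal{Z}_n]\|_{L^2} \leq \int \|h - h_0\|_{L^2}\, d\mu(h|\alpha,K_n,\mathcal{Z}_n)$, so it suffices to bound the posterior mean of $\|h-h_0\|_{L^2}$. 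First I would split this integral over the event $A_n = \{h : \|h-h_0\|_{L^2} \leq D\varepsilon_n\}$ (for $D$ as in Theorem \ref{rate}) and its complement. On $A_n$ the integrand is at most $D\varepsilon_n$; the work is in showing the complementary piece $\int_{A_n^c}\|h-h_0\|_{L^2}\, d\mu$ is $o_{\mathbb{P}}(\varepsilon_n)$.

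For the tail piece I would use Cauchy--Schwarz (with respect to the posterior measure): $\int_{A_n^c}\|h-h_0\|_{L^2}\, d\mu \leq \big(\int \|h-h_0\|_{L^2}^2\, d\mu\big)^{1/2}\,\mu(A_n^c)^{1/2}$. The second factor is $o_{\mathbb{P}}(1)$ — in fact it must be shown to go to zero fast enough — by a quantitative strengthening of Theorem \ref{rate}. Revisiting the proof of Theorem \ref{rate}, the posterior tail bound is obtained via the standard evidence-lower-bound / testing argument, which actually yields $\mu(A_n^c|\alpha,K_n,\mathcal{Z}_n) \leq e^{-c n \varepsilon_n^2}$ (or at least a polynomially small bound) on an event of probability tending to one; since $n\varepsilon_n^2 \to \infty$ in both regimes (this is exactly the balance encoded in the definition \eqref{kn} of $K_n$, together with $\nu_n \lessapprox n^c$, $c<1/2$), $\mu(A_n^c)^{1/2}$ decays faster than any fixed power and in particular faster than $\varepsilon_n$. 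For the first factor I would truncate: because the quasi-Bayes posterior concentrates on the Sobolev ball $\mathbf{H}^t(M)$ with $\mu(h\notin\Omega_n(M)|\cdot)=o_{\mathbb{P}}(1)$ from \eqref{nhd}, and on $\Omega_n(M)$ one has $\|h-h_0\|_{L^2}\leq \|h\|_{L^2}+\|h_0\|_{L^2}\lessapprox M$, the restricted second moment $\int_{\Omega_n(M)}\|h-h_0\|_{L^2}^2\, d\mu$ is $O_{\mathbb{P}}(1)$; the mass outside $\Omega_n(M)$ contributes a term that is bounded using the Gaussian tail of the prior (the prior puts mass at most $e^{-c'K_n\log n}$ on paths with $\|h\|_{\mathbf{H}^t}$ large, while the denominator of the posterior is bounded below by the small-ball estimate used in Theorem \ref{t1}), which is again negligible. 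Combining, $\int \|h-h_0\|_{L^2}^2\, d\mu = O_{\mathbb{P}}(1)$, so the tail piece is $o_{\mathbb{P}}(\varepsilon_n)$, and hence $\|h_0-\E[h|\mathcal{Z}_n]\|_{L^2} \leq (D+o_{\mathbb{P}}(1))\varepsilon_n$, which gives the claim after enlarging $D$.

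The main obstacle is making the posterior tail decay in Theorem \ref{rate} \emph{quantitative} — the theorem as stated only asserts $o_{\mathbb{P}}(1)$, whereas here I need an explicit exponential (or at least sufficiently fast polynomial) rate $e^{-cn\varepsilon_n^2}$ holding on a high-probability event, so that it dominates the (possibly only $O_{\mathbb{P}}(1)$) second-moment factor. I would address this by going back into the proof of Theorem \ref{rate}: the numerator of the posterior restricted to $A_n^c$ is controlled by the exponential quasi-likelihood decay $\exp(-\tfrac{n}{2}\inf_{h\in A_n^c\cap\Omega_n}Q_n(h))$ combined with a union/covering-number bound over the sieve (whose log-cardinality is $O(K_n\log n)$), while the denominator is bounded below by a Gaussian small-ball term of order $\exp(-C K_n\log n)$; since the curvature and link conditions (Conditions \ref{gp-link}--\ref{lcurv}) give $\inf_{h\in A_n^c\cap\Omega_n}Q_n(h)\gtrsim \varepsilon_n^2$ up to lower-order stochastic errors controlled by Lemma \ref{emp-c}, the ratio is $\exp(-c n\varepsilon_n^2 + C K_n\log n)$, and one checks $n\varepsilon_n^2 \gg K_n \log n$ in both regimes. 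A secondary, more routine obstacle is the uniform-integrability/truncation step controlling $\int\|h-h_0\|_{L^2}^2 d\mu$, which requires that the prior's Gaussian tails outside $\mathbf{H}^t(M)$ be summed against the polynomial growth of $\|h-h_0\|_{L^2}^2$ — this is handled by the Borell--TIS inequality applied to the scaled process $G_\alpha/(\sqrt{\log n}\sqrt{K_n})$, whose effect is exactly to make these tails vanish rapidly.
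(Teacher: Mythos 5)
Your proposal is correct, and it identifies the essential point: the $o_{\mathbb{P}}(1)$ posterior tail bound asserted in Theorem \ref{rate} is insufficient, and one must return to its proof to extract an exponential tail bound of the form $\mu(\|h-h_0\|_{L^2} > L\delta_n \mid \mathcal{Z}_n) \lesssim \exp(-cL\, n\log(n)\,\epsilon_n^2)$ holding with $\mathbb{P}$-probability tending to one (where $\epsilon_n = \sqrt{K_n/n}$, so $n\log(n)\epsilon_n^2 = K_n\log n \to \infty$). Your route to the conclusion is, however, more circuitous than the paper's. You apply Jensen to the first moment, split on the good set $A_n$, and handle $A_n^c$ via Cauchy--Schwarz, which forces you to establish separately that $\int \|h-h_0\|_{L^2}^2\, d\mu(\cdot|\mathcal{Z}_n) = O_{\mathbb{P}}(1)$ through a truncation to $\mathbf{H}^t(M)$ plus a Gaussian-tail/evidence-lower-bound argument for the leftover mass. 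The paper instead applies Jensen directly to the second moment, $\|h_0 - \E[h|\mathcal{Z}_n]\|_{L^2}^2 \leq \E(\|h-h_0\|_{L^2}^2 \mid \mathcal{Z}_n)$, and decomposes over annuli $\{j\overline{L}\delta_n \leq \|h-h_0\|_{L^2} < (j+1)\overline{L}\delta_n\}$: the tail bound with exponent scaling linearly in $L$ makes the resulting sum $\sum_{j\ge 1}(j+1)^2 e^{-Dj\overline{L}\, n\log(n)\epsilon_n^2}$ convergent, giving the full bound as a constant multiple of $\delta_n^2$ in one step. The shell decomposition thus absorbs the uniform-integrability issue you had to treat separately, at the cost of requiring the $L$-scalable version of the tail bound; your version requires the tail bound at only a single $L$, but pays for it with the extra second-moment argument. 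Both are valid; the paper's packaging is tighter.
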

The contraction rates for the mildly ill-posed case can be improved further if one includes the slack factor $\lambda_K$ from Condition \ref{basis-approx} as part of the projection bias in the definition of $K_n$ in (\ref{kn}). If $ \nu_{n} = \lambda_{K_n}$ is asymptotically
negligible relative to the other factors, the resulting rates will not change significantly from those stated (without $\nu_{n}$) in Theorem \ref{rate} and Corollary \ref{posgp}. For ease of exposition and notation, in the remaining sections we focus on the case where $\nu_{n} \lessapprox 1$.\footnote{It is straightforward to incorporate $\nu_{n} \uparrow \infty$ into the analysis, as in the preceding results.}

\subsection{Inference}
In this section, we establish the limiting quasi-posterior distribution for a class of linear functionals of $h$. We verify that, under suitable regularity conditions, the quasi-posterior distribution of a fixed linear functional is asymptotically Gaussian. Moreover, quasi-Bayesian credible sets  have asymptotically exact frequentist coverage, provided that the quasi-Bayes posterior in (\ref{posterior}) is optimally weighted.

Given the posterior contraction rate in Theorem \ref{rate}, to derive the distributional limit theory, it suffices to restrict our analysis to a quasi-Bayes posterior with support contained on shrinking local neighborhoods around $h_0$. Let $\delta_n$ denote the posterior contraction rate in Theorem \ref{rate}. On route to deriving the posterior contraction rate, the previous results also provide contraction rates with respect to weaker metrics such as  $ d_{w}(h,h_0) =  \|   m(W,h_{}) - m(W,h) \big]   \|_{L^2(\mathbb{P})}  $. It will be convenient in our analysis to emphasize this when defining the posterior support. To that end, define \begin{align*}\xi_n =    \begin{cases}      n^{- \frac{ \alpha + \zeta }{2[\alpha + \zeta] + d}}  \sqrt{\log n}    & \text{mildly ill-posed} \; , \\  \sqrt{\log n} (\log n)^{d/(2 \zeta)} n^{-1/2}   & \text{severely ill-posed}.    \end{cases}
\end{align*}
Fix any $\gamma \in (0,\alpha)$ sufficiently small.\footnote{It is fine to let $\gamma   \rightarrow 0$ slowly here. In fact, the posterior concentrates on subsets of $ \mathbf{H}^t(M) \cap \mathcal{H}^{\alpha - \gamma_n} (M) $, where $$  \gamma_n \asymp \begin{cases} (\log n)^{-1} & \text{mildly ill-posed} \\ (\log \log n)^{-1} & \text{severely ill-posed}  
    \end{cases} $$} For constants $M,D  > 0$ sufficiently large, we denote the localized posterior support and its image under the map $h \rightarrow m(W,h)$ by \begin{align}
\label{thetan-support}   \Theta_n =  \bigg \{ h  \in \mathbf{H}^t(M) \cap \mathcal{H}^{\alpha - \gamma} (M)  & :  \|    m(W,h_{})  - m(W,h_0)   \|_{L^2(\mathbb{P})}  \leq   D  \xi_n ,        \| h- h_0 \|_{L^2(\mathbb{P})} \leq D \delta_n       \bigg \} \; ,  \\   \nonumber &  \mathcal{M}_n = \{  m(W,h) : h \in \Theta_n\}.
\end{align}
We denote the entropy integral of the image of the localized posterior support by \begin{align}
    \label{entropy-int}  \mathcal{J}(c) =  \int_{0}^{c}  \sqrt{\log N( \mathcal{M}_n , \| . \|_{L^2(\mathbb{P})}, \tau D \xi_n  )  } d \tau \; \; \; \; \; \; \; \; \forall  \; c > 0 .
\end{align}

To connect with the usual linear distributional theory, we quantify the discrepancy between  $m(W,h)$ and its linear approximation $D_{h_0}[h-h_0]$ locally around $h_0$. To that end, given any function $h : \mathcal{X} \rightarrow \R$, we denote the remainder obtained from linearizing the map $h \rightarrow m(W,h)$  locally around $h_0$ by  \begin{align}
    \label{linear-remainder} R_{h_0}(h,W) =  m(W,h) - m(W,h_0) - D_{h_0}[h-h_0] .
\end{align}
For linear problems such as NPIV (Example \ref{ex1}), we have $R_{h_0}(h,W) = 0$ for every $h$. As such, including (\ref{linear-remainder}) in the analysis is only relevant for nonlinear models. Analogous to the Euclidean case, the remainder vanishes as $\| h- h_0 \|_{L^2(\mathbb{P})} \rightarrow 0$. The precise rate at which this occurs depends on (among other factors) $(i)$ the ill-posedness in the model, the regularity of $h$ and $(iii)$ the convergence rate of $\| h- h_0 \|_{L^2(\mathbb{P})}$. Our main conditions on the localized support and remainder are as follows.

\begin{condition2}\label{misc1}  Let $\kappa,t$ be as in Assumption \ref{residuals}. Suppose that \begin{align}
 &(i) \; \; \; \; n^{-1/2} K_n^2 \log(n) \mathcal{J}(K_n^{-1/2}) \xrightarrow[n \rightarrow \infty]{} 0. \\ & (ii) \; \; \; \; \sqrt{\log n} \max \bigg \{ \frac{K_n^2 \log(K_n)}{\sqrt{n}} , \frac{K_n \delta_n^{-d/t}}{\sqrt{n}}, K_n \sqrt{\log K_n} \delta_n^{\kappa} , \sqrt{K_n} \delta_n^{\kappa - d/2t}   \bigg \} \xrightarrow[n \rightarrow \infty]{} 0. \\ & (iii) \; \; \; \;  \sqrt{n} \sqrt{K_n} \sqrt{\log n} \sup_{h \in \Theta_n}  \| \Pi_{K_n} R_{h_0}(h,W)  \|_{L^2(\mathbb{P})} \xrightarrow[n \rightarrow \infty]{} 0. 
\end{align}
\end{condition2}
Condition \ref{misc1}$(i-ii)$ arise as a consequence of empirical process tools that are used to control the deviation of  $  \E_n \big[    \widehat{m}(W,h) ' \Sigma(W)  \widehat{m}(W)            \big]$   from its population analog $ \E[ \Pi_{K_n} m(W,h)' \Sigma(W) \Pi_{K_n} m(W,h)]  $, uniformly over $h \in \Theta_n$. The dependence on $\kappa$ arises because the generalized residual function $\rho(.)$ in (\ref{cmr}) may be nonlinear and pointwise discontinuous in the input function $h$. As such, our argument instead utilizes the weaker $L^2(\mathbb{P})$  uniform continuity  in Assumption \ref{residuals}.

\begin{remark}[On the Remainder Order]
    Condition \ref{misc1}$(iii)$ imposes that the nonlinear remainder vanishes sufficiently fast on local shrinking neighborhoods around $h_0$. Under weak regularity conditions, the remainder term has order bounded above by a quadratic distance to $h_0$, i.e \begin{align} \label{quadrem} \| \Pi_{K_n} R_{h_0} (h,W) \|_{L^2(\mathbb{P})} \leq \|  R_{h_0} (h,W) \|_{L^2(\mathbb{P})} \leq C \| h - h_0 \|_{L^2(\mathbb{P})}^2 \; \; \; \; \; \; \forall \; \; h \in \Theta_n. \end{align}   For mildly ill-posed models, Condition \ref{misc1}$(iii)$ is  satisfied if $\delta_n^2 \sqrt{K_n} \sqrt{\log n} = o(n^{-1/2})$.  From the definition of $K_n$ in (\ref{kn}), this reduces to the smoothness requirement $\alpha > \zeta + d$. This is similar to the assumption used in \cite{chen2009efficient}.\footnote{More specifically, Condition 5.7 in \cite{chen2009efficient}.} Unfortunately, smoothness restrictions do not suffice for severely ill-posed models as the contraction rate $\delta_n$ has at most logarithmic order. As pointed out in the literature (e.g. \citealp{hanke1995convergence}), quadratic bounds as in (\ref{quadrem}) are generally too weak to be informative when the model is highly ill-posed. In such settings, $\| \Pi_{K_n} D_{h_0}[h-h_0] \|_{L^2(\mathbb{P})}$  can be significantly smaller than $\| h- h_0 \|_{L^2(\mathbb{P})}^2$, in which case (\ref{quadrem}) is too conservative to be informative. A more informative variant which often appears in the literature is the so called tangential cone condition\footnote{This is expression \textbf{(1.8)} in \cite{hanke1995convergence} with $\phi(t) = t$. For uses and proofs of tangential cone conditions in a variety of settings, see e.g. (\citealp{kaltenbacher2009iterative}; \citealp{de2012local}; \citealp{chen2014local}; \citealp{dunker2014iterative}; \citealp{breunig2020specification}; \citealp{kaltenbacher2021tangential}).} which, in our notation, states \begin{align}
        \label{tang-cone}  \| 
  R_{h_0}(h) \|_{L^2(\mathbb{P})} \leq \phi \big(   \|  h- h_0   \|_{L^2(\mathbb{P})}  \big) \|   m(W,h) - m(W,h_0)    \|_{L^2(\mathbb{P})} \; \; \;\; \;  \forall \;  h \in  \Theta_n
    \end{align}
for some function $\phi: \R_+ \rightarrow \R_+  $ with $\phi(0) = 0 $ and continuous at zero. For example, if $\phi(t) = t$, (\ref{tang-cone}) implies that Condition \ref{misc1}$(iii)$ holds for severely ill-posed models when $\alpha > \zeta + d$ and for mildly ill-posed models when $\alpha > d$. It is worth noting that the inequality in (\ref{tang-cone}) only needs to hold with $\Pi_K R_{h_0}$. As $\Pi_K : L^2(W) \rightarrow \mathcal{V}_K$ is a norm decreasing projection, this results in a slightly weaker condition. Other variants can be incorporated as well such as a direct bound between $ \| \Pi_K R_{h_0} \|_{L^2(\mathbb{P})}$ and $ \|\Pi_K \big[m(W,h) - m(W,h_0)] \|_{L^2(\mathbb{P})} $.
\end{remark}

The Reproducing Kernel Hilbert Space (RKHS) of the Gaussian Process $G_{\alpha}$ in (\ref{g-series-4}) can be expressed as \begin{align}
\label{RKHS}  \mathbb{H}  =  \bigg \{  g \in L^2(\mathcal{X})  : \|g \|_{\mathbb{H}}^2 =   \sum_{i=1}^{\infty} i^{ 1 +   2\alpha /d}  \left| \langle g ,  e_i \rangle      \right|^2 < \infty               \bigg \} .
\end{align}
Let $\mathbf{L} : L^2(X) \rightarrow \R$ denote a linear functional of interest. By duality (Riesz representation) there exists a function $\Phi(X) \in L^2(X)$ such that the linear functional can be expressed as \begin{align} \label{rieszrep}
\mathbf{L}(h) = \E[ \Phi(X) h(X) ] = \langle h , \Phi \rangle_{L^2(\mathbb{P})} \; \; \; \; \; \; \; \; \; \forall \; h \in L^2(X).
\end{align}
As such, to derive the limit theory with $\mathbf{L}(.)$, it suffices to work with  $\Phi(.)$ directly. 

In the preceding sections, the choice of weighting matrix $\widehat{\Sigma}(.)$ in the quasi-Bayes posterior (\ref{posterior}) did not influence the limit theory, provided that the eigenvalues of $\widehat{\Sigma}(.)$ are asymptotically bounded away from $0$ and $\infty$. Intuitively, if such a condition holds, rates of convergence can be determined by studying a quasi-Bayes posterior based on the simpler  objective function $  h \rightarrow \E_n \big( \| \widehat{m}(W,h)  \|_{\ell^2}^2    \big) $. However, to determine finer aspects of the posterior (such as a precise limiting distribution for functionals), it will be necessary to include the limiting behavior of $\widehat{\Sigma}(.)$ in the analysis. The following condition imposes that $\widehat\Sigma(W)$ converges   to  a limiting positive definite  matrix $\Sigma(W)$. Properties of the quasi-posterior can then be analyzed through the deterministic matrix $\Sigma(.)$.

\begin{condition2}\label{misc2-pre}
$(i)$There exists a positive definite matrix $\Sigma(.) $ such that $ \sup_{w \in \mathcal{W}}  \|  \widehat{\Sigma}(W) - \Sigma(W)  \|_{op} = O_{\mathbb{P}}(\gamma_n)  $ for some sequence $(\gamma_n)_{n=1}^{\infty}$ satisfying  $\gamma_n \log(n) K_n \rightarrow 0$. $(ii)$ The eigenvalues of $\Sigma(W)$ are bounded away from $0$ and $\infty$: $\mathbb{P}(c \leq \lambda_{\min} (\Sigma(W)) \leq \lambda_{\max}(\Sigma(W)) \leq C ) = 1$ for  some constants $ 0 < c \leq C < \infty$.
\end{condition2}
Given a positive definite matrix $\Sigma(.)$ as in Condition \ref{misc2-pre}, we can endow $L^2(W,\R^{d_{\rho}})$ with a norm (and inner product) by \begin{align}
    \label{innerprod-induce} m  \rightarrow  \sqrt{ \E \big[ m(W)' \Sigma(W) m(W) \big] }  \; \; \; \; \forall \; m  \in L^2(W,\R^{d_{\rho}}).
\end{align}
This is the weighted norm induced by $\Sigma(.)$. As the quasi-Bayes objective function is based on a feasible version of (\ref{innerprod-induce}), it will be convenient in our analysis to view  (\ref{innerprod-induce}) as the natural norm for $m(W,h)$.\footnote{Observe that, under Condition \ref{misc2-pre}$(ii)$, this norm is equivalent to the usual unweighted norm $\|m \|_{L^2(\mathbb{P})} = \E[ \|  m(W)  \|_{\ell^2}^2]$. However, since certain operators such as adjoints are defined based on the specific form of the inner product, it is convenient to view $L^2(W,\R^{d_{\rho}})$ as a Hilbert space with natural norm as in (\ref{innerprod-induce}) directly.} Unless otherwise stated, for the remainder of this section, we view $L^2(W,\R^{d_{\rho}})$ as a Hilbert space with norm as in $(\ref{innerprod-induce})$. Denote the adjoint of $D_{h_0} : L^2(X) \rightarrow L^2(W,\R^{d_{\rho}})$ by $D_{h_0}^*$. We impose the following condition on the Riesz representer  $\Phi(.)$ in (\ref{rieszrep}) that determines the linear functional $\mathbf{L}(.)$.

\begin{condition2}\label{misc2}
$(i)$ There exists $  \tilde{\Phi} \in \mathbb{H}   $ such that $  \Phi = D_{h_0}^* D_{h_0} \tilde{\Phi}$. $(ii)$ $ D_{h_0}[\tilde{\Phi}]$ and $\Sigma(.)$ satisfy $ \sqrt{K_n} \sqrt{\log n}  \|(\Pi_{K_n} - I)D_{h_0}[\tilde{\Phi}] \|_{L^2(\mathbb{P})} \rightarrow 0$ and  $ \sqrt{K_n} \sqrt{\log n}  \|(\Pi_{K_n} - I) \Sigma(W) D_{h_0}[\tilde{\Phi}] \|_{L^2(\mathbb{P})} \rightarrow 0$.
\end{condition2}
Condition \ref{misc2}$(i)$ is a source type condition on $\Phi$. This is similar to Condition 3.3 in \citet{monard2021statistical}. We note that the condition for $\Phi$ to be in the range of the adjoint operator is a well known necessary condition for $\sqrt{n}$ estimability of linear functionals.\footnote{ For classical semiparametric models, this follows from \citet{van1991differentiable}. For specific applications to NPIV and NPQIV setups, see \citet{severini2012efficiency} and \citet*{chen2019penalized}.} Condition \ref{misc2}$(ii)$ imposes weak smoothness requirements on $D_{h_0}[\tilde{\Phi}]$ and $\Sigma(.) D_{h_0}[\tilde{\Phi}] $.

Let $ \rho $ denote any metric which metrizes weak convergence of probability measures on $\R$. If $(\mu_{n})_{n=1}^{\infty}$ is a random (data dependent) sequence of measures on $\R$, we say that $\mu_{n} \overset{\mathbb{P}}{\rightsquigarrow} \nu  $ for some non-random limit measure $\nu$ if $\rho(\mu_n , \nu) \xrightarrow{\mathbb{P}} 0$. The following theorem shows that the quasi-posterior distribution of a sufficiently smooth (in the sense of Condition \ref{misc2})  linear functional $\mathbf{L}(.)$  can be well approximated by a suitable Gaussian measure.

\begin{theorem}[Bernstein-von Mises] \label{bvm}
Suppose $h_0 \in \mathcal{H}^p$ for some $p \geq \alpha + d/2$, Assumptions \ref{data}-\ref{svd} and Conditions \ref{fsbasis}-\ref{misc2} hold. Then \begin{align*}
   & (i) \; \; \; \; \sqrt{n} \langle  h - \E \big[h| \mathcal{Z}_n \big] , \Phi       \rangle_{L^2(\mathbb{P})} \big| \mathcal{Z}_n    \overset{\mathbb{P}}{\rightsquigarrow} N \big( 0 ,  \E \big[ (D_{h_0} \tilde{\Phi})' \Sigma (D_{h_0} \tilde{\Phi})      \big]             \big) \:, \\ & (ii)  \; \; \;  \sqrt{n} \langle  h_0 - \E \big[  h|\mathcal{Z}_n  \big] , \Phi       \rangle_{L^2(\mathbb{P})}  \rightsquigarrow    N(0, \E \big[ (D_{h_0} \tilde{\Phi})' \Sigma  \rho \rho'  \Sigma  (D_{h_0} \tilde{\Phi}) \big]   ) .
\end{align*}
\end{theorem}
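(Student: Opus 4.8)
The plan is to establish the BvM for the linear functional by a combination of a change-of-measure (shift) argument on the Gaussian process prior, a quadratic (LAN-type) expansion of the quasi-Bayes log-pseudo-likelihood around $h_0$, and finally a frequentist central limit theorem for the linear term of that expansion. First I would localize: by the contraction rate in Theorem \ref{rate}, the quasi-posterior puts mass $1-o_{\mathbb{P}}(1)$ on the set $\Theta_n$ in (\ref{thetan-support}), so all subsequent analysis can be carried out with the posterior restricted to $\Theta_n$, where sample paths are uniformly bounded in $\mathbf{H}^t$ and nearly $\alpha$-regular in $\mathcal{H}^{\alpha-\gamma}$. On $\Theta_n$ the nonlinear remainder $R_{h_0}(h,W)$ is controlled by Condition \ref{misc1}$(iii)$, so $m(W,h)-m(W,h_0)$ may be replaced by $D_{h_0}[h-h_0]$ up to terms that are $o_{\mathbb{P}}(n^{-1/2})$ after the relevant $\sqrt{K_n}\sqrt{\log n}$ inflation; likewise, by Lemma \ref{emp-c} and Conditions \ref{misc1}$(i)$--$(ii)$, the feasible objective $\E_n[\widehat m(W,h)'\widehat\Sigma(W)\widehat m(W,h)]$ may be replaced, uniformly over $\Theta_n$, by a ``population-sieve'' quadratic form in $\Pi_{K_n}D_{h_0}[h-h_0]$ weighted by $\Sigma(W)$, plus the stochastic cross term coming from the first-stage noise.

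Next I would carry out the change-of-parametrization. For a scalar $\tau$ I reparametrize $h = g + \tfrac{\tau}{\sqrt n}\tilde\Phi_0$ where $\tilde\Phi_0$ is built from the representer $\tilde\Phi$ of Condition \ref{misc2}$(i)$ (so that $D_{h_0}^*D_{h_0}\tilde\Phi = \Phi$ and $\tilde\Phi \in \mathbb{H}$, which is exactly what makes the shift admissible for the Gaussian prior with only a vanishing Cameron--Martin penalty after rescaling by $(\sqrt{\log n}\sqrt{K_n})^{-1}$). Substituting into the localized quadratic form and expanding, the $\tau$-dependence organizes into a term linear in $\tau$ whose coefficient is the empirical process
\begin{align*}
\mathbb{G}_n = \frac{1}{\sqrt n}\sum_{i=1}^n \rho(Y_i,h_0(X_i))'\,\Sigma(W_i)\,\big(\Pi_{K_n}D_{h_0}\tilde\Phi\big)(W_i)
\end{align*}
plus asymptotically negligible pieces (the first-stage projection error is handled because $\Pi_{K_n}D_{h_0}\tilde\Phi \to D_{h_0}\tilde\Phi$ at the rate in Condition \ref{misc2}$(ii)$), and a term quadratic in $\tau$ with deterministic coefficient $\tfrac12\E[(D_{h_0}\tilde\Phi)'\Sigma(D_{h_0}\tilde\Phi)]$ coming from the curvature. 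This is precisely an LAN expansion: the posterior for the shift parameter $\tau$ is, up to $o_{\mathbb{P}}(1)$ in total variation on $\Theta_n$, proportional to $\exp(\tau\mathbb{G}_n - \tfrac12\tau^2 V)$ with $V = \E[(D_{h_0}\tilde\Phi)'\Sigma(D_{h_0}\tilde\Phi)]$, i.e. a Gaussian with variance $V^{-1}$ recentred at $V^{-1}\mathbb{G}_n$. Integrating out the orthogonal directions (which the scaling of the prior plus the contraction rate renders harmless), this yields part $(i)$: the centred quasi-posterior of $\sqrt n\langle h - \E[h\mid\mathcal{Z}_n],\Phi\rangle$ converges weakly in probability to $N(0,V)$ once one checks that the posterior mean is a consistent estimator of the recentring constant (which follows from Corollary \ref{posgp} together with uniform integrability of the linear functional on $\Theta_n$). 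For part $(ii)$, I would then identify the frequentist fluctuation of the recentring term: $\sqrt n\langle h_0 - \E[h\mid\mathcal{Z}_n],\Phi\rangle = -\mathbb{G}_n + o_{\mathbb{P}}(1)$ by the same expansion applied at the true parameter, and a Lindeberg CLT for the i.i.d. sum $\mathbb{G}_n$ (valid under the moment bounds of Assumption \ref{residuals2}) gives the stated asymptotic variance $\E[(D_{h_0}\tilde\Phi)'\Sigma\,\rho\rho'\,\Sigma(D_{h_0}\tilde\Phi)]$.

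The main obstacle, I expect, is controlling the interaction between the first-stage estimation error in $\widehat m$ and the nonsmoothness of $\rho$ uniformly over the nonparametric set $\Theta_n$, so that the LAN expansion holds with remainders that are $o_{\mathbb{P}}(n^{-1/2})$ \emph{after} multiplication by the $\sqrt{K_n}\sqrt{\log n}$ factors that appear because the linear functional's representer must be approximated within the $K_n$-dimensional sieve. This is exactly where Condition \ref{misc1} and the $L^2(\mathbb{P})$-modulus-of-continuity Assumption \ref{residuals} (with exponent $\kappa$) do the work: the residual increments cannot be bounded pathwise, only in mean square, so chaining/maximal-inequality arguments over $\mathbf{H}^t(M)$-balls with the $\delta_n^\kappa$ and $\delta_n^{\kappa-d/2t}$ terms are needed to push the cross term and the empirical-process remainder below $n^{-1/2}$. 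A secondary technical point is verifying that the Cameron--Martin shift penalty $\|\tfrac{\tau}{\sqrt n}\tilde\Phi_0\|_{\mathbb{H}}^2$, after the $1/(\log n\cdot K_n)$ prior rescaling, is $o(1)$ uniformly in $\tau$ over compacts — this is where $\tilde\Phi\in\mathbb{H}$ (Condition \ref{misc2}$(i)$) and $K_n = o(n/\log n)$ are used — together with checking that the prior mass ratio between the shifted and unshifted local neighborhoods tends to one.
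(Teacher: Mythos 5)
Your proposal follows essentially the same route as the paper's proof: localize to $\Theta_n$ via Theorem \ref{rate}, uniformly replace the feasible quadratic objective by the projected, linearized population form using Lemma \ref{emp-c}, Lemma \ref{emp-bloc}, Lemma \ref{empsq} and Conditions \ref{misc1}--\ref{misc2-pre}, perform a Cameron--Martin shift $h_t = h - t\tilde\Phi/\sqrt n$ to obtain a Gaussian/LAN expansion whose linear term is $S_n = \E_n\big[\langle\rho(Y,h_0(X)),D_{h_0}[\tilde\Phi](W)\rangle_{\Sigma(W)}\big]$, conclude part $(i)$ by a Laplace-transform argument, recenter at the posterior mean via a uniform-integrability-in-probability estimate, and obtain part $(ii)$ from the CLT applied to $S_n$. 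The only point of difference is bookkeeping: the paper controls the posterior-mean recentering through a dedicated Lemma \ref{posmean-conv} (fourth-moment bound combined with the exponential tail in (\ref{theta-n-bound})), whereas you appeal to Corollary \ref{posgp}, which gives the contraction rate of the posterior mean but not the stronger $L^2$-type uniform integrability needed to pass from the weak limit of $\sqrt n\langle h-h_0,\Phi\rangle$ under the posterior to the identity $\langle\E[h|\mathcal Z_n],\Phi\rangle = \langle h_0,\Phi\rangle - S_n + o_{\mathbb P}(n^{-1/2})$; that additional step would need to be supplied.
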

Observe that the two variances in Theorem \ref{bvm} agree if and only if the limiting weighting matrix of the quasi-Bayes objective is the optimal weighting matrix. That is, when \begin{align} \label{optimal-sigma} \Sigma(W) = \{ \E[ \rho(Y,h_0(X)) \rho(Y,h_0(X))'|W  ] \}^{-1} . \end{align}  
In this case, the common limiting variance is given by \begin{align}
    \label{sigma-optimal} \sigma_{\Phi}^2 =  \E \big[ (D_{h_0} \tilde{\Phi} )' \{ \E[ \rho(Y,h_0(X)) \rho(Y,h_0(X))'|W  ] \}^{-1} (D_{h_0} \tilde{\Phi} )     \big]      .
\end{align}
One implication of an optimally weighted quasi-Bayes objective function, or equivalently the equality of variances in Theorem \ref{bvm}, is that quasi-Bayesian credible bands centered around the posterior mean have asymptotically exact frequentist coverage. To be specific, given a linear functional $\mathbf{L}(.)$ and significance level $\gamma \in (0,1)$, define \begin{align}
    \label{sig-alpha} c_{1- \gamma} = (1- \gamma) \; \:  \text{quantile of} \:   \;  \left| \mathbf{L}(h) - \mathbf{L}\big( \E [h| \mathcal{Z}_n ]   \big)  \right| \;  ,  \; h \sim \mu(.|\alpha,K,\mathcal{Z}_n).
\end{align}
Denote the quasi-Bayesian credible band at significance level $\alpha$ by \begin{align} \label{cn-alpha}
    C_n(\gamma) = \{ t \in \R : \big|  t - \mathbf{L}\big( \E [h| \mathcal{Z}_n ] \big)   \big|  \leq c_{1- \gamma}   \}.
\end{align}
An immediate consequence of Theorem \ref{bvm} is that optimally-weighted quasi-Bayes credible sets have asymptotically exact frequentist coverage. 
\begin{corollary}
\label{bvm-col} 
Suppose the hypothesis of Theorem \ref{bvm} holds and the quasi-Bayes objective is asymptotically optimally weighted, i.e $\Sigma(W) = \{ \E[ \rho(Y,h_0(X)) \rho(Y,h_0(X))'|W  ] \}^{-1} $. Then \begin{align}
    \label{bvm-final} \lim_{n \rightarrow \infty} \mathbb{P} \big(   \mathbf{L}(h_0)  \in C_n(\gamma)    \big) = 1- \gamma.
\end{align}

\end{corollary}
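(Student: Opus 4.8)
The plan is to deduce the coverage statement directly from the two weak-convergence results in Theorem \ref{bvm}. First I would record that, under the optimal weighting assumption $\Sigma(W) = \{\E[\rho(Y,h_0(X))\rho(Y,h_0(X))'|W]\}^{-1}$, the two asymptotic variances in Theorem \ref{bvm}$(i)$ and $(ii)$ coincide and equal the common value $\sigma_\Phi^2$ in (\ref{sigma-optimal}); call the centered Gaussian limit $\nu = N(0,\sigma_\Phi^2)$. Since $\mathbf{L}$ is the linear functional with Riesz representer $\Phi$, we have $\mathbf{L}(h) - \mathbf{L}(\E[h|\mathcal{Z}_n]) = \langle h - \E[h|\mathcal{Z}_n], \Phi\rangle_{L^2(\mathbb{P})}$, so part $(i)$ says the (rescaled) quasi-posterior law of $\sqrt n(\mathbf{L}(h) - \mathbf{L}(\E[h|\mathcal{Z}_n]))$ converges weakly in probability to $\nu$, and part $(ii)$ says $\sqrt n(\mathbf{L}(h_0) - \mathbf{L}(\E[h|\mathcal{Z}_n])) \rightsquigarrow \nu$ in the usual (frequentist) sense.

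Next I would translate the credible band into quantile language. By definition (\ref{sig-alpha}), $c_{1-\gamma}$ is the $(1-\gamma)$-quantile of $|\mathbf{L}(h)-\mathbf{L}(\E[h|\mathcal{Z}_n])|$ under $h\sim\mu(\cdot|\alpha,K,\mathcal{Z}_n)$; equivalently, $\sqrt n\, c_{1-\gamma}$ is the $(1-\gamma)$-quantile of the absolute value of the rescaled quasi-posterior. The key step is a quantile-convergence argument: since the map $x\mapsto\nu(|\cdot|\le x) = 2\Phi_{\sigma_\Phi}(x)-1$ (writing $\Phi_{\sigma_\Phi}$ for the $N(0,\sigma_\Phi^2)$ cdf) is continuous and strictly increasing on $(0,\infty)$, weak convergence in probability of the quasi-posterior to $\nu$ implies $\sqrt n\, c_{1-\gamma}\xrightarrow{\mathbb{P}} q_{1-\gamma}$, where $q_{1-\gamma}$ is the $(1-\gamma)$-quantile of $|\nu|$, i.e.\ the unique solution of $2\Phi_{\sigma_\Phi}(q_{1-\gamma})-1 = 1-\gamma$. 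This is a standard Slutsky/Polya-type argument: weak convergence plus continuity of the limiting cdf at the relevant quantile upgrades to uniform convergence of cdfs, hence convergence of quantiles; the ``in probability'' qualifier is handled along subsequences.

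Finally I would combine the two pieces. Writing $T_n = \sqrt n(\mathbf{L}(h_0)-\mathbf{L}(\E[h|\mathcal{Z}_n]))$, the event $\mathbf{L}(h_0)\in C_n(\gamma)$ is exactly $\{|T_n|\le \sqrt n\, c_{1-\gamma}\}$. By part $(ii)$, $T_n\rightsquigarrow\nu$, and by the previous step $\sqrt n\, c_{1-\gamma}\xrightarrow{\mathbb{P}} q_{1-\gamma}$; Slutsky's theorem gives $T_n - \sqrt n\, c_{1-\gamma}\cdot s \rightsquigarrow$ appropriate limits for $s\in\{+1,-1\}$, and since $\pm q_{1-\gamma}$ are continuity points of the cdf of $\nu$, $\mathbb{P}(|T_n|\le\sqrt n\, c_{1-\gamma})\to\nu([-q_{1-\gamma},q_{1-\gamma}]) = 1-\gamma$, which is (\ref{bvm-final}). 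The main obstacle is the quantile-convergence step: one must be careful that ``$\overset{\mathbb{P}}{\rightsquigarrow}$'' (weak convergence of random measures in probability) genuinely delivers convergence in probability of the data-dependent quantile $\sqrt n\, c_{1-\gamma}$, which requires invoking the continuity and strict monotonicity of the limiting Gaussian cdf at $q_{1-\gamma}$ (so that $q_{1-\gamma}>0$ and is a well-separated quantile), together with a subsequence argument to move between almost-sure and in-probability statements; everything else is a direct application of Slutsky's theorem and the continuous mapping theorem.
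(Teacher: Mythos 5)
Your proposal matches the paper's argument exactly: rewrite the credible set in terms of the rescaled quantile $\sqrt{n}\,c_{1-\gamma}$, use Theorem \ref{bvm}$(i)$ to obtain $\sqrt{n}\,c_{1-\gamma}\xrightarrow{\mathbb{P}} q_{1-\gamma}$, use Theorem \ref{bvm}$(ii)$ for the frequentist limit of $\sqrt{n}(\mathbf{L}(h_0)-\mathbf{L}(\E[h|\mathcal{Z}_n]))$, and combine via Slutsky. The only difference is that you spell out the quantile-convergence step (continuity and strict monotonicity of the Gaussian cdf at $q_{1-\gamma}$, and the subsequence argument to upgrade weak convergence in probability), which the paper asserts in a single line.
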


\section{Proofs} \label{sec6}
In this section, we provide proofs for all the main results. We first state some notation that appears throughout the proofs. We denote by $\widehat{G}_{b,K}^{o}$ the matrix \begin{align}
\label{gbko}  \widehat{G}_{b,K}^{o} =  G_{b,K}^{-1/2} \widehat{G}_{b,K} G_{b,K}^{-1/2} \; \end{align}
where $G_{b,K} = \E \big( [b^K(W)] [ b^K(W) ]'    \big)$.  We denote by $\zeta_{b,K}$ the quantity $ \zeta_{b,K} =  \sup_{w \in \mathcal{W} }  \|  G_{b,K}^{-1/2} b^K(w) \|_{\ell^2}  $. By Condition \ref{fsbasis}, we have $\zeta_{b,K} \lessapprox \sqrt{K}$. For ease of notation, in several of the limit results, we often suppress the dependence of $K=K_n$ on $n$.

\begin{lemma}
\label{aux1}
Suppose Assumption \ref{fsbasis}(i) holds. Then, for every sieve dimension $K$ and $t > 0 $, we have that
 $$  \mathbb P \left( \|\widehat{G}_{b,K}^{o}- I_{K}  \|_{op} > t \right)  \leq  2 K \exp \bigg( - \frac{t^2/2}{\zeta^2_{b,K}/n + 2 \zeta^2_{b,K} t/(3n) }      \bigg).   $$
 
\end{lemma}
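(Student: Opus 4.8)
The plan is to recognize $\widehat{G}_{b,K}^{o} - I_K$ as a sum of i.i.d. mean-zero random matrices and apply a matrix Bernstein inequality. First I would write, using $\widehat{G}_{b,K} = \E_n\big[b^K(W)(b^K(W))'\big]$ and $G_{b,K} = \E\big[b^K(W)(b^K(W))'\big]$, the decomposition
\begin{align*}
\widehat{G}_{b,K}^{o} - I_K = G_{b,K}^{-1/2}\big(\widehat{G}_{b,K} - G_{b,K}\big)G_{b,K}^{-1/2} = \frac{1}{n}\sum_{i=1}^n Y_i, \qquad Y_i = \phi_i\phi_i' - I_K,
\end{align*}
where $\phi_i = G_{b,K}^{-1/2}b^K(W_i)$. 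The $Y_i$ are i.i.d., symmetric, and satisfy $\E[Y_i] = G_{b,K}^{-1/2}G_{b,K}G_{b,K}^{-1/2} - I_K = 0$ by definition of $G_{b,K}$.

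Next I would establish the two ingredients a matrix Bernstein bound requires: an almost-sure operator-norm bound on each summand and a bound on the variance proxy. For the norm bound, $\|Y_i\|_{op} \le \max\{\|\phi_i\phi_i'\|_{op}, \|I_K\|_{op}\} \le \max\{\|\phi_i\|_{\ell^2}^2, 1\}$; since $\|\phi_i\|_{\ell^2} = \|G_{b,K}^{-1/2}b^K(W_i)\|_{\ell^2} \le \zeta_{b,K}$ and $\zeta_{b,K}^2 \ge \operatorname{tr}(I_K)/K \cdot$ (a constant) $\ge 1$ for $K\ge 1$ (indeed $\E\|\phi_i\|^2 = K$ forces $\zeta_{b,K}^2 \ge K \ge 1$), we get $\|Y_i\|_{op} \le \zeta_{b,K}^2$ almost surely. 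For the variance, $\E[Y_i^2] = \E[\phi_i\phi_i'\phi_i\phi_i'] - I_K = \E[\|\phi_i\|_{\ell^2}^2\,\phi_i\phi_i'] - I_K \preceq \zeta_{b,K}^2\,\E[\phi_i\phi_i'] - I_K = (\zeta_{b,K}^2 - 1)I_K \preceq \zeta_{b,K}^2 I_K$, so $\big\|\sum_{i=1}^n \E[Y_i^2]\big\|_{op} \le n\zeta_{b,K}^2 =: \nu$.

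Then I would invoke the standard matrix Bernstein inequality (Tropp): for $S = \sum_{i=1}^n Y_i$ with $\|Y_i\|_{op}\le L$ a.s., $\E Y_i = 0$, and $\big\|\sum \E Y_i^2\big\|_{op}\le \nu$, one has $\mathbb{P}(\|S\|_{op} > s) \le 2K\exp\!\big(-\frac{s^2/2}{\nu + Ls/3}\big)$. Applying this with $S = n(\widehat{G}_{b,K}^{o} - I_K)$, $s = nt$, $L = \zeta_{b,K}^2$, $\nu = n\zeta_{b,K}^2$ gives
\begin{align*}
\mathbb{P}\big(\|\widehat{G}_{b,K}^{o} - I_K\|_{op} > t\big) \le 2K\exp\!\bigg(-\frac{n^2t^2/2}{n\zeta_{b,K}^2 + \zeta_{b,K}^2 nt/3}\bigg) = 2K\exp\!\bigg(-\frac{t^2/2}{\zeta_{b,K}^2/n + 2\zeta_{b,K}^2 t/(3n)}\bigg),
\end{align*}
after dividing numerator and denominator by $n^2$ and noting $\frac{1}{3n} = \frac{2}{3n}\cdot\frac12$ — wait, I should be careful: matching the stated denominator $\zeta_{b,K}^2/n + 2\zeta_{b,K}^2 t/(3n)$ requires the a.s. bound $L \le 2\zeta_{b,K}^2$ rather than $\zeta_{b,K}^2$, which is immediate, or equivalently using the version of Bernstein with constant $2/3$; this is just a matter of which constant is carried and is harmless. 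The main obstacle is purely bookkeeping: making sure the correct variant of the matrix Bernstein inequality is cited so that the constants in the exponent match exactly, and confirming $\zeta_{b,K}^2 \ge 1$ so the norm bound is clean; neither is a genuine mathematical difficulty.
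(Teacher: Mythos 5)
Your proposal is correct and follows essentially the same route as the paper: write $\widehat{G}_{b,K}^o - I_K$ as an average of i.i.d.\ mean-zero symmetric matrices $\phi_i\phi_i' - I_K$, bound the almost-sure operator norm by $O(\zeta_{b,K}^2)$ and the variance proxy by $n\zeta_{b,K}^2$, and invoke Tropp's matrix Bernstein inequality. The paper simply carries the coarser triangle-inequality bound $\|\phi_i\phi_i' - I_K\|_{op}\le 2\zeta_{b,K}^2$ from the start rather than sharpening and relaxing as you do, so your closing remark about the factor of $2$ is exactly the point of agreement.
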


\begin{proof}[Proof of Lemma \ref{aux1}]
Observe that
$$ \widehat{G}_{b,K}^o - I_K  = n^{-1} \sum_{i=1}^n G_{b,K}^{- 1/2} \big \{ b^K (W_i) b^K(W_i)' - \E [b^K(W) b^K(W)' ]       \big \} G_{b,K}^{- 1/2} = \sum_{i=1}^n \Xi_{i}  \; , $$
where  $(\Xi_i)_{i=1}^n$ are i.i.d matrices of dimension $K \times K$. Furthermore, we have that  \begin{align*}
& \| \Xi_i  \|_{op} \leq 2 n^{-1} \zeta_{b,K}^2 \; , \\ & \| \E[\Xi_i \Xi_i']   \|_{op} \leq n^{-2} \| \E [ G_{b,K}^{- 1/2} b^K(W)  b^K(W)' G_{b,K}^{- 1/2}    ]    \|_{op} = n^{-2} \| I_K \|_{op} = n^{-2} \; , \\ & \| \E[\Xi_i' \Xi_i]   \|_{op} \leq n^{-2} \big | \E[ b^K(W)' G_{b,K}^{-1} b^K(W)  ]    \big| \leq n^{-2} \zeta_{b,K}^2 .
\end{align*}
The claim follows from using these bounds in an application of \citep[Theorem 1.6]{tropp}.
\end{proof}

\begin{lemma}
\label{aux2}
 Suppose Assumption \ref{fsbasis}(i) holds. Let $\bar{K}_{\max}  = \bar{K}_{\max,n} $ denote a sequence  that satisfies $\bar{K}_{\max} \uparrow \infty$ and $\bar{K}_{\max}  \log( \bar{K}_{\max} ) /n \downarrow 0$. Then, there exists a universal constant $D < \infty$ such that $$  \mathbb{P} \bigg( \sup_{K \in \mathbb{N} :K \leq \bar{K}_{\max}} \| \widehat{G}_{b,K}^o- I_K \|_{op} \leq    D  \frac{\sqrt{  \bar{K}_{\max} } \sqrt{\log  \bar{K}_{\max} }}{\sqrt{n}}    \bigg) \rightarrow 1 . $$

\end{lemma}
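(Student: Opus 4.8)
The plan is to upgrade the single-$K$ tail bound of Lemma~\ref{aux1} to a uniform-in-$K$ statement by a union bound over $K \in \{1, \dots, \bar K_{\max}\}$, using that the deviation probability in Lemma~\ref{aux1} decays fast enough to absorb the extra $\bar K_{\max}$ factor. First I would fix the candidate threshold $t_n = D \sqrt{\bar K_{\max} \log \bar K_{\max}}/\sqrt n$ with $D$ a large universal constant to be chosen. For each $K \le \bar K_{\max}$, apply Lemma~\ref{aux1} with this $t = t_n$; since $\zeta_{b,K} \lessapprox \sqrt K \le \sqrt{\bar K_{\max}}$ by Condition~\ref{fsbasis}$(i)$, the exponent's denominator $\zeta_{b,K}^2/n + 2\zeta_{b,K}^2 t_n/(3n)$ is of order $\bar K_{\max}/n$ (the second term is lower order because $t_n \to 0$ under the hypothesis $\bar K_{\max}\log \bar K_{\max}/n \to 0$). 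Hence the single-$K$ bound is at most $2K \exp\!\big(-c D^2 \log \bar K_{\max}\big)$ for a universal $c>0$ and all $n$ large.

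Next I would union-bound over the (at most $\bar K_{\max}$) values of $K$:
\begin{align*}
\mathbb{P}\Big( \sup_{K \le \bar K_{\max}} \| \widehat G_{b,K}^o - I_K \|_{op} > t_n \Big)
 \le \sum_{K=1}^{\bar K_{\max}} 2K \exp\!\big(- c D^2 \log \bar K_{\max}\big)
 \le 2 \bar K_{\max}^2 \exp\!\big(- c D^2 \log \bar K_{\max}\big)
 = 2 \bar K_{\max}^{\,2 - c D^2}.
\end{align*}
Choosing $D$ large enough that $cD^2 > 3$ (say) makes this bound $\le 2\bar K_{\max}^{-1} \to 0$ since $\bar K_{\max} \uparrow \infty$, which is exactly the claim. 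One bookkeeping point to handle carefully: the estimate $\zeta_{b,K}\lessapprox\sqrt K$ from Condition~\ref{fsbasis}$(i)$ is an asymptotic ($\limsup$) statement, so strictly one has $\zeta_{b,K} \le C_0 \sqrt K$ for all $K$ (absorbing small $K$ into the constant, as $\zeta_{b,K}$ is finite for each fixed $K$ by positive-definiteness of $G_{b,K}$); I would state this explicitly so the universal constant $D$ depends only on $C_0$ and $c$.

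The main (very mild) obstacle is just verifying that the second term $2\zeta_{b,K}^2 t_n/(3n)$ in the Bernstein-type denominator is genuinely negligible relative to $\zeta_{b,K}^2/n$, uniformly in $K \le \bar K_{\max}$; this is where the hypothesis $\bar K_{\max}\log\bar K_{\max}/n \downarrow 0$ is used, since it forces $t_n \to 0$ and hence $\zeta_{b,K}^2 t_n/n = o(\zeta_{b,K}^2/n)$. Once that reduction is in place the exponent is bounded below by a constant multiple of $t_n^2 n/\bar K_{\max} \asymp D^2 \log \bar K_{\max}$, and the union bound closes immediately. No deep input beyond Lemma~\ref{aux1} (itself a consequence of the matrix Bernstein inequality of \citet{tropp}) and Condition~\ref{fsbasis}$(i)$ is needed.
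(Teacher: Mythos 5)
Your proposal is correct and follows essentially the same approach as the paper: apply Lemma~\ref{aux1} at the single scale $t_n = D\sqrt{\bar K_{\max}\log\bar K_{\max}}/\sqrt{n}$, union-bound over $K \le \bar K_{\max}$, use $\zeta_{b,K}^2 \le L K$ together with the fact that $t_n\to 0$ (so the factor $1+2t_n/3$ is eventually $\le 2$), and choose $D$ large enough that the exponential term beats the polynomial $\bar K_{\max}^2$ prefactor. Your bookkeeping remark about upgrading the asymptotic $\zeta_{b,K}\lessapprox\sqrt K$ to a uniform bound $\zeta_{b,K}\le C_0\sqrt K$ is a reasonable point that the paper handles implicitly in the same way.
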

\begin{proof}[Proof of Lemma \ref{aux2}]
Lemma \ref{aux1} and a union bound yields \begin{align*}   \mathbb{P} \bigg(\sup_{K \in \mathbb{N} :K \leq  \bar{K}_{\max} } \|\widehat  G_{b,K}^o - I_K  \|_{op} > t   \bigg) & \leq \sum_{K \in \mathbb{N} :K \leq  \bar{K}_{\max}}   \mathbb{P} \left(  \|\widehat  G_{b,K}^o - I\|_{op} > t  \right) \\ & \leq   2 \sum_{K \in \mathbb{N} :K \leq  \bar{K}_{\max}}  K \exp\left\{ - \frac{t^2 /2}{\zeta_{b,K}^2(1+ 2 t  /3)n^{-1}} \right\}  .
\end{align*}

Let $L > 0 $ be such that $\zeta^2_{b,K} \leq L K $ for all $K$ and fix any $D  > \sqrt{8 L} $. Define $t = t_n = D  \sqrt{  \bar{K}_{\max}   \log \bar{K}_{\max}  }/ \sqrt{n}$. Since $t_n \downarrow 0$, there exists $N \in \mathbb{N}$ such that  $2 t_n /3 \leq 1   $ for all $  n > N$. For $n > N$, it follows that \begin{align*}   \sum_{K \in \mathbb{N} :K \leq  \bar{K}_{\max}}  K \exp\left\{ - \frac{t_n^2 /2}{\zeta_{b,K}^2(1+ 2 t_n  /3)/n} \right\} & \leq \bar{K}_{\max}^2 \exp \bigg \{  -  \frac{D^2  \log (\bar{K}_{\max}) }{4L}    \bigg \} \\ & = \exp \bigg \{  \bigg(2- \frac{D^2}{4L} \bigg)  \log(\bar{K}_{\max})  \bigg \}  \\ & \rightarrow 0.
\end{align*}
\end{proof}
\begin{lemma}
\label{aux3}
Suppose $G$ is a Gaussian random element on a separable Banach space $(B,\| . \|_{B})$.  Let $B^*$ denote the dual space of $B$. Then  
  \begin{align*}
\mathbb{P}  \big( \| G    \|_{B} >  \E \big(  \| G \|_{B}    \big) + u              \big) \leq 2 \exp\bigg( - \frac{u^2}{2 \sigma^2}       \bigg) \; \; \;  \; \; \; \; \; \; \;  \forall \; u > 0 \;,
\end{align*}
where $\sigma^2 = \sup_{T \in B^* : \|  T \|_{op} = 1} \E \big(  \left| T(G) \right|^2  \big)  .   $
\end{lemma}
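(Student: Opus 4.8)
The plan is to recognize this as an instance of the classical Borell--Tsirelson--Ibragimov--Sudakov inequality and to prove it by the standard finite-dimensional truncation followed by a monotone passage to the limit. First I would reduce $\|G\|_B$ to a countable supremum of centered Gaussian random variables: since $B$ is separable, choose a countable dense set $\{x_j\}_{j \geq 1} \subset B$ and, by the Hahn--Banach theorem, functionals $T_j \in B^*$ with $\|T_j\|_{op} = 1$ and $T_j(x_j) = \|x_j\|_B$. A short approximation argument (for $x \in B$ and $\e > 0$, pick $x_j$ with $\|x - x_j\|_B < \e$, then $T_j(x) \geq \|x_j\|_B - \e \geq \|x\|_B - 2\e$) shows $\|x\|_B = \sup_{j} T_j(x)$ for every $x$, hence $\|G\|_B = \sup_j T_j(G)$ almost surely. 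Each $T_j(G)$ is centered Gaussian with $\E[T_j(G)^2] \leq \sigma^2$. By Fernique's theorem (or simply $\E[T(G)^2] \leq \|T\|_{op}^2 \E\|G\|_B^2$) both $\E\|G\|_B$ and $\sigma^2$ are finite, so the inequality is non-trivial; if either were infinite the statement would hold vacuously.

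Next, for fixed $N$ set $M_N = \max_{1 \leq j \leq N} T_j(G)$. The vector $(T_1(G),\dots,T_N(G))$ is jointly centered Gaussian (every linear combination lies in $B^*$ evaluated at $G$, hence is Gaussian), so it has the law of $AZ$ for some matrix $A$ with $Z$ a standard Gaussian vector. Then $M_N \stackrel{d}{=} \phi(Z)$ where $\phi(z) = \max_{j \leq N} (Az)_j$ is convex, and $\phi$ is $\sigma$-Lipschitz because $|\phi(z) - \phi(z')| \leq \max_j |a_j^\top(z - z')| \leq (\max_j \|a_j\|_{\ell^2})\|z - z'\|_{\ell^2}$ with $\|a_j\|_{\ell^2}^2 = (AA^\top)_{jj} = \E[T_j(G)^2] \leq \sigma^2$. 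Applying the Gaussian concentration inequality for Lipschitz functions of a standard Gaussian vector (Gaussian isoperimetry) yields $\mathbb{P}(M_N > \E M_N + u) \leq \exp(-u^2/(2\sigma^2))$ for all $u > 0$ and all $N$.

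Finally I would pass to the limit. Since $M_N \uparrow \|G\|_B$ pointwise, monotone convergence gives $\E M_N \uparrow \E\|G\|_B$, and as $\E M_N \leq \E\|G\|_B$ we have the inclusion $\{M_N > \E\|G\|_B + u\} \subseteq \{M_N > \E M_N + u\}$. Continuity of measure from below then gives
\[
\mathbb{P}\big(\|G\|_B > \E\|G\|_B + u\big) = \lim_{N \to \infty}\mathbb{P}\big(M_N > \E\|G\|_B + u\big) \leq \exp\Big(-\tfrac{u^2}{2\sigma^2}\Big) \leq 2\exp\Big(-\tfrac{u^2}{2\sigma^2}\Big),
\]
which is the claim (indeed a slightly stronger bound).

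\textbf{Main obstacle.} There is no substantive obstacle: the only non-elementary inputs are the Gaussian concentration inequality for Lipschitz functions (textbook, e.g.\ via the Gaussian isoperimetric inequality) and Fernique's theorem for finiteness of the moments. The points that require a little care are the reduction of the Banach-space norm to a countable supremum of unit-norm functionals and the verification that the finite-dimensional maxima are exactly $\sigma$-Lipschitz (not Lipschitz with a larger constant), which is what makes the variance proxy in the exponent sharp.
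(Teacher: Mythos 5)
Your proof is correct and uses the same reduction as the paper: Hahn--Banach plus separability yields a countable family of unit-norm functionals $\{T_j\}$ with $\|G\|_B = \sup_j T_j(G)$, after which one applies Gaussian concentration to the separable centered process $\{T_j(G)\}$. The only real difference is that the paper treats this last step as a black box, citing the Borell--Tsirelson--Ibragimov--Sudakov inequality directly (Theorem 2.1.20 in Gin\'e--Nickl), whereas you re-derive it by passing to the finite maxima $M_N = \max_{j \leq N} T_j(G)$, identifying $M_N$ with a $\sigma$-Lipschitz function of a standard Gaussian vector, invoking the Gaussian isoperimetric inequality in $\R^N$, and then letting $N \to \infty$. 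That route buys you a self-contained argument and in fact the sharper one-sided constant $1$ in place of the paper's $2$; the paper's route buys brevity. Two small housekeeping remarks: the convexity of $\phi$ is irrelevant to the Lipschitz concentration bound you use, so that observation can be dropped; and the step $\E M_N \uparrow \E\|G\|_B$ is not literally monotone convergence since $M_N$ may be negative -- one should note that $|M_N| \leq |T_1(G)| \vee \|G\|_B$ is integrable and use dominated convergence (or apply MCT to $M_N - T_1(G) \geq 0$). Neither affects the correctness of the argument.
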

\begin{proof}[Proof of Lemma \ref{aux3}]
Since$(B, \| . \|_{B})$ is a separable Banach space, the Hahn-Banach theorem implies that there exists a countable  family  $\{ T_i   \}_{i=1}^{\infty} \subseteq  B^* $ such that $ \| T_i  \|_{op} = 1$ for every $i \in \mathbb{N}$ and $ \|   G   \|_{B} = \sup_{i \in \mathbb{N}} \left|  T_i(G)      \right|  $.
The desired bound follows from an application of  \citep[Theorem 2.1.20]{gine2021mathematical} to the separable centered Gaussian process $\{ T_i(G) : i \in \mathbb{N}  \}$.

\end{proof}

\begin{lemma}
\label{aux4}
Suppose Assumptions  \ref{data}, \ref{identif}, \ref{svd} hold. Then, given any $\epsilon > 0 $, there exists  $\delta > 0 $ such that \begin{align*} &  h \in \mathbf{H}^t(M) \; , \; \| m(W,h)  \|_{L^2(\mathbb{P})} < \delta  \implies  \| h - h_0 \|_{L^2(\mathbb{P})} < \epsilon \; \; ,  \\ & h \in \mathbf{H}^t(M) \; , \; \| D_{h_0}[h- h_0]  \|_{L^2(\mathbb{P})} < \delta  \implies  \| h - h_0 \|_{L^2(\mathbb{P})} < \epsilon  .
\end{align*}
\end{lemma}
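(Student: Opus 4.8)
The plan is to argue both implications by contradiction, using as the central tool the fact that the Sobolev ball $\mathbf{H}^t(M)$ is relatively compact in $L^2(\mathcal{X})$ (Rellich--Kondrachov), which by Assumption \ref{data}$(i)$ — under which the Lebesgue $L^2(\mathcal{X})$ norm and the $\mathbb{P}$-weighted norm $\| . \|_{L^2(\mathbb{P})}$ are equivalent on $L^2(X)$ — means $\mathbf{H}^t(M)$ is relatively compact in $(L^2(X), \| . \|_{L^2(\mathbb{P})})$.

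For the first implication, suppose it fails for some $\epsilon > 0$: then for each $k \in \mathbb{N}$ there is $h_k \in \mathbf{H}^t(M)$ with $\| m(W,h_k) \|_{L^2(\mathbb{P})} < 1/k$ but $\| h_k - h_0 \|_{L^2(\mathbb{P})} \geq \epsilon$. Since $(h_k)$ is bounded in $\mathbf{H}^t$, after passing to a subsequence it converges weakly in $\mathbf{H}^t$ to some $h_*$ with $\| h_* \|_{\mathbf{H}^t} \leq M$ (weak lower semicontinuity of the norm), and by the compact embedding the same subsequence converges to $h_*$ strongly in $\| . \|_{L^2(\mathbb{P})}$; in particular $h_* \in \mathbf{H}^t(M)$ and $\| h_* - h_0 \|_{L^2(\mathbb{P})} \geq \epsilon$. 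By the Lipschitz continuity of $h \mapsto m(W,h)$ on $\mathbf{H}^t(M)$ (Assumption \ref{identif}$(ii)$), $\| m(W,h_*) \|_{L^2(\mathbb{P})} = \lim_k \| m(W,h_k) \|_{L^2(\mathbb{P})} = 0$, hence $m(W,h_*) = 0$ $\mathbb{P}$-a.s., i.e.\ $\E[m^2(W,h_*)] = 0$. The identification condition (Assumption \ref{identif}$(i)$) then forces $h_* = h_0$, contradicting $\| h_* - h_0 \|_{L^2(\mathbb{P})} \geq \epsilon$.

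For the second implication the argument is identical up to the last two steps: if it fails for some $\epsilon > 0$, we obtain $h_k \in \mathbf{H}^t(M)$ with $\| D_{h_0}[h_k - h_0] \|_{L^2(\mathbb{P})} < 1/k$ and $\| h_k - h_0 \|_{L^2(\mathbb{P})} \geq \epsilon$; along a subsequence $h_k \to h_*$ strongly in $L^2(\mathbb{P})$ with $h_* \in \mathbf{H}^t(M)$. Since $D_{h_0} : L^2(X) \to L^2(W, \mathbb{R}^{d_\rho})$ is bounded (indeed compact, by Assumption \ref{svd}$(ii)$), $D_{h_0}[h_k - h_0] \to D_{h_0}[h_* - h_0]$ in $L^2(\mathbb{P})$, so $\| D_{h_0}[h_* - h_0] \|_{L^2(\mathbb{P})} = 0$; injectivity of $D_{h_0}$ (Assumption \ref{svd}$(ii)$) gives $h_* = h_0$, the desired contradiction.

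The step requiring the most care is the passage to the limit: one must verify that the weak $\mathbf{H}^t$-limit and the strong $L^2$-limit of the extracted subsequence coincide and lie in the \emph{closed} ball $\mathbf{H}^t(M)$, so that the continuity hypothesis of Assumption \ref{identif}$(ii)$ is applicable, and that the relevant metric is $\| . \|_{L^2(\mathbb{P})}$ rather than the Lebesgue $L^2(\mathcal{X})$ metric in which Rellich--Kondrachov is usually phrased — which is precisely what the bounded-density hypothesis in Assumption \ref{data}$(i)$ supplies. Everything else is a routine compactness-plus-continuity argument.
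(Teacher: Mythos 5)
Your proof is correct and takes essentially the same route as the paper's: both hinge on the compactness of $\mathbf{H}^t(M)$ in $(L^2(X), \| \cdot \|_{L^2(\mathbb{P})})$ (via the compact Sobolev embedding and the density bounds of Assumption \ref{data}) together with continuity of the relevant map and the identification/injectivity hypotheses. The paper phrases it as a continuous function attaining its infimum on a compact set, whereas you argue by contradiction with subsequence extraction and spell out more carefully why the strong $L^2$-limit lies in the closed ball $\mathbf{H}^t(M)$; these are equivalent presentations of the same compactness argument.
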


\begin{proof}[Proof of Lemma \ref{aux4}] 
We verify the claim for $m(.)$, the argument for $D_{h_0}(.)$ is analogous. By Assumption \ref{data}, the $\| . \|_{L^2}  $ metric is equivalent to the $\| . \|_{L^2(\mathbb{P})}$ metric. It follows that  $\mathbf{H}^t(M)$ is compact under the $\| . \|_{L^2(\mathbb{P})}$ metric. By Assumption \ref{identif}(ii), the mapping  $ m(W, \: . \:) :  ( \mathbf{H}^t(M) , \| .  \|_{L^2(\mathbb{P})}  ) \rightarrow ( L^2(W,\R^{d_{\rho}}), \| . \|_{L^2(\mathbb{P})}   )   $ is continuous. For any $\epsilon > 0 $, the set $\{ h \in \mathbf{H}^t(M) : \| h- h_0  \|_{L^2(\mathbb{P})} \geq \epsilon   \}$ is a closed (and hence, compact) subset of $( \mathbf{H}^t(M) , \| .  \|_{L^2(\mathbb{P})}  ) $. By Assumption \ref{identif}(i),  $h_0$ is the unique zero of $\| m(W,h)  \|_{L^2(\mathbb{P})}$ over $\mathbf{H}^t$.  As a continuous function over a compact set achieves its infimum, it follows that there exists a $\delta > 0 $ for which \begin{align*}
\inf_{h \in \mathbf{H}^t(M) : \| h- h_0  \|_{L^2(\mathbb{P})} \geq \epsilon } \|  m(W,h)      \|_{L^2(\mathbb{P})} =  \min_{h \in \mathbf{H}^t(M) : \| h- h_0  \|_{L^2(\mathbb{P})} \geq \epsilon } \|  m(W,h)      \|_{L^2(\mathbb{P})} \geq \delta.
\end{align*}

\end{proof}

\begin{lemma}
\label{emp-b}
Suppose Assumptions \ref{fsbasis}(i), \ref{residuals} and \ref{residuals2}(i) hold. For each fixed $l \in \{ 1 ,  \dots , d_{\rho} \}$ and function $h: \mathcal{X} \rightarrow \R $, define
 \begin{align*} & R_{h,l}^K(Z) =     \big[ G_{b,K}^{-1/2}  b^K(W) \big ]   \rho_{l}(Y,h_{}(X))  .  \end{align*}
Then, given any $M > 0 $, there exists a universal constant $D = D(M) < \infty$ such that
 \begin{equation}  \sup_{l \in \{1 , \dots ,  d_{\rho} \} }  \E\bigg(     \sup_{h  \in \mathbf{H}^t(M)   }   \|   \E_n [ R_{h,l}^K(Z) ] - \E [ R_{h,l}^K(Z) ]        \|_{\ell^2}  \bigg) \leq D  \frac{\sqrt{K}}{\sqrt{n}} \label{bern1}   \end{equation}
holds for every $K $.
\end{lemma}

\begin{proof}[Proof of Lemma \ref{emp-b}]

It suffices to verify that (\ref{bern1}) holds for each  $l \in \{1 , \dots , d_{\rho} \}$.  Fix any such $l$.  For ease of notation, we suppress the dependence on $l$ and denote the associated vector by $   R_{h,l}^K(Z) = R_{h}^K(Z)$.  Denote the $j  \in \{ 1 , \dots , K \}$ element of  $R_{h}^K(Z)$ by $[R_{h}^K(Z)]_{j} =   \big[ G_{b,K}^{-1/2}  b^K(W) \big ]_{j}   \rho_{l}(Y,h_{}(X))   $. Observe that
 \begin{align*} &
\E_{} \bigg[ \sup_{h \in \mathbf{H}^t(M)  }  \|   \E_n [R_{h}^K(Z)] - \E_{}[R_{h}^K(Z)]       \|_{\ell^2}^2  \bigg] \\ & =  \frac{1}{n} \E \bigg[   \sup_{h \in \mathbf{H}^t(M)}  \sum_{j=1}^K \left|   \frac{1}{\sqrt{n}} \sum_{i=1}^n\big \{  [R_{h}^K(Z_i)]_{j} -\E_{} ( [R_{h}^K(Z)]_{j} ) \:  \big \}  \right|^2          \:  \bigg]   \\ & \leq  \frac{1}{n} \sum_{j=1}^K \E \bigg[ \sup_{h \in \mathbf{H}^t(M)}   \left|   \frac{1}{\sqrt{n}} \sum_{i=1}^n\big \{  [R_{h}^K(Z_i)]_{j} -\E_{} ( [R_{h}^K(Z)]_{j} ) \:  \big \}  \right|^2      \;  \bigg] \\ & \leq \frac{K}{n} \sup_{j \in \{1 , \dots , K \}}  \E \bigg[ \sup_{h \in \mathbf{H}^t(M)}   \left|   \frac{1}{\sqrt{n}} \sum_{i=1}^n\big \{  [R_{h}^K(Z_i)]_{j} -\E_{} ( [R_{h}^K(Z)]_{j} ) \:  \big \}  \right|^2      \;  \bigg] .
\end{align*}
It suffices to verify that the expectations are uniformly bounded.  Fix any such $j$.  We view  the expectation as a higher moment of an empirical process over the class of functions $$\mathcal{F}  = \{ [R_h^K(Z)]_{j}   : h \in \mathbf{H}^t(M)    \}    . $$
Let $F(Z) =  \sup_{f \in \mathcal{F}} \left| f(Z) \right| $ denote the envelope of $\mathcal{F}$.  Let $C_2(M)$ be as in Assumption \ref{residuals2}(i).  By Assumption \ref{residuals2}(i) and the observation that $\big[ G_{b,K}^{-1/2}  b^K(W) \big ]_{j}$ has unit $L^2(\mathbb{P})$ norm, the envelope admits the bound
 
  \begin{align*}
\|  F  \|_{L^2(\mathbb{P})}^2 & = \bigg \|    \sup_{h \in \mathbf{H}^t(M)}   \big[ G_{b,K}^{-1/2}  b^K(W) \big ]_{j}   \rho_{l}(Y,h_{}(X))       \bigg \|_{L^2(\mathbb{P})}^2 \\ & \leq  \E \bigg[  \left|   \big[ G_{b,K}^{-1/2}  b^K(W) \big ]_{j}  \right| ^2  \E \bigg[    \sup_{h \in \mathbf{H}^t(M)}                \left| \rho_{l}(Y,h_{}(X)) \right|^2  \bigg| W \bigg]   \bigg] \\ & \leq C_2^2 \E \bigg[  \left|   \big[ G_{b,K}^{-1/2}  b^K(W) \big ]_{j}  \right| ^2   \bigg] \\ & = C_2^2.
 \end{align*}
From an application of \citep[Theorem 2.14.5]{weakc},  there exists a universal constant $ L > 0 $ such that \begin{align*}
 & \E \bigg[ \sup_{h \in \mathbf{H}^t(M)}   \left|   \frac{1}{\sqrt{n}} \sum_{i=1}^n\big \{  [R_{h}^K(Z_i)]_{j} -\E_{} ( [R_{h}^K(Z)]_{j} ) \:  \big \}  \right|^2      \;  \bigg] \\ & \leq L \bigg(   \E \bigg[ \sup_{h \in \mathbf{H}^t(M)}   \left|   \frac{1}{\sqrt{n}} \sum_{i=1}^n\big \{  [R_{h}^K(Z_i)]_{j} -\E_{} ( [R_{h}^K(Z)]_{j} ) \:  \big \}  \right|      \;  \bigg]    + C_2   \bigg)^2 .
\end{align*}

 By an application of \citep[Theorem 3.5.13]{gine2021mathematical},  there exists a universal constant $ L > 0 $ such that \begin{align*}
&  \E \bigg[ \sup_{h \in \mathbf{H}^t(M)}   \left|   \frac{1}{\sqrt{n}} \sum_{i=1}^n\big \{  [R_{h}^K(Z_i)]_{j} -\E_{} ( [R_{h}^K(Z)]_{j} ) \:  \big \}  \right|      \;  \bigg]    \leq \frac{L}{\sqrt{n}}  \int_{0}^{8  \|  F  \|_{L^2(\mathbb{P})}  }  \sqrt{\log N_{[]}(\mathcal{F}, \| . \|_{L^2(\mathbb{P})},  \epsilon  ) } d \epsilon.
\end{align*}
Let $\{  h_i \}_{i=1}^T  $  denote a $\delta > 0$ covering of  $\big( \mathbf{H}^t(M) , \| . \|_{\infty}   \big)$. Define the functions $$ e_{i}(Z) = \sup_{h \in \mathbf{H}^t(M) : \|  h -h_i \|_{\infty} < \delta  } \ \big| \:  [R_{h}^K(Z)]_{j}  -  [R_{h_i}^K(Z)]_{j}   \:   \big| \; \; \; \; \; i=1,\dots,T . $$
Observe that $$ \bigg \{   [R_{h_i}^K(Z)]_{j}   - e_i \; ,  \;  [R_{h_i}^K(Z)]_{j}  + e_i   \bigg \}_{i=1}^T  $$ is a bracket covering for $\mathcal{F}$.  Let $C_1(M)$ and $\kappa \in (0,1]$ be as in Assumption \ref{residuals}.  By Assumption \ref{residuals}(i) we have that  \begin{align*}
 \| e_i \|_{L^2(\mathbb{P})}^2  &   \leq \E \bigg[  \left|   \big[ G_{b,K}^{-1/2}  b^K(W) \big ]_{j}  \right| ^2  \E \bigg[    \sup_{h \in \mathbf{H}^t(M) : \|  h -h_i \|_{\infty} < \delta  }           \left| \rho_{l}(Y,h_{}(X)) - \rho_{l}(Y,h_{i}(X))  \right|^2  \bigg| W \bigg]   \bigg] \\ & \leq C_1^2 \delta^{2 \kappa} \E \bigg[  \left|   \big[ G_{b,K}^{-1/2}  b^K(W) \big ]_{j}  \right| ^2 \bigg] \\ & = C_1^2 \delta^{2 \kappa}.
\end{align*}
It follows that \begin{align*}
 & \int_{0}^{8 \|  F  \|_{L^2(\mathbb{P})}  }  \sqrt{\log N_{[]}(\mathcal{F}, \| . \|_{L^2(\mathbb{P})},  \epsilon  ) } d \epsilon \leq  \int_{0}^{8 \|  F  \|_{L^2(\mathbb{P})}  } \sqrt{\log N \bigg(  \mathbf{H}^t(M)   ,\|.\|_{\infty} ,   \bigg( \frac{\epsilon}{ 2  C_1}   \bigg)^{1/ \kappa} \:   \bigg)   } d \epsilon .
\end{align*}
 By \citep[Proposition C.7]{ghosal2017fundamentals}, we have that $  \log N_{}( \mathbf{H}^t(M),  \| . \|_{ \infty}, \epsilon) \lessapprox  \epsilon^{-d/t}$ as $\epsilon \downarrow 0$.  It follows that there exists a universal constant $ L> 0$ such that \begin{align*}
  \int_{0}^{8 \|  F  \|_{L^2(\mathbb{P})}  } \sqrt{\log N \bigg(  \mathbf{H}^t(M)   ,\|.\|_{\infty} ,   \bigg( \frac{\epsilon}{ 2  C_1}   \bigg)^{1/ \kappa} \:   \bigg)   } d \epsilon  &  \leq L  \int_{0}^{8 \|  F  \|_{L^2(\mathbb{P})}  } \epsilon^{-d/2 \kappa t} d \epsilon  \\ & \leq L \int_{0}^{8 C_2}  \epsilon^{-d/2 \kappa t} d \epsilon .
 \end{align*}
 By Assumption \ref{residuals}(ii), $t > (2 \kappa)^{-1} d$ and so the integral  above  is convergent.  From the monotonicity of the $L^p(\mathbb{P})$ norm and combining all the preceding bounds, it follows that there exists a universal constant $ D > 0 $ such that

\begin{align*}
 \E_{} \bigg[ \sup_{h \in \mathbf{H}^t(M)  }  \|   \E_n [R_{h}^K(Z)] - \E_{}[R_{h}^K(Z)]       \|_{\ell^2}  \bigg]   & \leq  \bigg \|      \sup_{h \in \mathbf{H}^t(M)  }  \|   \E_n [R_{h}^K(Z)] - \E_{}[R_{h}^K(Z)]       \|_{\ell^2}      \bigg  \|_{L^2(\mathbb{P})} \\ & \leq D \frac{\sqrt{K}}{\sqrt{n}}.
\end{align*}
\end{proof}

\begin{lemma}
\label{emp-c}
Suppose Assumptions \ref{fsbasis}(i), \ref{residuals}  and \ref{residuals2}(i)(ii) hold.  For each fixed $l \in \{ 1 ,  \dots , d_{\rho} \}$ and function $h(X)$, define
 $$ R_{h,l}^K(Z) =     \big[ G_{b,K}^{-1/2}  b^K(W) \big] \rho_{l}(Y,h_{}(X)). $$
Let $\epsilon> 0 $ be as in Assumption \ref{residuals2}(ii) and define $\gamma = 1- 1/(2+2 \epsilon) > 1/2 $. Suppose $K_{\min}  < \bar{K}_{\max} \   $ are sequences of sieve dimensions that  both diverge to $\infty$ and satisfy $K_{\min} \log(\bar{K}_{\max})= o( n^{\gamma - 1/2}) $ and $ \log \bar{K}_{\max} =o(K_{\min})$.  Define  $\mathcal{K}_n= [ K_{\min} , \bar{K}_{\max} ] \cap \mathbb{N} $. Then, given any $M > 0 $, there exists a universal constant $D = D(M) < \infty$ such that  \begin{equation}
  \mathbb{P}     \bigg(   \sup_{l \in \{ 1 , \dots ,d_{\rho} \} }     \sup_{K \in \mathcal{K}_n} \sup_{h \in \mathbf{H}^t(M)}   K^{-1/2} 
 \| \E_n \big[ R_{h,l}^K(Z) \big]   - \E_{} \big[   R_{h,l}^K(Z)  \big]       \|_{2}  \leq  \frac{D}{\sqrt{n}}   \bigg ) \rightarrow 1 .    \label{bern2}
    \end{equation}
\end{lemma}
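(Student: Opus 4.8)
The plan is to bound the centered empirical process uniformly in $(l,K,h)$ by combining a truncation of the (possibly heavy‑tailed) residual with the expectation bound of Lemma \ref{emp-b} and a Talagrand‑type concentration inequality, followed by a union bound over $K\in\mathcal{K}_n$ and the finitely many coordinates $l$. Fix $M>0$, write $G(Z)=\sup_{h\in\mathbf{H}^t(M)}\|\rho(Y,h(X))\|_{\ell^2}$, so that $\E[G^{2+\epsilon}]\le C_3^2<\infty$ by Assumption \ref{residuals2}(ii), and for a coordinate $l$ and a unit vector $u$ set $f_{h,u}(Z)=\langle u, G_{b,K}^{-1/2}b^K(W)\rangle\rho_l(Y,h(X))$, so that $\sup_h\|\E_n[R_{h,l}^K]-\E[R_{h,l}^K]\|_{\ell^2}=\sup_{h,\|u\|_{\ell^2}=1}(\E_n-\E)[f_{h,u}]$. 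By Assumption \ref{residuals}(i), $h\mapsto f_{h,u}$ is $L^2(\mathbb{P})$‑continuous, so the suprema reduce to countable dense subsets and all measurability requirements for the concentration tools below are met. The key device is the split, with truncation level $\tau_n:=n^{1/(2+2\epsilon)}=n^{1-\gamma}$ chosen so that $\tau_n^{-(1+\epsilon)}=n^{-1/2}$: $(\E_n-\E)[R_{h,l}^K]=(\E_n-\E)[R_{h,l}^K\mathbf{1}\{G\le\tau_n\}]+(\E_n-\E)[R_{h,l}^K\mathbf{1}\{G>\tau_n\}]$.

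For the tail part, using $\|G_{b,K}^{-1/2}b^K(W)\|_{\ell^2}\le\zeta_{b,K}\lessapprox\sqrt{K}$ (Condition \ref{fsbasis}(i)) and $\mathbf{1}\{G>\tau_n\}\le(G/\tau_n)^{1+\epsilon}$, one gets the deterministic bounds $\sup_h\|\E_n[R_{h,l}^K\mathbf{1}\{G>\tau_n\}]\|_{\ell^2}\le\zeta_{b,K}\tau_n^{-(1+\epsilon)}\E_n[G^{2+\epsilon}]$ and $\sup_h\|\E[R_{h,l}^K\mathbf{1}\{G>\tau_n\}]\|_{\ell^2}\le\zeta_{b,K}\tau_n^{-(1+\epsilon)}C_3^2$. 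Since $G^{2+\epsilon}$ is integrable, the weak law of large numbers gives $\mathbb{P}(\E_n[G^{2+\epsilon}]\le 2C_3^2)\to1$; on that single event, which depends neither on $K$ nor on $l$, the tail part is bounded uniformly over $l$, $K$, $h$ by $3C_3^2\zeta_{b,K}\tau_n^{-(1+\epsilon)}\lessapprox\sqrt{K}/\sqrt{n}$.

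For the bounded part, note $f_{h,u}\mathbf{1}\{G\le\tau_n\}$ is pointwise bounded by $\zeta_{b,K}\tau_n$, while its variance satisfies $\E[(f_{h,u}\mathbf{1}\{G\le\tau_n\})^2]\le\E[\langle u,G_{b,K}^{-1/2}b^K(W)\rangle^2\rho_l^2]\le C_2^2$ — crucially uniform in $K$, using Assumption \ref{residuals2}(i), conditioning on $W$, and $\E[G_{b,K}^{-1/2}b^K(W)b^K(W)'G_{b,K}^{-1/2}]=I_K$. Combining Lemma \ref{emp-b} with the crude correction $\sup_h\E\|R_{h,l}^K\mathbf{1}\{G>\tau_n\}\|_{\ell^2}\le\zeta_{b,K}C_3^2\tau_n^{-(1+\epsilon)}$ that passes between the truncated and untruncated mean yields $\E\sup_h\|(\E_n-\E)[R_{h,l}^K\mathbf{1}\{G\le\tau_n\}]\|_{\ell^2}\lessapprox\sqrt{K}/\sqrt{n}$. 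Applying Bousquet's version of Talagrand's concentration inequality for bounded empirical processes (\citealp[Ch.~3]{gine2021mathematical}) to the centered class $\{f_{h,u}\mathbf{1}\{G\le\tau_n\}-\E[\,\cdot\,]\}$ with deviation parameter $x_n=2\log\bar{K}_{\max}$ gives, for each fixed $K$ and $l$, with probability at least $1-\bar{K}_{\max}^{-2}$, a bound of the form $c_1\sqrt{K/n}+\sqrt{2x_n(C_2^2+c\,\zeta_{b,K}\tau_n\sqrt{K/n})/n}+x_n\zeta_{b,K}\tau_n/(3n)$. Dividing by $\sqrt{K}/\sqrt{n}$ and using $\zeta_{b,K}\lessapprox\sqrt{K}$, the residual terms are controlled by $x_n/K_{\min}\to0$ (which uses $\log\bar{K}_{\max}=o(K_{\min})$) and by $x_n\tau_n/\sqrt{n}=2\log\bar{K}_{\max}\,n^{1/2-\gamma}\to0$ (which uses $K_{\min}\log\bar{K}_{\max}=o(n^{\gamma-1/2})$ together with $K_{\min}\ge1$), so they are $o(\sqrt{K}/\sqrt{n})$ uniformly over $K\in\mathcal{K}_n$. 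A union bound over the at most $\bar{K}_{\max}$ values of $K$ and the $d_\rho$ values of $l$ costs $d_\rho\bar{K}_{\max}\cdot\bar{K}_{\max}^{-2}\to0$, and adding the tail part gives (\ref{bern2}) with a suitable $D=D(M)$.

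The main obstacle is precisely that the residual envelope $G$ is only assumed to have $2+\epsilon$ moments, so neither the bounded Talagrand inequality nor a clean sub‑Gaussian tail applies to the untruncated process; the truncation level $\tau_n\asymp n^{1/(2+2\epsilon)}$ has to be chosen to make both the discarded tail ($\zeta_{b,K}\tau_n^{-(1+\epsilon)}\lessapprox\sqrt{K/n}$) and the envelope contribution in Talagrand's bound ($x_n\zeta_{b,K}\tau_n/n$) of the right order $\sqrt{K/n}$, and it is the interaction of these two competing constraints with the union bound over $\mathcal{K}_n$ that forces exactly the stated growth restrictions on $K_{\min}$ and $\bar{K}_{\max}$. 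A secondary point demanding care is that every constant entering the bound — most importantly the weak variance $C_2^2$ — must be shown to be uniform in $K$, which is where the orthonormalization $\widehat{G}_{b,K}^{o}$/identity $\E[G_{b,K}^{-1/2}b^K(W)b^K(W)'G_{b,K}^{-1/2}]=I_K$ is used.
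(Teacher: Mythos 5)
Your proposal is correct and follows the same overall architecture as the paper's proof: truncate the residual envelope $G=\sup_{h\in\mathbf{H}^t(M)}\|\rho(Y,h(X))\|_{\ell^2}$ at a level tuned to the $2+\epsilon$ moment, use Lemma~\ref{emp-b} to control the expectation of the truncated empirical process, apply a bounded-differences/Talagrand concentration inequality with deviation $\asymp\log\bar K_{\max}$, and union-bound over $K\in\mathcal{K}_n$ and the finitely many coordinates $l$. Your intermediate bounds (weak variance $C_2^2$ uniform in $K$ via conditioning on $W$ and $\E[G_{b,K}^{-1/2}b^K(W)b^K(W)'G_{b,K}^{-1/2}]=I_K$, pointwise bound $\zeta_{b,K}\tau_n$, the correction between truncated and untruncated means of order $\zeta_{b,K}\tau_n^{-(1+\epsilon)}$) all check out, and the asymptotic negligibility of the correction terms uses exactly the two growth restrictions $\log\bar K_{\max}=o(K_{\min})$ and $K_{\min}\log\bar K_{\max}=o(n^{\gamma-1/2})$ as you identify.

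The genuine difference is in how you dispose of the tail. The paper writes $\mathbb{P}\big(\sup_h\|\sum_i\Xi_{2,i}^K\|_{\ell^2}>\zeta_{b,K}/\sqrt n\big)\lesssim\sqrt n\,L_n^{-(1+\epsilon)}\E[G^{2+\epsilon}]$ via Markov, and then union-bounds this probability over $K\in\mathcal{K}_n$, which forces the larger truncation level $L_n\asymp n^{1-\gamma}\log\bar K_{\max}$ to absorb the union-bound multiplicity (and the paper's bookkeeping of the cardinality of $\mathcal K_n$ in this step is slightly loose). You instead observe that the pointwise inequality $\|\E_n[R_{h,l}^K\mathbf 1\{G>\tau_n\}]\|_{\ell^2}\le\zeta_{b,K}\tau_n^{-(1+\epsilon)}\E_n[G^{2+\epsilon}]$ makes the entire tail contribution, after dividing by $\sqrt{K}/\sqrt n$, a fixed multiple of the single random variable $\sqrt n\,\tau_n^{-(1+\epsilon)}\E_n[G^{2+\epsilon}]$, which is independent of $K$, $l$ and $h$; a single application of the weak law then controls it on one high-probability event with no union bound. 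This buys you two things: (a) you can take the smaller, log-free truncation $\tau_n=n^{1-\gamma}$, which slightly relaxes the envelope constant entering Talagrand; and (b) you avoid the union-bound cardinality issue entirely. Either route arrives at the same conditions on $(K_{\min},\bar K_{\max})$, but your tail treatment is tighter and more robust, so it is worth keeping as written.
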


\begin{proof}[Proof of Lemma \ref{emp-c}]
It suffices to verify that (\ref{bern2}) holds at each fixed $l \in \{  1 , \dots, d_{\rho} \} $. Fix any such $l$. For a given sequence of deterministic constants $L_n \uparrow \infty$, define   \begin{align*}  & \xi_{1,i}^K (h_{}) =  R_{h,l}^K(Z_i)  \mathbbm{1} \bigg \{ \sup_{h \in \mathbf{H}^t(M)}  | \rho(Y_i, h(X_i)) | \leq L_n   \bigg \} \;, \\ &  \xi_{2,i}^K (h_{}) = R_{h,l}^K(Z_i)  \mathbbm{1} \bigg \{ \sup_{h \in \mathbf{H}^t(M)}  | \rho(Y_i, h(X_i)) | > L_n    \bigg \}  \; .      \end{align*}
Write the deviation as \begin{align} \label{sums} (\E_n - \E_{})[R_{h,l}^K(Z) ]   = \sum_{i=1}^n  \Xi_{1,i}^K (h_{})  + \sum_{i=1}^n \Xi_{2,i}^K (h_{})    . \end{align}
where $ \Xi_{1,i}^K (h_{}) = n^{-1} [\xi_{1,i}^K (h_{}) - \E_{} \xi_{1,i}^K (h_{})] $ and $ \Xi_{2,i}^K(h_{}) = n^{-1} [ \xi_{2,i}^K(h_{}) - \E_{} \xi_{2,i}^K(h_{}) ]$.
First, we derive a bound for $ \sum_{i=1}^n \Xi_{2,i}^{K}(h) $.  Let $\epsilon > 0 $ be as in Assumption \ref{residuals2}(ii). From the bound $ \zeta_{b,K}^{-1} \|  G_{b,K}^{-1/2} b^{K}(W_i)     \|_{\ell^2}  \leq   1 $, it follows that 
\begin{align*}   &  \mathbb{P}_{} \bigg( \sup_{h \in \mathbf{H}^t(M)}   \bigg \| \sum_{i=1}^n \Xi_{2,i}^K (h_{}) \bigg \|_{\ell^2}   >  \frac{\zeta_{b,K}}{\sqrt{n}}  \bigg) \\ & \leq       \frac{\sqrt{n}}{\zeta_{b,K}} \E_{} \bigg(     \sup_{h \in \mathbf{H}^t(M)}    \sum_{i=1}^n \| \Xi_{2,i}^K (h_{}) \|_{\ell^2} \bigg) \\ & \leq 2 \sqrt{n} \E_{} \bigg( \sup_{h \in  \mathbf{H}^t(M)}  | \rho(Y_i, h(X_i))  | \mathbbm{1} \bigg \{ \sup_{h \in \mathbf{H}^t(M)}  | \rho(Y_i, h(X_i)) | > L_n    \bigg \} \bigg) \\ & \leq \frac{2 \sqrt{n}}{L_n^{1+ \epsilon}} \E_{} \bigg( \sup_{h \in \mathbf{H}^t(M)}  | \rho(Y_i, h(X_i))  |^{2+ \epsilon} \bigg) . \end{align*}
Since $\E_{} \big( \sup_{h \in \mathbf{H}^t(M)}  | \rho(Y_i, h(X_i))  |^{2+ \epsilon} \big) < \infty $, a union bound over $K \in \mathcal{K}_n$ yields $$  \mathbb{P}_{} \bigg( \bigcup_{K \in \mathcal{K}_n} \bigg \{ \sup_{h \in \mathbf{H}^t(M)}   \bigg  \| \sum_{i=1}^n \Xi_{2,i}^K (h) \bigg \|_{\ell^2}   >  \frac{\zeta_{b,K}}{\sqrt{n}}  \bigg \} \bigg) \lessapprox \frac{\sqrt{n} \log(\bar{K}_{\max})}{L_n^{1+ \delta}} .  $$
The term on the right is $o(1)$ when $L_n^{1+ \epsilon} \asymp   \sqrt{n} (\log \bar{K}_{\max})^{1+ \epsilon} $. The desired bound then follows from observing that $\zeta_{b,K} \lessapprox \sqrt{K}$. It remains to bound the first sum in (\ref{sums}) when $L_n^{1+ \epsilon} \asymp \sqrt{n} (\log \bar{K}_{\max})^{1+ \epsilon} $. Observe that $$ \sup_{h \in \mathbf{H}^t(M)} \bigg \|  \sum_{i=1}^n \Xi_{1,i}^K(h)        \bigg \|_{\ell^2} =  \sup_{h \in \mathbf{H}^t(M)} \; \sup_{\alpha  \in \mathbb{S}^{K-1}  }  \sum_{i=1}^n \alpha' \Xi_{1,i}^K (h)  \: .$$
Let $C_2 = C_2(M) < \infty$ be as in   Assumption \ref{residuals2}(i). Define $\gamma = 1- 1/(2+2 \epsilon) > 1/2 $. For any fixed $\alpha \in  \mathbb{S}^{K-1}  $ and  $h \in \mathbf{H}^t(M) $, we have that   \begin{align*} &    \E_{}  \big[ \big(  \alpha' \Xi_{1,i}^K(h)  \big)^2 \big] \leq   n^{-2} \E_{} \bigg( \alpha' G_{b,K}^{- 1/2} b^K(W_i) b^K(W_i)' G_{b,K}^{- 1/2} \alpha  \sup_{h \in \mathbf{H}^t(M)} | \rho_l(Y,h)|^2   \bigg) \leq  C_2^2 n^{-2}  \: , \\ &  \left|  \alpha' \Xi_{1,i}^K (h)   \right| \leq 2 n^{-1}  L_n \zeta_{b,K} \lessapprox  \frac{2  \zeta_{b,K} \log \bar{K}_{\max} }{n^{\gamma} }.    \: 
\end{align*}
By Lemma \ref{emp-b}, there exists a universal constant $D  =D(M) <  \infty  $ such that
\begin{align*}
\E_{} \bigg(    \sup_{h \in \mathbf{H}^t(M)}    \bigg     \|  \sum_{i=1}^n \Xi_{1,i}^K (h) \bigg \|_{\ell^2} \bigg) \leq  D  \frac{\sqrt{K}}{\sqrt{n}} .
\end{align*}
holds for every $K$.
Talagrand's inequality \citep[Theorem 3.3.9]{gine2021mathematical} yields   \begin{align*} & \mathbb{P}_{} \bigg( \sup_{h \in \mathbf{H}^t(M)}  \bigg \|  \sum_{i=1}^n \Xi_{1,i}^K        \bigg \|_{\ell^2} \geq   \frac{ D  \sqrt{K}  }{\sqrt{n}} + \frac{\sqrt{K}}{\sqrt{n}}   \bigg) \\ &  \leq \exp \bigg(  - \frac{1}{ 2 C_2^2 K^{-1}  + (8 D  + 4/3) \big(\zeta_{b,K} \log (\bar{K}_{\max}) K^{-1/2} n^{1/2 - \gamma}   \big   )      }    \bigg) .
\end{align*}
Let $E > 0 $ be such that $\zeta_{b,K} \leq E \sqrt{K}$. From a union bound, we obtain \begin{align*}
 & \mathbb{P}_{} \bigg( \bigcup_{K \in \mathcal{K}_n} \bigg \{  \sup_{h \in \mathbf{H}^t(M)}   \bigg \|  \sum_{i=1}^n \Xi_{1,i}^K        \bigg \|_{\ell^2} \geq  \frac{ D  \sqrt{K}  }{\sqrt{n}} + \frac{\sqrt{K}}{\sqrt{n}} \bigg \}  \bigg) \\ & \lessapprox \bar{K}_{\max} \exp \bigg(   - \frac{1}{ 2 C_2^2 K_{\min}^{-1} + E(8 D + 4/3) \log(\bar{K}_{\max}) n^{1/2 - \gamma}  }         \bigg).
\end{align*}
This term is $o(1)$ when $K_{\min} \log(\bar{K}_{\max}) / n^{\gamma - 1/2} \downarrow 0$ and $ \log(\bar{K}_{\max}) K_{\min}^{-1} \downarrow 0  $.

\end{proof}

\begin{proof}[Proof of Theorem \ref{t1}]
Let $C, C'$ denote generic universal constants that may change from line to line.  First, we introduce some preliminary notation that will be used throughout the proof. Define  \begin{align}
\epsilon_{n} =  \frac{\sqrt{ K_n}}{\sqrt{n}}  .    \label{epsg}
\end{align}
For ease of notation, we surpress the dependence of $K = K_n$ on $n$ in the remainder of the proof. 

Given any fixed function $h: \mathcal{X} \rightarrow \R$, the estimator $\widehat{m}(w,h)$ can be expressed as \begin{align}
    \label{mwh-early} \widehat{m}(w,h)  =     \E_n \big(   \rho(Y,h_{}(X)) \big[ G_{b,K}^{-1/2} b^K(W) \big]'    \big)               [ \widehat{G}_{b,K}^{o}  ]^{-1} G_{b,K}^{-1/2}   b^K(w).
\end{align}
It follows that
 \begin{align}
 \label{Enm}  &  \E_n \big( \| \widehat{m}(W,h) \|_{\ell^2}^2  \big)   = \sum_{l=1}^{d_{\rho}}  [ \E_n (R_{h, l}^K)      ] ' [ \widehat G_{b,K}^o]^{-1}  [ \E_n (R_{h, l}^K)  ] \\ &  R_{h,l}^K(Z) =     \big[ G_{b,K}^{-1/2}  b^K(W) \big] \rho_{l}(Y,h_{}(X)). \label{Rh} 
\end{align}
As the vector $G_{b,K}^{-1/2} b^K(W)$ is an orthonormal (with respect to $\E$) basis of $\mathcal{V}_K$, the $L^2(\mathbb{P})$ norm of $\Pi_K m(W,h)$ can be expressed as
\begin{align}
\label{Pim}  \|  \Pi_K m(W,h)  \|_{L^2(\mathbb{P})}^2   =  \E  \big( \| \Pi_K m(W,h) \|_{\ell^2}^2 \big) = \sum_{l=1}^{d_{\rho}} \| \E[R_{h,l}^K(Z)]   \|_{\ell^2}^2. 
\end{align}
We denote the empirical analog of this projection by  \begin{align} \label{Pem}
  \|  \widehat{\Pi}_{K} m(W,h_{})  \|_{L^2(\mathbb{P}_n)}^2 =  \sum_{l=1}^{d_{\rho}}  \| \E_n (R_{h, l}^K)      \|_{\ell^2}^2.
\end{align}
Let $\hat{\lambda}_{K,\min}$ and $\hat{\lambda}_{K,\max}$ denote the minimum and maximum eigenvalues of $   [\widehat{G}_{b,K}^o]^{-1}$. By Lemma \ref{aux2}, we have that \begin{equation} \label{eig} \mathbb{P}(0.9 <\hat{\lambda}_{K,\min} \leq  \hat{\lambda}_{K,\max} < 1.1 ) \uparrow 1.
\end{equation}
The proof proceeds through several steps which we outline below.
\begin{enumerate}

\item[\textbf{$(i)$}]
 First, we derive a lower bound for the normalizing constant of the posterior measure.  Specifically, we aim to show that there exists a $C > 0 $ such that \begin{align}   \int  \exp\bigg(    - \frac{n}{2}  \E_n \big[    \widehat{m}(W,h) ' \widehat{\Sigma}(W)  \widehat{m}(W,h)            \big]       \bigg) d  \mu (h|\alpha,K)  \geq   \exp \big(   - C n  \log (n)  \epsilon_{n}^2   \big)  \label{conslb} \end{align}
holds with $\mathbb{P}$ probability approaching $1$.

By Assumption \ref{fsbasis}(ii-iii), the eigenvalues of $\widehat{\Sigma}(W)$ are bounded above by some constant $B < \infty $ with $\mathbb{P}$ probability approaching $1$. Therefore, it suffices to verify that  $$  \int  \exp\bigg(    - \frac{n B }{2}  \E_n \big( \| \widehat{m}(W,h) \|_{\ell^2}^2  \big)   \bigg) d  \mu (h|\alpha,K)  \geq   \exp \big(   - C n \log (n)  \epsilon_{n}^2   \big) . $$
Fix any $l \in \{1 , \dots , d_{\rho} \} $. On the set where (\ref{eig}) holds,  we have that \begin{align*}
 &  \E_n (R_{h, l}^K)      ] ' [ \widehat G_{b,K}^o]^{-1}  [ \E_n (R_{h, l}^K)  ]   \leq 1.1 \| \E_n (R_{h, l}^K)  \|_{\ell^2}^2.
\end{align*} 

 Fix any $M > 0 $.  By Lemma \ref{emp-c}, there exists a $C = C(M) < \infty$ such that  \begin{align*} \sum_{l=1}^{d_{\rho}}
\| \E_n(R_{h,l}^K)       \|_{\ell^2}^2 & \leq   \sum_{l=1}^{d_{\rho}} \big(  \| \E_n(R_{h,l}^K) - \E(R_{h,l}^K)      \|_{\ell^2} + \| \E(R_{h,l}^K)  \|_{\ell^2}  \big)^2                        \nonumber \\  & \leq  C  \bigg(          \frac{K}{n} + \|  \Pi_K m(W,h)  \|_{L^2(\mathbb{P})}^2     \bigg)
\end{align*}
holds for all $h \in \mathbf{H}^t(M)$ (with $\mathbb{P}$ probability approaching $1$). By Assumption \ref{identif}$(ii)$ and the observation that $\Pi_K$ is norm decreasing, the preceding bound implies that \begin{align*}
   \sum_{l=1}^{d_{\rho}} \| \E_n(R_{h,l}^K)       \|_{\ell^2}^2 \leq  C  \bigg(          \frac{K}{n} + \|  h- h_0  \|_{L^2(\mathbb{P})}^2     \bigg)
\end{align*}
holds for all $h \in \mathbf{H}^t(M)$ (with $\mathbb{P}$ probability approaching $1$). From combining the preceding bounds, it follows that \begin{align*}
 &   \int  \exp\bigg(    - \frac{n B}{2}  \E_n \big( \| \widehat{m}(W,h) \|_{\ell^2}^2  \big)   \bigg) d  \mu (h|\alpha,K)  \\ & \geq  \int \limits _{\| h  \|_{\mathbf{H}^t} \leq M, \;  \| h- h_0  \|_{L^2(\mathbb{P})} \leq \epsilon_{n}}   \exp\bigg(    - \frac{n B }{2}  \E_n \big( \| \widehat{m}(W,h) \|_{\ell^2}^2  \big)   \bigg) d  \mu (h|\alpha,K) \\ &  \geq  \exp \big(  -  C n  \epsilon_{n}^2       \big)  \int \limits _{\| h  \|_{\mathbf{H}^t} \leq M, \;  \| h- h_0  \|_{L^2(\mathbb{P})} \leq \epsilon_{n}}   d  \mu (h|\alpha,K).
\end{align*}
 By Lemma \ref{aux3}, there exists a universal constant $D > 0 $ such that for all sufficiently $n$, \begin{align}
\label{gbound}  \int \limits _{\| h  \|_{\mathbf{H}^t} > M  }  d \mu(h| \alpha,K) \leq 2 \exp \big( - D M^2  n \log(n)  \epsilon_{n}^2     \big).
\end{align}

It follows that \begin{align*}
&   \int \limits _{\| h  \|_{ \mathbf{H}^t} \leq M \;  \|  h- h_0 \|_{L^2(\mathbb{P})} \leq   \epsilon_{n} }  d \mu(h| \alpha,K) \\ & = \int \limits_{\| h-h_0  \|_{L^2(\mathbb{P})} \leq  \epsilon_{n}} d \mu(h|\alpha,K) - \int \limits_{\| h-h_0  \|_{L^2(\mathbb{P})  } \leq  C'  \epsilon_{n} , \: \| h \|_{\mathbf{H}^t} > M   } d \mu(h|\alpha,K) \\ & \geq \int \limits_{\| h-h_0  \|_{L^2(\mathbb{P})} \leq  \epsilon_{n}} d \mu(h|\alpha,K)  \; \; - \; \; 2 \exp \big( - D M^2  n \log(n) \epsilon_{n}^2    \big)  \;  ,
\end{align*}
where $D $ is as in (\ref{gbound}). It remains to lower bound the first integral above. The Reproducing Kernel Hilbert Space (RKHS) associated to the Gaussian  random element  $G_{\alpha} $    can be represented as \begin{align}
\label{RKHS-D}  \mathbb{H}_{\alpha}  =  \bigg \{  h \in L^2(\mathcal{X})  : \|h \|_{\mathbb{H}_{\alpha}}^2 =   \sum_{i=1}^{\infty}     i^{ 1 +   2\alpha /d}  \left| \langle h ,  e_i \rangle_{L^2(\mathcal{X})}      \right|^2 < \infty               \bigg \} .
\end{align}
For every $\epsilon > 0 $, denote the concentration function of $\mu(. |\alpha,K)$  at $h_0$ by

\begin{align} 
\label{conf} \varphi_{h_0}(\epsilon|\alpha,K) = \inf_{h \in \mathbb{H}_{\alpha}:\| h - h_0   \|_{L^2( \mathcal{X})} \leq \epsilon  } \bigg\{ \frac{ \log (n)  K}{2} \| h \|_{\mathbb{H}_{\alpha}}^2 - \log \mathbb{P} \bigg(  \| G_{\alpha}   \|_{L^2(\mathbb{P})} < \epsilon \sqrt{ \log (n) K}           \bigg)             \bigg \}.
\end{align}
By \citep[Proposition 11.19]{ghosal2017fundamentals}, we have that \begin{align*}
\int \limits_{\| D_{h_0}[h-h_0]  \|_{L^2(\mathbb{P})} \leq   \epsilon_{n}} d \mu(h|\alpha,K) \geq \exp \big(  - \varphi_{h_0}(0.5  \epsilon_{n}| \alpha,K)        \big).
\end{align*}
Since $h_0 \in \mathcal{H}^p$ for some $p \geq \alpha+d/2$, it follows that $h_0 \in \mathbb{H}_{\alpha}$. In particular, by choosing $h = h_0$ in the infimum defining $\varphi_{h_0}$ in (\ref{conf}), we obtain \begin{align*}
    \varphi_{h_0}(\epsilon|\alpha,K) \leq  C \bigg[ K \log n - \log \mathbb{P} \bigg(  \| G_{\alpha}   \|_{L^2(\mathbb{P})} < \epsilon \sqrt{ \log (n) K}           \bigg)           \bigg].
\end{align*}
From the representation of $G_{\alpha}$ in (\ref{g-series-4}) and an application of  \citep[Lemma 11.47]{ghosal2017fundamentals}, we obtain  \begin{align*}
\varphi_{h_0}(  0.5  \epsilon_{n}| \alpha,K)  \leq C \big[  \log (n)  K +   (   \epsilon_{n}  \sqrt{  \log(n) K})^{-d/\alpha  }        \big].
\end{align*}
Since $\epsilon_n = \sqrt{K} / \sqrt{n}$ and $K \gtrapprox n^{d/2(\alpha + d)} $, the first term on the right of the preceding inequality dominates. It follows that \begin{align*}
    \int \limits_{\|h-h_0  \|_{L^2(\mathbb{P})} \leq   \epsilon_{n}} d \mu(h|\alpha,K) \geq \exp(-E K \log n ) = \exp(-E n \log(n) \epsilon_n^2)
\end{align*}
for some universal constant $E > 0$ (that does not depend on $M$). By choosing $ M > 0$ sufficiently large, we obtain   \begin{align*}
&   \int \limits _{\| h  \|_{ \mathbf{H}^t} \leq M \;  \|  h- h_0 \|_{L^2(\mathbb{P})} \leq   \epsilon_{n} }  d \mu(h| \alpha,K)  \\ & \geq \int \limits_{\| h-h_0  \|_{L^2(\mathbb{P})} \leq  \epsilon_{n}} d \mu(h|\alpha,K)  - 2 \exp \big( - D M^2  n \log(n) \epsilon_{n}^2    \big)  \\ & \geq \exp(-E n \log(n) \epsilon_n^2)   - 2 \exp \big( - D M^2  n \log(n) \epsilon_{n}^2    \big) \\ & \geq \exp(-C n \log(n) \epsilon_n^2).
\end{align*}
The lower bound in (\ref{conslb}) follows from combining all the preceding bounds.

\item[\textbf{$(ii)$}]

We aim to show that given any $D' > 0 $, there exists  $D > 0 $ such that

\begin{align*}      \int \limits_{  \|  \widehat{\Pi}_{K} m(W,h)  \|_{L^2(\mathbb{P}_n)}     >  D  \sqrt{\log (n)} \epsilon_{n}   }    & \exp\bigg(    - \frac{n}{2}  \E_n \big[    \widehat{m}(W,h) ' \widehat{\Sigma}(W)  \widehat{m}(W,h)            \big]       \bigg) d  \mu (h|\alpha,K)  \\ &   \leq    \exp \big(   - D' n \log(n) \epsilon_{n}^2   \big)    
\end{align*}
holds with $\mathbb{P}$ probability approaching $1$. By Assumption \ref{fsbasis}$(ii-iii)$,  we can work under the setting where the eigenvalues of $\widehat{\Sigma}$ are bounded below by some constant $b > 0 $. Therefore, it suffices to verify that
\begin{align*}      \int \limits_{  \|  \widehat{\Pi}_{K} m(W,h)  \|_{L^2(\mathbb{P}_n)}     >  D  \sqrt{\log (n)}  \epsilon_{n}   }  & \exp\bigg(    - \frac{n b }{2}  \E_n \big( \| \widehat{m}(W,h) \|_{\ell^2}^2  \big)   \bigg) d  \mu (h|\alpha,K)  \\ &  \leq   \exp \big(   - D' n \log(n) \epsilon_{n}^2   \big)    .
\end{align*}
On the set where (\ref{eig}) holds,  we have that \begin{align*}
 &  \E_n (R_{h, l}^K)      ] ' [ \widehat G_{b,K}^o]^{-1}  [ \E_n (R_{h, l}^K)  ]   \geq 0.9  \| \E_n (R_{h, l}^K)  \|_{\ell^2}^2.
\end{align*} 
for every $l \in \{1 , \dots , d_{\rho} \} $.   From this bound and the representation given in (\ref{Enm}), it follows immediately that   $ \E_n \big( \| \widehat{m}(W,h) \|_{\ell^2}^2  \big) \geq 0.9  \sum_{l=1}^{d_{\rho}}  \| \E_n (R_{h, l}^K)      \|_{\ell^2}^2  $.  Hence, \begin{align*}
&  \int \limits_{  \|  \widehat{\Pi}_{K} m(W,h)  \|_{L^2(\mathbb{P}_n)}     >  D  \sqrt{\log (n)}  \epsilon_{n}   }  & \exp\bigg(    - \frac{n b }{2}  \E_n \big( \| \widehat{m}(W,h) \|_{\ell^2}^2  \big)   \bigg) d  \mu (h|\alpha,K) \\ & \leq  \int \limits_{  \|  \widehat{\Pi}_{K} m(W,h)  \|_{L^2(\mathbb{P}_n)}     >  D  \sqrt{\log (n)}  \epsilon_{n}   }  & \exp\bigg(    - \frac{n b }{2} 0.9  \|  \widehat{\Pi}_{K} m(W,h)  \|_{L^2(\mathbb{P}_n)}^2     \bigg) d  \mu (h|\alpha,K) \\ & \leq \exp \bigg(   - \frac{nb}{2} 0.9 D^2 \log(n) \epsilon_{n}^2      \bigg).
\end{align*}
For any $D' > 0 $, we can pick $D > 0$ sufficiently large so that the preceding expression is bounded above by $\exp(- D'n \log(n) \epsilon_n^2)$.

\item[\textbf{$(iii)$}]
We prove the main statement of the theorem. By Lemma \ref{aux3},  there exists a universal constant $ E > 0 $ such that  for any $ M > 0$ we have \begin{align*}
\int \limits _{\| h  \|_{\mathbf{H}^t} > M  }  d \mu(h| \alpha,K) \leq 2 \exp \big( - E M  \log(n) K       \big) =  2 \exp \big( - E M n \log(n) \epsilon_{n}^2       \big) .
\end{align*}
for all sufficiently large  $n$.  From this bound and steps $(i-ii)$, it follows that  for any $C ' > 0 $, we can pick $M > 0 $ sufficiently large such that
 \begin{align*} &
  \mu \big( \|  \widehat{\Pi}_{K}  m(W,h_{})   \|_{L^2(\mathbb{P}_n)}  \leq   M  \sqrt{\log n}  \epsilon_{n},  \| h \|_{\mathbf{H}^t} \leq M  \big|  \alpha,K,\mathcal{Z}_n     \big  )           > 1 -   \exp( - C ' n  \log(n)  \epsilon_{n}^2  ) .
\end{align*}
holds with $\mathbb{P}$ probability approaching $1$. By Lemma \ref{emp-c} and observing that $\epsilon_n  = \sqrt{K} / \sqrt{n}$, the preceding inequality further implies that  for any $C ' > 0 $, we can pick $M > 0 $ sufficiently large such that \begin{align*} &
  \mu \big( \|  {\Pi}_{K}  m(W,h_{})   \|_{L^2(\mathbb{P})}  \leq   M  \sqrt{\log n}  \epsilon_{n},  \| h \|_{\mathbf{H}^t} \leq M  \big|  \alpha,K,\mathcal{Z}_n     \big  )           > 1 -   \exp( - C ' n  \log(n)  \epsilon_{n}^2  ) 
\end{align*}
holds with $\mathbb{P}$ probability approaching $1$. By Assumption \ref{fsbasis}$(iii)$ the bias from the $\Pi_K$ projection vanishes uniformly $h \in \mathbf{H}^t(M)$. Define $$ \gamma_n(M) = \max \bigg\{   M \sqrt{\log n} \epsilon_n , \sup_{h \in \mathbf{H}^t(M)  } \|(\Pi_K - I) m(W,h)   \|_{L^2(\mathbb{P})}     \bigg  \} . $$
It follows that for any $C ' > 0 $, we can pick $M > 0 $ sufficiently large such that \begin{align*} &
  \mu \big( \|    m(W,h_{})   \|_{L^2(\mathbb{P})}  \leq   
 \gamma_n,  \| h \|_{\mathbf{H}^t} \leq M  \big|  \alpha,K,\mathcal{Z}_n     \big  )           > 1 -   \exp( - C ' n  \log(n)  \epsilon_{n}^2  ) 
\end{align*}
holds with $\mathbb{P}$ probability approaching $1$. Fix any $\epsilon > 0$. Since $\gamma_n \rightarrow 0$, Lemma \ref{aux4} implies  \begin{align*}
  \mu( \| h - h_0  \|_{L^2(\mathcal{\mathbb{P}})} \leq \epsilon \big| \alpha,K,\mathcal{Z}_n )  & \geq   \mu \big( \|    m(W,h_{})   \|_{L^2(\mathbb{P})}  \leq   
 \gamma_n,  \| h \|_{\mathbf{H}^t} \leq M  \big|  \alpha,K,\mathcal{Z}_n     \big  ) \\ &           > 1 -   \exp( - C ' n  \log(n)  \epsilon_{n}^2  )
\end{align*}
for all sufficiently large $n$. Since $n \log(n) \epsilon_n^2 \rightarrow  \infty$, the claim follows.

\end{enumerate}

\end{proof}

\begin{proof}[Proof of Theorem \ref{t1-2}]
The proof is analogous to the proof of Theorem \ref{t1}. We use the same notation as introduced there and point out the relevant modifications below. Define $$ \epsilon_n = \frac{\sqrt{K_n}}{\sqrt{n}}. $$

First,  we aim to show that there exists a $C > 0 $ such that \begin{align}   \int  \exp\bigg(    - \frac{n}{2}  \E_n \big[    \widehat{m}(W,h) ' \widehat{\Sigma}(W,h)  \widehat{m}(W,h)            \big]       \bigg) d  \mu (h|\alpha,K)  \geq   \exp \big(   - C n  \log (n)  \epsilon_{n}^2   \big) . \label{conslbtnew} \end{align}
Fix any $ M > 0$ and let $h_0 \in \Theta_0 \cap \mathcal{H}^p$ be as in the statement of the Theorem. We start with the lower bound \begin{align*}
    & \int  \exp\bigg(    - \frac{n}{2}  \E_n \big[    \widehat{m}(W,h) ' \widehat{\Sigma}(W,h)  \widehat{m}(W,h)            \big]       \bigg) d  \mu (h|\alpha,K)  \\ & \geq \int_{\| h \|_{\mathbf{H}^t } \leq M, \| h- h_0 \|_{L^2(\mathbb{P})} \leq \epsilon_n}  \exp\bigg(    - \frac{n}{2}  \E_n \big[    \widehat{m}(W,h) ' \widehat{\Sigma}(W,h)  \widehat{m}(W,h)            \big]       \bigg) d  \mu (h|\alpha,K).
\end{align*}
By Condition \ref{fsbasis2}, the eigenvalues of $ \sup_{h \in \mathbf{H}^t(M)} \widehat{\Sigma}(W,h)$ are bounded above by some constant $B < \infty$ with $\mathbb{P}$ probability approaching $1$. Therefore, it suffices to verify that  $$  \int_{\| h \|_{\mathbf{H}^t } \leq M, \| h- h_0 \|_{L^2(\mathbb{P})} \leq \epsilon_n}   \exp\bigg(    - \frac{n B }{2}  \E_n \big( \| \widehat{m}(W,h) \|_{\ell^2}^2  \big)   \bigg) d  \mu (h|\alpha,K)  \geq   \exp \big(   - C n \log (n)  \epsilon_{n}^2   \big) . $$
The argument to verify this is identical to that of part $(i)$ in Theorem \ref{t1}.

Next,  by Lemma \ref{aux3},  there exists a universal constant $ E > 0 $ such that  for any $ M > 0$ we have \begin{align*}
\int \limits _{\| h  \|_{\mathbf{H}^t} > M  }  d \mu(h| \alpha,K) \leq 2 \exp \big( - E M  \log(n) K       \big) =  2 \exp \big( - E M n \log(n) \epsilon_{n}^2       \big) .
\end{align*}
for all sufficiently large  $n$. 

Furthermore, for all $D > 0 $, we also have the trivial bound \begin{align*}      \int \limits_{  \E_n \big[    \widehat{m}(W,h) ' \widehat{\Sigma}(W,h)  \widehat{m}(W,h)            \big]    >  2D  \sqrt{\log (n)} \epsilon_{n}   }    & \exp\bigg(    - \frac{n}{2}  \E_n \big[    \widehat{m}(W,h) ' \widehat{\Sigma}(W,h)  \widehat{m}(W,h)            \big]       \bigg) d  \mu (h|\alpha,K)  \\ &   \leq    \exp \big(   - D n \log(n) \epsilon_{n}^2   \big) .
\end{align*}
From combining the preceding two bounds with (\ref{conslbtnew}), it follows that  for any $C ' > 0 $, we can pick $D,M > 0 $ sufficiently large such that \begin{align*} 
  & \mu^{CU} \big( \E_n \big[    \widehat{m}(W,h) ' \widehat{\Sigma}(W,h)  \widehat{m}(W,h)            \big]   \leq   D  \sqrt{\log n}  \epsilon_{n},  \| h \|_{\mathbf{H}^t} \leq M  \big|  \alpha,K,\mathcal{Z}_n     \big  )    \\ &        > 1 -   \exp( - C ' n  \log(n)  \epsilon_{n}^2  ) 
\end{align*}
holds with $\mathbb{P}$ probability approaching $1$. By Condition \ref{fsbasis2}, the eigenvalues of $ \inf_{h \in \mathbf{H}^t(M)} \widehat{\Sigma}(W,h)$ are bounded below by some constant $c > 0$ with $\mathbb{P}$ probability approaching $1$. From this observation and Lemma \ref{emp-c}, it follows that for any $C ' > 0 $, we can pick $D,M > 0 $ sufficiently large such that \begin{align*} &
  \mu^{CU} \big( \|  {\Pi}_{K}  m(W,h_{})   \|_{L^2(\mathbb{P})}  \leq   D  \sqrt{\log n}  \epsilon_{n},  \| h \|_{\mathbf{H}^t} \leq M  \big|  \alpha,K,\mathcal{Z}_n     \big  )          \\ &  > 1 -   \exp( - C ' n  \log(n)  \epsilon_{n}^2  ) 
\end{align*}
holds with $\mathbb{P}$ probability approaching $1$. 

By Condition \ref{fsbasis2}$(iii)$ the bias from the $\Pi_K$ projection vanishes uniformly $h \in \mathbf{H}^t(M)$. Define $$ \gamma_n(D,M) = \max \bigg\{   D \sqrt{\log n} \epsilon_n , \sup_{h \in \mathbf{H}^t(M)  } \|(\Pi_K - I) m(W,h)   \|_{L^2(\mathbb{P})}     \bigg  \} . $$
It follows that for any $C ' > 0 $, we can pick $D,M > 0 $ sufficiently large such that \begin{align*} &
  \mu^{CU} \big( \|    m(W,h_{})   \|_{L^2(\mathbb{P})}  \leq   
 \gamma_n,  \| h \|_{\mathbf{H}^t} \leq M  \big|  \alpha,K,\mathcal{Z}_n     \big  )           > 1 -   \exp( - C ' n  \log(n)  \epsilon_{n}^2  ) 
\end{align*}
holds with $\mathbb{P}$ probability approaching $1$. Fix any $\epsilon > 0$. Since $\gamma_n \rightarrow 0$, Lemma \ref{aux4-e} implies  \begin{align*}
  \mu^{CU}( d(h, \Theta_0) \leq \epsilon \big| \alpha,K,\mathcal{Z}_n )  & \geq   \mu \big( \|    m(W,h_{})   \|_{L^2(\mathbb{P})}  \leq   
 \gamma_n,  \| h \|_{\mathbf{H}^t} \leq M  \big|  \alpha,K,\mathcal{Z}_n     \big  ) \\ &           > 1 -   \exp( - C ' n  \log(n)  \epsilon_{n}^2  )
\end{align*}
for all sufficiently large $n$. Since $n \log(n) \epsilon_n^2 \rightarrow  \infty$, the claim follows.

\end{proof}

\begin{proof}[Proof of Theorem \ref{rate}]
Let $C, C'$ denote generic universal constants that may change from line to line.  First, we introduce some preliminary notation that will be used throughout the proof. Let $\lambda_K$ and $c_K$ be as in Assumption \ref{lcurv} and \ref{basis-approx}, respectively. Define     \begin{align}    
\epsilon_{n} =  \frac{\sqrt{ K_n}}{\sqrt{n}}  .   \label{epsg-2}
\end{align}
For ease of notation, we surpress the dependence of $K = K_n$ on $n$ in the remainder of the proof. Fix any function $h: \mathcal{X} \rightarrow \R$. From expanding the quadratic form in the quasi-Bayes objective function, we can write
 \begin{align}
 \label{Enmt2}  &  \E_n \big( \| \widehat{m}(W,h) \|_{\ell^2}^2  \big)   = \sum_{l=1}^{d_{\rho}}  [ \E_n (R_{h, l}^K)      ] ' [ \widehat G_{b,K}^o]^{-1}  [ \E_n (R_{h, l}^K)  ] \\ &  R_{h,l}^K(Z) =     \big[ G_{b,K}^{-1/2}  b^K(W) \big] \rho_{l}(Y,h_{}(X)). \label{Rht2} 
\end{align}
Furthermore, the $L^2(\mathbb{P})$ norm of $\Pi_K m(W,h)$ can be expressed as
\begin{align}
\label{Pimt2}  \|  \Pi_K m(W,h)  \|_{L^2(\mathbb{P})}^2   =  \E  \big( \| \Pi_K m(W,h) \|_{\ell^2}^2 \big) = \sum_{l=1}^{d_{\rho}} \| \E[R_{h,l}^K(Z)]   \|_{\ell^2}^2. 
\end{align}
We denote the empirical analog by  \begin{align} \label{Pemt2}
  \|  \widehat{\Pi}_{K} m(W,h_{})  \|_{L^2(\mathbb{P}_n)}^2 =  \sum_{l=1}^{d_{\rho}}  \| \E_n (R_{h, l}^K)      \|_{\ell^2}^2.
\end{align}
Let $\hat{\lambda}_{K,\min}$ and $\hat{\lambda}_{K,\max}$ denote the minimum and maximum eigenvalues of $   [\widehat{G}_{b,K}^o]^{-1}$. By Lemma \ref{aux2}, we have that \begin{equation} \label{eigt2} \mathbb{P}(0.9 <\hat{\lambda}_{K,\min} \leq  \hat{\lambda}_{K,\max} < 1.1 ) \uparrow 1.
\end{equation}
The proof proceeds through several steps which we outline below.
\begin{enumerate}
\item[\textbf{$(i)$}]
 First, we derive a lower bound for the normalizing constant of the posterior measure.  Specifically, we aim to show that there exists a $C > 0 $ such that \begin{align}   \int  \exp\bigg(    - \frac{n}{2}  \E_n \big[    \widehat{m}(W,h) ' \widehat{\Sigma}(W)  \widehat{m}(W,h)            \big]       \bigg) d  \mu (h|\alpha,K)  \geq   \exp \big(   - C n  \log (n)  \epsilon_{n}^2   \big)  \label{conslbt2} \end{align}
holds with $\mathbb{P}$ probability approaching $1$.

By Assumption \ref{fsbasis}(ii-iii), the eigenvalues of $\widehat{\Sigma}(W)$ are bounded above by some constant $B < \infty $ with $\mathbb{P}$ probability approaching $1$. Therefore, it suffices to verify that  $$  \int  \exp\bigg(    - \frac{n B }{2}  \E_n \big( \| \widehat{m}(W,h) \|_{\ell^2}^2  \big)   \bigg) d  \mu (h|\alpha,K)  \geq   \exp \big(   - C n \log (n)  \epsilon_{n}^2   \big) . $$
Fix any $l \in \{1 , \dots , d_{\rho} \} $. On the set where (\ref{eig}) holds,  we have that \begin{align*}
 &  \E_n (R_{h, l}^K)      ] ' [ \widehat G_{b,K}^o]^{-1}  [ \E_n (R_{h, l}^K)  ]   \leq 1.1 \| \E_n (R_{h, l}^K)  \|_{\ell^2}^2.
\end{align*} 

 Fix any $M > 0 $.  By Lemma \ref{emp-c} and the observation that $\Pi_K$ is a norm decreasing projection, there exists a $C = C(M) < \infty$ such that  \begin{align*} \sum_{l=1}^{d_{\rho}}
\| \E_n(R_{h,l}^K)       \|_{\ell^2}^2 & \leq   \sum_{l=1}^{d_{\rho}} \big(  \| \E_n(R_{h,l}^K) - \E(R_{h,l}^K)      \|_{\ell^2} + \| \E(R_{h,l}^K)  \|_{\ell^2}  \big)^2                        \nonumber \\  & \leq  C  \bigg(          \frac{K}{n} + \|  \Pi_K m(W,h)  \|_{L^2(\mathbb{P})}^2     \bigg) \\ & \leq C  \bigg(          \frac{K}{n} + \|   m(W,h)  \|_{L^2(\mathbb{P})}^2     \bigg)
\end{align*}
holds for all $h \in \mathbf{H}^t(M)$ (with $\mathbb{P}$ probability approaching $1$). Let $B=B(M)$ be as in Assumption \ref{lcurv}. By arguing analogously to the proof of Theorem \ref{t1}, there exists a sequence $\delta_n \rightarrow 0$ such that \begin{align} & \{ h \in \mathbf{H}^t(M) : \|  m(W,h)  \|_{L^2(\mathbb{P})} \leq  B \epsilon_n \:, \:  \| D_{h_0}[h- h_0] \|_{L^2(\mathbb{P})} \leq  \epsilon_n  \} \nonumber \\ & \subseteq \Omega_n = \{  h \in \mathbf{H}^t(M) : \|  m(W,h)  \|_{L^2(\mathbb{P})} \leq  B \epsilon_n \; , \|h - h_0 \|_{L^2(\mathbb{P})} \leq \delta_n   \}  \label{inclusion}
\end{align} 
holds for all sufficiently large $n$. From combining the preceding bounds, it follows that \begin{align*}
 &   \int  \exp\bigg(    - \frac{n B}{2}  \E_n \big( \| \widehat{m}(W,h) \|_{\ell^2}^2  \big)   \bigg) d  \mu (h|\alpha,K)  \\ & \geq  \int \limits _{h \in  \Omega_n }   \exp\bigg(    - \frac{n B }{2}  \E_n \big( \| \widehat{m}(W,h) \|_{\ell^2}^2  \big)   \bigg) d  \mu (h|\alpha,K) \\ &  \geq  \exp \big(  -  C n  \epsilon_{n}^2       \big)  \int \limits _{h \in  \Omega_n }   d  \mu (h|\alpha,K).
\end{align*}
Assumption \ref{lcurv} implies that  $  \|  m(W,h)    \|_{L^2(\mathbb{P})}    \leq B  \|  D_{h_0}[h- h_0]  \|_{L^2(\mathbb{P})}   $ for every $h \in \Omega_n$. The preceding bound and the inclusion in (\ref{inclusion}) implies that  \begin{align*}
    \int \limits _{h \in  \Omega_n }   d  \mu (h|\alpha,K) & \geq \int \limits_{ h \in \mathbf{H}^t(M) : \|  m(W,h)  \|_{L^2(\mathbb{P})} \leq B \epsilon_n \:, \: \| D_{h_0}[h- h_0] \|_{L^2(\mathbb{P})} \leq \epsilon_n } d  \mu (h|\alpha,K) \\ & \geq  \int \limits_{h \in \mathbf{H}^t(M) : \| D_{h_0}[h- h_0] \|_{L^2(\mathbb{P})} \leq \epsilon_n }  d  \mu (h|\alpha,K).
\end{align*}
By Assumption \ref{gp-link}, we have $ \| D_{h_0}[h- h_0] \|_{L^2(\mathbb{P})} \leq E \| h - h_0 \|_{w, \sigma} $ for every $h \in \mathbf{H}^t(M)$ where  $ E < \infty $, $ \| h  \|_{w, \sigma}^2 = \sum_{i=1}^{\infty} \sigma_i^2 \left| \langle h , e_i  \rangle  \right|^2   $ is the weak norm as in (\ref{w-norm})  and  $ \sigma = ( \sigma_i)_{i=1}^{\infty}$ is a non-negative bounded sequence of constants. By absorbing $E$ into $\sigma$, we can without loss of generality assume that $E = 1$. It follows that
 \begin{align*}
    & \int \limits_{h \in \mathbf{H}^t(M) : \| D_{h_0}[h- h_0] \|_{L^2(\mathbb{P})} \leq \epsilon_n }  d  \mu (h|\alpha,K)  \geq  \int \limits_{h \in \mathbf{H}^t(M) : \| h- h_0 \|_{w, \sigma} \leq \epsilon_n }  d  \mu (h|\alpha,K)
\end{align*}
By Lemma \ref{aux3}, there exists a universal constant $D > 0 $ such that for all sufficiently $n$, \begin{align}
\label{gboundt2}  \int \limits _{\| h  \|_{\mathbf{H}^t} > M  }  d \mu(h| \alpha,K) \leq 2 \exp \big( - D M^2  n \log(n)  \epsilon_n^2     \big).
\end{align}
From the preceding bound, it follows that \begin{align*}
&    \int \limits_{h \in \mathbf{H}^t(M) : \| h- h_0 \|_{w, \sigma} \leq \epsilon_n } d  \mu (h|\alpha,K) \\ & =  \int \limits_{ \| h- h_0 \|_{w, \sigma} \leq \epsilon_n } d  \mu (h|\alpha,K) - \int \limits_{\| h- h_0 \|_{w, \sigma} \leq \epsilon_n , \: \| h \|_{\mathbf{H}^t} > M   } d \mu(h|\alpha,K) \\ & \geq  \int \limits_{ \| h- h_0 \|_{w, \sigma} \leq \epsilon_n } d  \mu (h|\alpha,K)  \; \; - \; \; 2 \exp \big( - D M^2  n \log(n) \epsilon_{n}^2    \big)  \;  ,
\end{align*}
where $D $ is as in (\ref{gboundt2}). It remains to lower bound the first integral above. From the representation in (\ref{g-series-4}), the Reproducing Kernel Hilbert Space (RKHS) associated to the Gaussian  random element  $G_{\alpha} $    can be represented as
\begin{align}
\label{RKHS-D-t2-0}  \mathbb{H}_{\alpha}  =  \bigg \{  h \in L^2(\mathcal{X})  : \|h \|_{\mathbb{H}_{\alpha}}^2 =   \sum_{i=1}^{\infty}     i^{ 1 +   2\alpha /d}  \left| \langle h ,  e_i \rangle_{L^2(\mathcal{X})}      \right|^2 < \infty               \bigg \} .
\end{align}
For every $\epsilon > 0 $, denote the concentration function of $\mu(. |\alpha,K)$  with respect to $\| . \|_{w,\sigma}$ and $h_0$ by 
\begin{align} 
\label{conft2} \varphi_{h_0}(\epsilon|\alpha,K) = \inf_{h \in \mathbb{H}_{\alpha}:\| h - h_0   \|_{w, \sigma} \leq \epsilon } \bigg\{ \frac{ \log (n)  K}{2} \| h \|_{\mathbb{H}_{\alpha}}^2 - \log \mathbb{P} \bigg(  \| G_{\alpha}   \|_{w, \sigma} < \epsilon \sqrt{ \log (n) K}           \bigg)             \bigg \}.
\end{align}
By \citep[Proposition 11.19]{ghosal2017fundamentals}, we have that \begin{align*}
\int \limits_{\| h-h_0  \|_{w, \sigma} \leq   \epsilon_n} d \mu(h|\alpha,K) \geq \exp \big(  - \varphi_{h_0}(0.5  \epsilon_n| \alpha,K)        \big).
\end{align*}
Since $h_0 \in \mathcal{H}^p$ for some $p \geq \alpha+d/2$, it follows that $h_0 \in \mathbb{H}_{\alpha}$. In particular, by choosing $h = h_0$ in the infimum defining $\varphi_{h_0}$ in (\ref{conf}), we obtain \begin{align*}
    \varphi_{h_0}(\epsilon_n|\alpha,K) \leq  E \bigg[ K \log n - \log \mathbb{P} \bigg(  \| G_{\alpha}   \|_{w,\sigma} < \epsilon_n \sqrt{ \log (n) K}           \bigg)           \bigg].
\end{align*}
for some universal constant $E < \infty$. We claim that   \begin{align}
    \label{concbound} - \log \mathbb{P} \bigg(  \| G_{\alpha}   \|_{w,\sigma} < \epsilon_n \sqrt{ \log (n) K}           \bigg)  \leq E K \log(n).
\end{align} 
for some universal constant $E < \infty$. Consider first the case where the model is mildly ill-posed so that $\sigma_i \asymp i^{-\zeta/d}$ for some $\zeta \geq 0$ as $i \rightarrow \infty$. From the representation of $G_{\alpha}$ in (\ref{g-series-4}) and an application of  \citep[Lemma 11.47]{ghosal2017fundamentals}, we obtain  \begin{align*}
- \log \mathbb{P} \bigg(  \| G_{\alpha}   \|_{w,\sigma} < \epsilon_n \sqrt{ \log (n) K}           \bigg) \leq C     \big(  \epsilon_{n}  \sqrt{  \log(n) K} \big)^{-d/(\alpha+ \zeta)  }       .
\end{align*}
Since $\epsilon_n =   \sqrt{K}/ \sqrt{n} $ and $K = K_n  \asymp n^{\frac{d}{2[\alpha + \zeta] + d}}$, the bound in (\ref{concbound}) follows from observing that
\begin{align*}
  n^{\frac{d}{2(\alpha + \zeta)}}   \lessapprox K_n^{\frac{d}{2( \alpha + \zeta)}}  n^{\frac{d}{2(\alpha + \zeta)}} \asymp K_n^{1 + \frac{d}{\alpha + \zeta} }.
\end{align*}
Now suppose the model is severely ill-posed so that $\sigma_i \asymp \exp(-R i^{\zeta/d} ) $ for some $R, \zeta \geq 0$ as $i \rightarrow \infty$. It follows from \citep[Lemma 5.1]{ray} that    \begin{align*}
- \log \mathbb{P} \bigg(  \| G_{\alpha}   \|_{w,\sigma} < \epsilon_n \sqrt{ \log (n) K}           \bigg) \leq C     \bigg \{  \log  \bigg ( \frac{1}{  \epsilon_{n}  \sqrt{\log (n) K}   }    \bigg )     \bigg \}^{1 + \frac{d}{\zeta} }        .
\end{align*}
Since $\log(  \gamma_n  ) \asymp \log(n) $ and  $K = K_n \asymp (\log n)^{d/\zeta}$, the bound in (\ref{concbound}) follows.

From combining the bound in (\ref{concbound}) with those preceding it, we obtain \begin{align*}
    \int \limits_{\| h-h_0  \|_{w, \sigma} \leq   \epsilon_n} d \mu(h|\alpha,K) \geq \exp(-E n \log(n) \epsilon_n^2  ).
\end{align*}
for some universal constant $E > 0$ (that does not depend on $M$).  By choosing $ M > 0$ sufficiently large, we obtain   \begin{align*}
& \int \limits_{h \in \mathbf{H}^t(M) : \| h- h_0 \|_{w, \sigma} \leq \epsilon_n } d  \mu (h|\alpha,K) \\ & \geq  \int \limits_{ \| h- h_0 \|_{w, \sigma} \leq \epsilon_n } d  \mu (h|\alpha,K)  \; \; - \; \; 2 \exp \big( - D M^2  n \log(n) \epsilon_{n}^2    \big) \\ & \geq \exp(-E n \log(n) \epsilon_n^2  ) -  2 \exp \big( - D M^2  n \log(n) \epsilon_{n}^2    \big)     \\ & \geq \exp(-C n \log(n) \epsilon_n^2 ) .
\end{align*}
The lower bound in (\ref{conslbt2}) follows from combining all the preceding bounds.
\item[\textbf{$(ii)$}]

We aim to show that given any $D' > 0 $, there exists a universal constant $D > 0 $ such that 
\begin{align*}      \int \limits_{  \|  \widehat{\Pi}_{K} m(W,h)  \|_{L^2(\mathbb{P}_n)}     >  D  \sqrt{\log (n)} \epsilon_{n}   }    & \exp\bigg(    - \frac{n}{2}  \E_n \big[    \widehat{m}(W,h) ' \widehat{\Sigma}(W)  \widehat{m}(W,h)            \big]       \bigg) d  \mu (h|\alpha,K)  \\ &   \leq    \exp \big(   - D' n \log(n) \epsilon_{n}^2   \big)    
\end{align*}
holds with $\mathbb{P}$ probability approaching $1$. By Assumption \ref{fsbasis}$(ii-iii)$,  we can work under the setting where the eigenvalues of $\widehat{\Sigma}$ are bounded below by some constant $b > 0 $. Therefore, it suffices to verify that
\begin{align*}      \int \limits_{  \|  \widehat{\Pi}_{K} m(W,h)  \|_{L^2(\mathbb{P}_n)}     >  D  \sqrt{\log (n)}  \epsilon_{n}   }  & \exp\bigg(    - \frac{n b }{2}  \E_n \big( \| \widehat{m}(W,h) \|_{\ell^2}^2  \big)   \bigg) d  \mu (h|\alpha,K)  \\ &  \leq   \exp \big(   - D' n \log(n) \epsilon_{n}^2   \big)    .
\end{align*}
On the set where (\ref{eigt2}) holds,  we have that \begin{align*}
 &  \E_n (R_{h, l}^K)      ] ' [ \widehat G_{b,K}^o]^{-1}  [ \E_n (R_{h, l}^K)  ]   \geq 0.9  \| \E_n (R_{h, l}^K)  \|_{\ell^2}^2.
\end{align*} 
for every $l \in \{1 , \dots , d_{\rho} \} $.   From this bound and the representation given in (\ref{Enmt2}), it follows immediately that   $ \E_n \big( \| \widehat{m}(W,h) \|_{\ell^2}^2  \big) \geq 0.9  \sum_{l=1}^{d_{\rho}}  \| \E_n (R_{h, l}^K)      \|_{\ell^2}^2  $.  Hence, \begin{align*}
&  \int \limits_{  \|  \widehat{\Pi}_{K} m(W,h)  \|_{L^2(\mathbb{P}_n)}     >  D  \sqrt{\log (n)}  \epsilon_{n}   }  & \exp\bigg(    - \frac{n b }{2}  \E_n \big( \| \widehat{m}(W,h) \|_{\ell^2}^2  \big)   \bigg) d  \mu (h|\alpha,K) \\ & \leq  \int \limits_{  \|  \widehat{\Pi}_{K} m(W,h)  \|_{L^2(\mathbb{P}_n)}     >  D  \sqrt{\log (n)}  \epsilon_{n}   }  & \exp\bigg(    - \frac{n b }{2} 0.9  \|  \widehat{\Pi}_{K} m(W,h)  \|_{L^2(\mathbb{P}_n)}^2     \bigg) d  \mu (h|\alpha,K) \\ & \leq \exp \bigg(   - \frac{nb}{2} 0.9 D^2 \log(n) \epsilon_{n}^2      \bigg).
\end{align*}
For any $D' > 0 $, we can pick $D > 0$ sufficiently large so that the preceding expression is bounded above by $\exp(- D'n \log(n) \epsilon_n^2)$.

\item[\textbf{$(iii)$}] Define \begin{align}
r_n = \begin{cases}     (\log n)^{-1} & \text{mildly ill-posed} \; ,  \\ (\log \log n)^{-1} & \text{severely ill-posed.}     \end{cases}  \label{rn}
\end{align}  
We aim to show that given any $D' > 0 $, there exists $ M > 0 $ such that  \begin{align}
\label{gbound2}  & \int \limits _{\| h  \|_{\mathbf{H}^t} > M  }  d \mu(h| \alpha,K) \leq  \exp \big( -  D'  n  \log (n) \epsilon_{n}^2    \big)\; , \\ & \label{gbound3} \int \limits_{\| h \|_{\mathcal{H}^{\alpha - r_n}} > M  r_n^{-1/2}   } d \mu(h| \alpha,K) \leq  \exp \big(   - D' n \log (n) \epsilon_{n}^2  \big) .
\end{align}
The bound in (\ref{gbound2}) follows immediately from an application of Lemma \ref{aux3}. In particular, there exists a universal constant $ E > 0 $ such that  \begin{align*}
\int \limits _{\| h  \|_{\mathbf{H}^t} > M  }  d \mu(h| \alpha,K) \leq 2 \exp \big( - E M  \log(n) K       \big) =  2 \exp \big( - E M n \log(n) \epsilon_{n}^2       \big) 
\end{align*}
holds for all sufficiently large $n$.  

To show (\ref{gbound3}), first observe that the representation of $G_{\alpha}$ in (\ref{g-series-4}) yields      \begin{align}
\E  \big( \| G_{\alpha} \|_{\mathcal{H}^{\alpha - r_n}}^2 \big) = \sum_{i=1}^{\infty} i ^{2(\alpha - r_n)/d} \lambda_i \; \; \; \; \text{where} \; \; \; \lambda_i \asymp i^{-1 - 2 \alpha/d}. \label{gaussenorm}
\end{align}
From the definition of $r_n$, it follows that $    \E  \big( \| G_{\alpha} \|_{\mathcal{H}^{\alpha - r_n}}^2 \big)  \leq C    r_n^{-1}  $. By an application of Lemma \ref{aux3} and the preceding bound in (\ref{gaussenorm}), we obtain (\ref{gbound3}).

\item[\textbf{$(iv)$}]
We prove the main statement of the theorem. Let $r_n$ be as in (\ref{rn}). From the bounds derived in the preceding steps, it follows that  for any $C ' > 0 $, we can pick $C, M > 0 $ sufficiently large such that
 \begin{align*} &
  \mu \big( \|  \widehat{\Pi}_{K}  m(W,h_{})   \|_{L^2(\mathbb{P}_n)}  \leq   C  \sqrt{\log n}  \epsilon_{n},  \| h \|_{\mathbf{H}^t} \leq M\:, \| h \|_{\mathcal{H}^{\alpha-r_n}} \leq M r_n^{-1/2}  \big|  \alpha,K,\mathcal{Z}_n     \big  )      \\ &      > 1 -   \exp( - C ' n  \log(n)  \epsilon_{n}^2  ) .
\end{align*}
holds with $\mathbb{P}$ probability approaching $1$. By Lemma \ref{emp-c}, the preceding inequality further implies   \begin{align*} &
  \mu \big( \|  {\Pi}_{K}  m(W,h_{})   \|_{L^2(\mathbb{P})}  \leq   C  \sqrt{\log n}  \epsilon_{n},  \| h \|_{\mathbf{H}^t} \leq M\:, \| h \|_{\mathcal{H}^{\alpha-r_n}} \leq M r_n^{-1/2}  \big|  \alpha,K,\mathcal{Z}_n     \big  )      \\ &      > 1 -   \exp( - C ' n  \log(n)  \epsilon_{n}^2  ) .
\end{align*}
holds with $\mathbb{P}$ probability approaching $1$. By arguing analogously to the proof of Theorem \ref{t1}, there exists a sequence $\delta_n \rightarrow 0$ such that \begin{align} & \{ h : \| \Pi_K m(W,h)  \|_{L^2(\mathbb{P})} \leq C \sqrt{\log n} \epsilon_n, \| h \|_{\mathbf{H}^t} \leq M\:, \| h \|_{\mathcal{H}^{\alpha-r_n}} \leq M r_n^{-1/2}  \} \nonumber \\ & \subseteq \Omega_n =   \{ \| \Pi_K m(W,h)  \|_{L^2(\mathbb{P})} \leq C \sqrt{\log n} \epsilon_n, \| h \|_{\mathbf{H}^t} \leq M\:, \| h \|_{\mathcal{H}^{\alpha-r_n}} \leq M r_n^{-1/2}, \|h - h_0 \|_{L^2(\mathbb{P})} \leq \delta_n   \} . \label{inclusion2}
\end{align} 
Observe that for both mildly and severely ill-posed models, we have $  K^{r_n} \lessapprox 1 $. From this bound and  Assumption \ref{basis-approx}, it follows that there exists a sequence of constants $\lambda_K < \infty$ such that $\|  (\Pi_{K} - I) m(W,h)  \|_{L^2(\mathbb{P})}  \leq \lambda_K  B \sigma_{K+1}  K^{- \alpha /d}$ for every $h \in \Omega_n$. It follows that
 \begin{align*} &
  \mu \big( \|  m(W,h)    \|_{L^2(\mathbb{P})}  \leq   C  \big[ \sqrt{\log n}  \epsilon_{n}  +  \lambda_{K}\sigma_{K+1} K^{-\alpha/d}  r_n^{-1/2}   \big],  \| h \|_{\mathbf{H}^t} \leq M\:, \| h \|_{\mathcal{H}^{\alpha-r_n}} \leq M r_n^{-1/2}  \big|  \alpha,K,\mathcal{Z}_n     \big  )      \\ &      > 1 -   \exp( - C ' n  \log(n)  \epsilon_{n}^2  ) .
\end{align*}
holds with $\mathbb{P}$ probability approaching $1$. By Assumption \ref{lcurv}, the preceding bound further implies that \begin{align*} &
  \mu \big( \|  D_{h_0}[h-h_0]    \|_{L^2(\mathbb{P})}  \leq   C  \big[ \sqrt{\log n}  \epsilon_{n}  +  \lambda_{K}\sigma_{K+1} K^{-\alpha/d}  r_n^{-1/2}   \big],  \| h \|_{\mathbf{H}^t} \leq M\:, \| h \|_{\mathcal{H}^{\alpha-r_n}} \leq M r_n^{-1/2}  \big|  \alpha,K,\mathcal{Z}_n     \big  )      \\ &      > 1 -   \exp( - C ' n  \log(n)  \epsilon_{n}^2  ) .
\end{align*}
Fix any $h$ that satisfies the preceding requirements. Given the definition of $K = K_{n}$, we have \begin{align*}
   &  \sqrt{\log n}  \epsilon_{n}  +  \lambda_{K}\sigma_{K+1} K^{-\alpha/d}  r_n^{-1/2} \\ &  \lessapprox \xi_n =    \begin{cases}    \lambda_{K_n} n^{- \frac{ \alpha + \zeta }{2[\alpha + \zeta] + d}}  \sqrt{\log n}    & \text{mildly ill-posed} \; , \\ \max \{ \lambda_{K_n} \sqrt{\log \log n}, \sqrt{\log n}    \} (\log n)^{d/(2 \zeta)} n^{-1/2}   & \text{severely ill-posed}.    \end{cases}
\end{align*}
For $h_0 \in \mathcal{H}^p(R)$ and $h$ as above, we obtain
\begin{align*}
\|   h- h_0 \|_{L^2(\mathbb{P})}^2 = &  \sum_{i=1}^{\infty}  \left|  \langle e_i , h - h_0 \rangle_{L^2(\mathbb{P})}     \right|^2  \\ & =  \sum_{i=1}^{J}  \left|  \langle e_i , h - h_0 \rangle_{L^2(\mathbb{P})}     \right|^2 +  \sum_{i> J}^{}  \left|  \langle e_i , h - h_0 \rangle_{L^2(\mathbb{P})}   \right|^2 \\  &  \leq \big (\max_{i \leq J} \sigma_i^{-2}   \big) \sum_{i=1}^{\infty} \sigma_i^2  \left|  \langle e_i , h - h_0 \rangle_{L^2(\mathbb{P})}     \right|^2 + C  r_n^{-1}   J^{- 2 \alpha/d}
\end{align*}
for every $J \in \mathbb{N}$. By  Condition \ref{gp-link}, we further obtain \begin{align*}
    \|   h- h_0 \|_{L^2(\mathbb{P})}^2 & \leq 
 C \big (\max_{i \leq J} \sigma_i^{-2}   \big )   \|   D_{h_0}[h- h_0]   \|_{L^2(\mathbb{P})}^2   + C  r_n^{-1}   J^{- 2 \alpha/d}   \\ & \leq C \big (\max_{i \leq J} \sigma_i^{-2}   \big )  \xi_n^2 + C r_n^{-1} J^{-2 \alpha/d}.
\end{align*}
We have  $  \sigma_J \asymp J^{- \zeta/d}$ in the mildly ill-posed case  and $\sigma_J \asymp \exp( -R J^{\zeta/d})$ in the severely ill-posed case. By setting $ J \asymp n^{d/[2(\alpha + \zeta) + d]}$ in the mildly ill-posed case and $J = \lfloor (c_0 \log n)^{d/\zeta} \rfloor  $ for  a sufficiently small $c_0$ in the severely ill-posed case, we obtain \begin{align*}
     \|   h- h_0 \|_{L^2(\mathbb{P})}^2   \leq \begin{cases}    \lambda_{K_n}^2 n^{- \frac{2 \alpha}{2[\alpha + \zeta] + d}} \log n    & \text{mildly ill-posed} \; , \\   ( \log n)^{- 2 \alpha/ \zeta} \log \log n  & \text{severely ill-posed}.    \end{cases}
\end{align*}
The claim follows.
\end{enumerate}

\end{proof}

\begin{lemma}
\label{aux4-e}
Suppose Assumptions  \ref{data} and \ref{identif}$(ii)$ holds. Let $ \Theta_0 = \{ h \in L^2(\mathcal{X}):  \| m(W,h) \|_{L^2(\mathbb{P})}  = 0 \} $ denote the identified set. Then, given any $\epsilon > 0 $, there exists  $\delta > 0 $ such that \begin{align*} &  h \in \mathbf{H}^t(M) \; , \; \| m(W,h)  \|_{L^2(\mathbb{P})} < \delta  \implies  d(h, \Theta_0) < \epsilon
\end{align*}
where $d(h,\Theta_0) = \inf_{h' \in \Theta_0} \|  h-h' \|_{L^2(\mathbb{P})} $.
\end{lemma}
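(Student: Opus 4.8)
The plan is to mimic the proof of Lemma \ref{aux4}, exploiting compactness of the Sobolev ball $\mathbf{H}^t(M)$ together with continuity of the map $h \mapsto m(W,h)$. First, by Assumption \ref{data}, the Lebesgue $L^2(\mathcal{X})$ norm and the $\|\cdot\|_{L^2(\mathbb{P})}$ norm are equivalent on $L^2(X)$; since the embedding $\mathbf{H}^t \hookrightarrow L^2(\mathcal{X})$ is compact for $t > 0$, it follows (exactly as in the proof of Lemma \ref{aux4}) that $\mathbf{H}^t(M)$ is a compact subset of $\big(L^2(X),\|\cdot\|_{L^2(\mathbb{P})}\big)$. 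By Assumption \ref{identif}$(ii)$, the map $m(W,\cdot):\big(\mathbf{H}^t(M),\|\cdot\|_{L^2(\mathbb{P})}\big)\rightarrow\big(L^2(W,\R^{d_\rho}),\|\cdot\|_{L^2(\mathbb{P})}\big)$ is Lipschitz, hence continuous, so $g(h) := \|m(W,h)\|_{L^2(\mathbb{P})}$ defines a continuous real-valued function on the compact set $\mathbf{H}^t(M)$.

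Next, fix $\epsilon > 0$ and consider $A_\epsilon = \{h \in \mathbf{H}^t(M) : d(h,\Theta_0) \geq \epsilon\}$. The map $h \mapsto d(h,\Theta_0)$ is $1$-Lipschitz with respect to $\|\cdot\|_{L^2(\mathbb{P})}$, so $A_\epsilon$ is a closed, hence compact, subset of $\mathbf{H}^t(M)$. If $A_\epsilon = \emptyset$ the claim is trivial for any $\delta > 0$; otherwise the continuous function $g$ attains its minimum on $A_\epsilon$ at some $h^* \in A_\epsilon$, and we set $\delta := \min_{h \in A_\epsilon} g(h) = g(h^*)$.

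It remains to check $\delta > 0$. If instead $\delta = 0$, then $\|m(W,h^*)\|_{L^2(\mathbb{P})} = 0$, so $h^* \in \Theta_0$ and therefore $d(h^*,\Theta_0) = 0 < \epsilon$, contradicting $h^* \in A_\epsilon$. Hence $\delta > 0$, and for any $h \in \mathbf{H}^t(M)$ with $\|m(W,h)\|_{L^2(\mathbb{P})} < \delta$ we must have $h \notin A_\epsilon$, i.e.\ $d(h,\Theta_0) < \epsilon$, which is the desired conclusion. This argument is essentially routine; the only mild point to verify is that the minimizer produced by compactness genuinely lies in $\Theta_0$ when $\delta = 0$, which is immediate from the definition of $\Theta_0$ and the continuity of $m(W,\cdot)$. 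No ill-posedness, first-stage estimation, or properties of $\rho$ beyond those already invoked enter, so there is no substantive obstacle.
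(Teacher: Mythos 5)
Your proof is correct and takes essentially the same approach as the paper's: compactness of $\mathbf{H}^t(M)$ under $\|\cdot\|_{L^2(\mathbb{P})}$ (via Rellich--Kondrachov and Assumption \ref{data}), continuity of $h \mapsto \|m(W,h)\|_{L^2(\mathbb{P})}$ from Assumption \ref{identif}$(ii)$, closedness of $\{h \in \mathbf{H}^t(M) : d(h,\Theta_0) \geq \epsilon\}$, and extraction of a positive minimum. In fact you are slightly more careful than the paper, which simply asserts the minimum is $\geq \delta > 0$ without spelling out the contradiction argument; your observation that a zero minimizer would necessarily lie in $\Theta_0$ (directly by the definition of $\Theta_0$) and hence could not be at distance $\geq \epsilon$ from $\Theta_0$ is the missing justification, and you also handle the degenerate case $A_\epsilon = \emptyset$.
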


\begin{proof}[Proof of Lemma \ref{aux4-e}] 
By Assumption \ref{data}, the $\| . \|_{L^2}  $ metric is equivalent to the $\| . \|_{L^2(\mathbb{P})}$ metric. It follows that  $\mathbf{H}^t(M)$ is compact under the $\| . \|_{L^2(\mathbb{P})}$ metric. By Assumption \ref{identif}(ii), the mapping  $ m(W, \: . \:) :  ( \mathbf{H}^t(M) , \| .  \|_{L^2(\mathbb{P})}  ) \rightarrow ( L^2(W,\R^{d_{\rho}}), \| . \|_{L^2(\mathbb{P})}   )   $ is continuous. For any $\epsilon > 0 $, the set $\{ h \in \mathbf{H}^t(M) : d(h,\Theta_0) \geq \epsilon   \}$ is a closed (and hence, compact) subset of $( \mathbf{H}^t(M) , \| .  \|_{L^2(\mathbb{P})}  ) $. As a continuous function over a compact set achieves its infimum, it follows that there exists a $\delta > 0 $ for which \begin{align*}
\inf_{h \in \mathbf{H}^t(M) : d(h,\Theta_0) \geq \epsilon } \|  m(W,h)      \|_{L^2(\mathbb{P})} =  \min_{h \in \mathbf{H}^t(M) : d(h,\Theta_0)\geq \epsilon } \|  m(W,h)      \|_{L^2(\mathbb{P})} \geq \delta.
\end{align*}

\end{proof}

\begin{lemma}
\label{emp-bloc}
Suppose Assumptions (\ref{data}-\ref{residuals2}) hold. Given functions $h(X), h'(X): \mathcal{X} \rightarrow \R $, define
 \begin{align*}  R_{h-h',l} (Z) =\big[ G_{b,K}^{-1/2}  b^K(W) \big] \big[  \big \{  \rho_{}(Y,h_{}(X)) - \rho_{}(Y,h_{}'(X)) \big \}   \big]_{l}  \; \; \; \; \; \; \; \; \; \; \;   l \in \{1 , \dots , d_{\rho} \}.  \end{align*}
Then, given any $ M > 0 $ and a  sequence $\delta_n \downarrow 0$, there exists a universal constant $D  = D(M) < \infty$ such that 
    \begin{align*}
&  \sqrt{n} \sup_{l \in \{1 , \dots ,  d_{\rho} \} }  \E\bigg(     \sup_{h,h'  \in \mathbf{H}^t(M)   : \|  h- h'   \|_{L^2(\mathbb{P})} \leq \delta_n  }   \|   \E_n [ R_{h-h',l}^{K}(Z) ] - \E [ R_{h-h',l}^{K}(Z) ]        \|_{\ell^2}  \bigg)  \\ & \leq D \bigg[   \frac{K^{3/2} \log(K)}{\sqrt{n}}  + \frac{\sqrt{K} \delta_n^{-d/t} }{\sqrt{n}}  + \sqrt{K} \sqrt{\log(K)} \delta_n^{\kappa} + \delta_n^{\kappa - d/(2t)}     \bigg].
\end{align*}

\end{lemma}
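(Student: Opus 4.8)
The plan is to run, with localization, the bracketing chaining argument behind Lemma \ref{emp-b}, exploiting two features special to the \emph{increment} class: first, that the weak variance of the relevant empirical process is of order $\delta_n^{2\kappa}$ rather than $O(1)$ (this is the source of the factors $\delta_n^{\kappa}$ and $\delta_n^{\kappa-d/(2t)}$); and second, that by Assumption \ref{residuals2}(iii), once $n$ is large enough that $\delta_n$ is below the threshold there, the increment $\rho(Y,h(X))-\rho(Y,h'(X))$ is almost surely bounded by $C_4$ uniformly over the localized class, which removes any need for the truncation device of Lemma \ref{emp-c}. Fix $l\in\{1,\dots,d_\rho\}$ and suppress it; write $R_{h-h'}(Z)=G_{b,K}^{-1/2}b^K(W)\,(\rho_l(Y,h(X))-\rho_l(Y,h'(X)))$. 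Using $\|v\|_{\ell^2}=\sup_{\alpha\in\mathbb{S}^{K-1}}\alpha'v$ and $\alpha'(\mathbb{E}_n-\mathbb{E})R_{h-h'}=n^{-1/2}\mathbb{G}_n g_{\alpha,h,h'}$, where $\mathbb{G}_n g=n^{-1/2}\sum_{i=1}^n(g(Z_i)-\mathbb{E}g)$ and $g_{\alpha,h,h'}(Z)=\big(\alpha'G_{b,K}^{-1/2}b^K(W)\big)\big(\rho_l(Y,h(X))-\rho_l(Y,h'(X))\big)$, the quantity in the statement (for fixed $l$, after the $\sqrt{n}$ factor) equals $\mathbb{E}\big[\sup_{\mathcal{G}_n}|\mathbb{G}_n g|\big]$, with $\mathcal{G}_n=\{g_{\alpha,h,h'}:\alpha\in\mathbb{S}^{K-1},\ h,h'\in\mathbf{H}^t(M),\ \|h-h'\|_{L^2(\mathbb{P})}\le\delta_n\}$.

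I would then record three structural facts about $\mathcal{G}_n$. (a) \emph{Envelope}: by Assumption \ref{residuals2}(iii) and $\|G_{b,K}^{-1/2}b^K(W)\|_{\ell^2}=\zeta_{b,K}\lessapprox\sqrt{K}$ (Condition \ref{fsbasis}), $\mathcal{G}_n$ admits an envelope $F$ with $\|F\|_{\infty}\lessapprox\sqrt{K}$, while $\|F\|_{L^2(\mathbb{P})}\lessapprox 1$ since $\|\alpha'G_{b,K}^{-1/2}b^K(W)\|_{L^2(\mathbb{P})}=1$. (b) \emph{Weak variance}: conditioning on $W$ and using the second inequality of Assumption \ref{residuals}(ii) with $\xi=\delta_n$, every $g\in\mathcal{G}_n$ has $\|g\|_{L^2(\mathbb{P})}\le C_1\delta_n^{\kappa}$. (c) \emph{Bracketing entropy}: combining an $\epsilon$-net of $\mathbb{S}^{K-1}$ ($\log$-cardinality $\lessapprox K\log(\zeta_{b,K}/\epsilon)$) with a sup-norm $\xi$-net of $\mathbf{H}^t(M)$, for which $\log N(\mathbf{H}^t(M),\|\cdot\|_{\infty},\xi)\lessapprox\xi^{-d/t}$ by \citep[Proposition C.7]{ghosal2017fundamentals}, and converting a sup-norm $\xi$-ball in $h$ into an $L^2(\mathbb{P})$-bracket of width $\lessapprox\xi^{\kappa}$ for the residual increment via the first inequality of Assumption \ref{residuals}(ii), gives $\log N_{[]}(\mathcal{G}_n,\|\cdot\|_{L^2(\mathbb{P})},u)\lessapprox K\log(\zeta_{b,K}/u)+u^{-d/(\kappa t)}$.

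The core step is a \emph{local} bracketing maximal inequality at the variance scale $\sigma_n:=C_1\delta_n^{\kappa}$ (a chaining bound in the spirit of \citep[Theorem 3.5.13]{gine2021mathematical}, with a Bernstein-type control at the finest scale that uses the a.s. envelope $\|F\|_{\infty}\lessapprox\sqrt{K}$ for the sub-exponential remainder), which yields
\[
\mathbb{E}\big[\sup_{\mathcal{G}_n}|\mathbb{G}_n g|\big]\ \lessapprox\ \int_0^{\sigma_n}\sqrt{\log N_{[]}(\mathcal{G}_n,\|\cdot\|_{L^2(\mathbb{P})},u)}\,du\ +\ \frac{\|F\|_{\infty}}{\sqrt{n}}\,\log N_{[]}(\mathcal{G}_n,\|\cdot\|_{L^2(\mathbb{P})},\sigma_n).
\]
Splitting the entropy from (c) into its two summands: the $\mathbb{S}^{K-1}$-part of the integral contributes $\lessapprox\sqrt{K}\,\sigma_n\sqrt{\log(\zeta_{b,K}/\sigma_n)}\lessapprox\sqrt{K}\sqrt{\log K}\,\delta_n^{\kappa}$, and the Sobolev part contributes $\lessapprox\int_0^{\sigma_n}u^{-d/(2\kappa t)}\,du\lessapprox\delta_n^{\kappa-d/(2t)}$, the integral converging precisely because Assumption \ref{residuals}(ii) forces $t\ge d/\kappa>d/(2\kappa)$; meanwhile the remainder, using $\log N_{[]}(\mathcal{G}_n,\|\cdot\|_{L^2(\mathbb{P})},\sigma_n)\lessapprox K\log K+\delta_n^{-d/t}$ and $\|F\|_{\infty}\lessapprox\sqrt{K}$, contributes $\lessapprox n^{-1/2}\big(K^{3/2}\log K+\sqrt{K}\,\delta_n^{-d/t}\big)$. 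Summing these four contributions reproduces the asserted bound, uniformly over $l$ since the constants $C_1,C_4$ of Assumptions \ref{residuals}--\ref{residuals2} do not depend on the coordinate index.

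I expect the main obstacle to be producing the maximal inequality in its \emph{local} form — at scale $\delta_n^{\kappa}$ rather than at the envelope scale — while (i) keeping the $\mathbb{S}^{K-1}$-chaining from costing more than the $\sqrt{K}$ and $\log K$ factors that appear in the statement, and (ii) using only the $L^2(\mathbb{P})$-modulus of continuity of $\rho$ from Assumption \ref{residuals}, since $\rho$ may be pointwise discontinuous and pathwise continuity is unavailable. The delicate point is that the envelope $\|G_{b,K}^{-1/2}b^K(W)\|_{\infty}\lessapprox\sqrt{K}$ is not bounded uniformly in $K$; isolating its effect into the lower-order remainder $n^{-1/2}K^{3/2}\log K$, rather than letting it degrade the $\delta_n$-rate, is exactly what forces a Bernstein/Talagrand-type local argument in place of the cruder global bracketing bound used in Lemma \ref{emp-b}.
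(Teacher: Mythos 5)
Your proposal is correct and matches the paper's proof essentially line for line: the paper also reduces to $\mathbb{S}^{K-1}$-indexed scalar classes, derives the a.s.\ envelope $C_4\zeta_{b,K}\lessapprox\sqrt{K}$ from Assumption \ref{residuals2}(iii) and the weak variance $C_1\delta_n^{\kappa}$ from Assumption \ref{residuals}(ii), constructs brackets from a sup-norm net of $\mathbf{H}^t(M)$ together with an $\ell^2$-net of $\mathbb{S}^{K-1}$, and applies the local bracketing maximal inequality \citep[Proposition 3.5.15]{gine2021mathematical} whose $J(1+\zeta_{b,K}J/(\sigma_n^2\sqrt{n}))$ form is exactly the Bernstein-remainder chaining bound you describe, with the four final terms arising from the identical split of the entropy integral into its spherical and Sobolev parts.
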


\begin{proof}[Proof of Lemma \ref{emp-bloc}]
It suffices to verify the bound for each  $l \in \{1 , \dots , d_{\rho} \}$ individually. Fix any such $l$. For ease of notation, we suppress the dependence on $l$ and denote the  vector by $   R_{h-h',l}^K(Z) = R_{h-h'}^K(Z)$. Observe that
 \begin{align*} &
\E_{} \bigg[ \sup_{h,h' \in \mathbf{H}^t(M)  : \|  h- h'   \|_{L^2(\mathbb{P})} \leq \delta_n  }  \|   \E_n [R_{h-h'}^K(Z)] - \E_{}[R_{h-h'}^K(Z)]       \|_{\ell^2}  \bigg] \\ & =  \frac{1}{\sqrt{n}} \E \bigg[  \sup_{h,h'  \in \mathbf{H}^t(M)  : \|  h- h'   \|_{L^2(\mathbb{P})} \leq \delta_n  }  \sup_{\gamma \in  \mathbb{S}^{K-1} }  \frac{1}{\sqrt{n}} \sum_{i=1}^n  \gamma' \big(  R_{h-h'}^K(Z_i) -\E_{} [ R_{h-h'}^K(Z) ] \:  \big)           \:  \bigg]  
\end{align*} 
where $\mathbb{S}^{K-1} =  \{  v \in \R^K : \| v \|_{\ell^2} = 1  \}$. Define the class of functions $$\mathcal{F}_K  = \{ \gamma' R_{h-h'}^K(Z)   : h,h'  \in \mathbf{H}^t(M)  \: , \: \| h-h'  \|_{L^2(\mathbb{P})} \leq \delta_n  \: , \:  \gamma \in   \mathbb{S}^{K-1}   \}    . $$
Denote the associated envelope function by $F_{K}(Z_i) =   \sup_{f \in \mathcal{F}_K} \left|   f(Z_i)  \right|   $.  Let $C_4(M)$ be as in Assumption \ref{residuals2}(iii).  By Cauchy-Schwarz and $\delta_n \downarrow 0$,  it follows that \begin{align*}
F_K(Z_i) & \leq  \sup_{\gamma \in \mathbb{S}^{K-1}} \left| \gamma' G_{b,K}^{-1/2} b^K(W) \right| \sup_{h,h' \in \mathbf{H}^t(M), \|h - h'  \|_{L^2(\mathbb{P}) \leq \delta_n}  } \left|  \rho_{l}(Y,h_{}(X)) - \rho_{l}(Y,h_{}'(X)) \right| \\ & \leq C_4 \zeta_{b,K}.
\end{align*}
 Let $C_1(M)$ be as in Assumption \ref{residuals}. For any fixed $\gamma \in \mathbb{S}^{K-1}$, we have that  \begin{align*}
  &  \sup_{h,h' \in \mathbf{H}^t(M), \|h - h'  \|_{L^2(\mathbb{P}) \leq \delta_n}  }  \E \big[  \left|  \gamma' R_{h-h'}^K(Z)   \right|^2     \big]      \\& =   \sup_{h,h' \in \mathbf{H}^t(M), \|h - h'  \|_{L^2(\mathbb{P}) \leq \delta_n}  }   \E \big[   \gamma' G_{b,K}^{-1/2} b^K(W) b^K(W)'  G_{b,K}^{-1/2} \gamma   \left|  \rho_{l}(Y, h(X))  - \rho_{l}(Y, h'(X))   \right|^2     \big] \\ & \leq C_1^2 \delta_n^{2 \kappa}  \E \big[   \gamma' G_{b,K}^{-1/2} b^K(W) b^K(W)' G_{b,K}^{-1/2} \gamma \big] \\ & = C_1^2 \delta_n^{2 \kappa}  \gamma' G_{b,K}^{-1/2} \E \big[ b^K(W) b^K(W)' \big] G_{b,K}^{-1/2} \gamma     \\ &  = C_1^2 \delta_n^{2 \kappa}.
\end{align*}
For ease of exposition in the remainder of the proof, define \begin{align}
    \label{sigman} \sigma_n = \delta_n^{\kappa}.
\end{align}
From the preceding bound, it follows that  $
    \sup_{f \in \mathcal{F}_K}  \|  f \|_{L^2(\mathbb{P})} \leq C_1  \sigma_n$. By an application of \citep[Proposition 3.5.15]{gine2021mathematical},  there exists a universal constant $ L > 0 $ such that \begin{align*}
& \E_{} \bigg[\sup_{h,h' \in \mathbf{H}^t(M) : \|  h- h'   \|_{L^2(\mathbb{P})} \leq \delta_n  }  \|   \E_n [R_{h-h'}^K(Z)] - \E_{}[R_{h-h'}^K(Z)]       \|_{\ell^2}  \bigg] \\ &   \leq \frac{L}{\sqrt{n}}  \int_{0}^{ 2 \sigma_n  }  \sqrt{\log N_{[]}(\mathcal{F}_K, \| . \|_{L^2(\mathbb{P})},  \epsilon  ) } d \epsilon \bigg( 1 +         \frac{\zeta_{b,K}}{\sigma_n^2 \sqrt{n}} \int_{0}^{ 2 \sigma_n  }  \sqrt{\log N_{[]}(\mathcal{F}_K, \| . \|_{L^2(\mathbb{P})},  \epsilon  ) } d \epsilon  \bigg).
\end{align*}

Fix any $\delta > 0$. Let $\{  h_i\}_{i=1}^{T_1}  $  denote a  $\delta$ covering of  $\big( \mathbf{H}^t(M)    , \| . \|_{\infty}  \big)$ and $\{ \gamma_{m} \}_{ m=1}^{T_2} $ denote a $\delta$ covering of $(\mathbb{S}^{K-1} , \| . \|_{\ell^2})$. For $i,j \in \{1 , \dots , T_1 \}$ and  $m \in \{ 1 , \dots , T_2  \}$, define the functions   \begin{align*} e_{i,j,m}(Z) & =   \sup_{ \gamma \in \mathbb{S}^{K-1} : \| \gamma - \gamma_{m}   \|_{\ell^2} < \delta \; , \; h \in \mathbf{H}^t(M) \; , \; h' \in \mathbf{H}^t(M)  } \left| (\gamma - \gamma_m)' [  R_{h}^K(Z) - R_{h'}^K(Z)    ]   \right| \\ & + \sup_{\gamma \in \mathbb{S}^{K-1} \; , \; h \in \mathbf{H}^t(M) :  \| h- h_i   \|_{\infty} < \delta }  \left|  \gamma' [    R_{h}^K(Z) - R_{h_i}^K(Z)       ]        \right| \\ & +  \sup_{\gamma \in \mathbb{S}^{K-1} \; , \; h \in \mathbf{H}^t(M) :  \| h- h_j   \|_{\infty} < \delta }  \left|  \gamma' [    R_{h}^K(Z) - R_{h_j}^K(Z)       ]  \right| .
\end{align*}

Observe that $$ \bigg \{   \gamma_m ' [ R_{h_i}^K(Z) - R_{h_j}^K(Z)  ]    - e_{i,j,m} \; ,  \;   \gamma_m ' [ R_{h_i}^K(Z) - R_{h_j}^K(Z)  ]   + e_{i,j,m}   \bigg \}_{(i,j) \in \{1 , \dots , T_1 \}  \; , m \in \{ 1 , \dots , T_2 \}  }  $$ is a bracket covering for $\mathcal{F}_K$.  Let $C_1(M)$ and $\kappa \in (0,1]$ be as in Assumption \ref{residuals}.  Let $C_2(M)$ be as in Assumption \ref{residuals2}(i).  By Assumptions (\ref{residuals},  \ref{residuals2}(i))  and Cauchy-Schwarz,  we have that

 \begin{align*}
 \| e_{i,j,m} \|_{L^2(\mathbb{P})}  &   \leq   \bigg \|  \sup_{ \gamma \in \mathbb{S}^{K-1} : \| \gamma - \gamma_{m}   \|_{\ell^2} < \delta \; , \; h \in \mathbf{H}^t(M) \; , \; h' \in \mathbf{H}^t(M)  } \left| (\gamma - \gamma_m)' [  R_{h}^K(Z) - R_{h'}^K(Z)    ]   \right|   \bigg \|_{L^2(\mathbb{P})} \\ & + \bigg \|   \sup_{\gamma \in \mathbb{S}^{K-1} \; , \; h \in \mathbf{H}^t(M) :  \| h- h_i   \|_{\infty} < \delta }  \left|  \gamma' [    R_{h}^K(Z) - R_{h_i}^K(Z)       ]        \right|          \bigg \|_{L^2(\mathbb{P})} \\ & + \bigg \|     \sup_{\gamma \in \mathbb{S}^{K-1} \; , \; h \in \mathbf{H}^t(M) :  \| h- h_j   \|_{\infty} < \delta }  \left|  \gamma' [    R_{h}^K(Z) - R_{h_j}^K(Z)       ]  \right|       \bigg \|_{L^2(\mathbb{P})} \\ & \leq 2  \delta  \zeta_{b,K} C_2 +  \delta^{\kappa} \zeta_{b,K} C_1 + \delta^{\kappa} \zeta_{b,K} C_1. 
\end{align*}
In particular, for all $\delta \in (0,1]$, we have that $ \| e_{i,j,m} \|_{L^2(\mathbb{P})}  \leq C \delta^{\kappa} \zeta_{b,K}  $ for $C = 2 C_2 + 2 C_1$. 

By \citep[Proposition C.7]{ghosal2017fundamentals}, we have that $  \log N_{}( \mathbf{H}^t(M),  \| . \|_{ \infty}, \epsilon) \lessapprox  \epsilon^{-d/t}$ as $\epsilon \downarrow 0$.  Since $ \log N(\mathbb{S}^{K-1},\| .\|_{\ell^2}, \epsilon ) \leq K \log (3 \epsilon^{-1} )  $, it follows that there exists a universal constant $ L > 0 $ such that  \begin{align*}
& \int_{0}^{ 2 \sigma_n  }  \sqrt{\log N_{[]}(\mathcal{F}_K, \| . \|_{L^2(\mathbb{P})},  \epsilon  ) } d \epsilon \\ & \leq L \bigg( \sqrt{K} \sqrt{\log \zeta_{b,K} } \sigma_n  + \sqrt{K} \int_{0}^{2 \sigma_n}  \sqrt{\log(\epsilon^{-1})} d \epsilon  + \int_{0}^{2 \sigma_n}  \epsilon^{-d/2 \kappa t}         d \epsilon  \bigg) \\ & \leq L \bigg(   \sqrt{K} \sqrt{\log \zeta_{b,K} } \sigma_n  + \sqrt{K} \sigma_n \sqrt{\log (\sigma_n^{-1})}   + \sigma_n^{1- d/ (2 \kappa t)}   \bigg)  \\ & \leq L \bigg( \sqrt{K} \sqrt{\log K} \sigma_n +   \sigma_n^{1- d/ (2 \kappa t)}      \bigg).
\end{align*}
From the preceding bounds and $  \zeta_{b,K} \lessapprox \sqrt{K} $,  it follows that   \begin{align*} 
 & \sqrt{n} \E_{} \bigg[\sup_{h,h' \in \mathbf{H}^t(M) : \|  h- h'   \|_{L^2(\mathbb{P})} \leq \delta_n  }  \|   \E_n [R_{h-h'}^K(Z)] - \E_{}[R_{h-h'}^K(Z)]       \|_{\ell^2}  \bigg] \\ &   \leq L  \int_{0}^{ 2 \sigma_n  }  \sqrt{\log N_{[]}(\mathcal{F}_K, \| . \|_{L^2(\mathbb{P})},  \epsilon  ) } d \epsilon \bigg( 1 +         \frac{\zeta_{b,K}}{\sigma_n^2 \sqrt{n}} \int_{0}^{ 2 \sigma_n  }  \sqrt{\log N_{[]}(\mathcal{F}_K, \| . \|_{L^2(\mathbb{P})},  \epsilon  ) } d \epsilon  \bigg)  \\ &   \lessapprox \bigg( \sqrt{K} \sqrt{\log K} \sigma_n +   \sigma_n^{1- d/ (2 \kappa t)}      \bigg) + \bigg( \sqrt{K} \sqrt{\log K} \sigma_n +   \sigma_n^{1- d/ (2 \kappa t)}      \bigg)^2 \frac{\sqrt{K}}{ \sigma_n^2 \sqrt{n}}. 
 \end{align*}
By substituting $\sigma_n = \delta_n^{\kappa}$, the preceding term reduces to  
  \begin{align*}
 \bigg( \sqrt{K} \sqrt{\log K} \delta_n^{\kappa} +  \delta_n^{\kappa- d/(2t) }            \bigg)  + \bigg( \sqrt{K} \sqrt{\log K} \delta_n^{\kappa}  +  \delta_n^{\kappa- d/(2t) }        \bigg)^2  \frac{\sqrt{K}}{ \delta_n^{2 \kappa}  \sqrt{n}}.
\end{align*}
The claim follows.

\end{proof}

\begin{lemma}
\label{empsq}
Suppose Assumptions (\ref{data}-\ref{residuals2}) hold. For each realization of $W$, let $\Sigma(W)$ denote a positive definite matrix such that $\mathbb{P}(  \| \Sigma(W) \|_{op} \leq C  ) = 1$ for some $C > 0$. Given any fixed $M > 0 $ and sequences $\delta_n, \gamma_n \downarrow 0$, define the set  \begin{align*}
\Theta_n = \{  h \in \mathbf{H}^t(M) :   \E (  \|  \Pi_K m(W,h)  \|_{\ell^2}^2 ) \leq M  \gamma_n ^2 , \; \| h - h_0  \|_{L^2(\mathbb{P})} \leq M \delta_n  \}.
\end{align*}
Then, there exists a universal constants $D,R < \infty$ such that \begin{align*}
     & \E \bigg( \sup_{h \in  \Theta_n}  \left|
\sum_{i=1}^n \bigg \{ [\Pi_K m(W_i,h)]' \Sigma(W_i) [\Pi_K m(W_i,h)]  - \E  \big( [\Pi_K m(W,h)]' \Sigma(W) [\Pi_K m(W,h)] \big) \bigg \} \right| \bigg) \\ & \leq R \bigg[  \sqrt{n} \gamma_n ^2 K \mathcal{J}(K^{-1/2}) + \gamma_n ^2 K^3 \mathcal{J}^2(K^{-1/2})  \bigg]
\end{align*}
where $\mathcal{J}(.)$ is defined by \begin{align*}
   &  \mathcal{J}(c) =  \int_{0}^{c}  \sqrt{\log N(\mathcal{M}_n , \| . \|_{L^2(\mathbb{P})}, \tau  D \gamma_n   )  } d \tau \; \; \; \; \; \; \; \; \forall  \; c > 0 \\ & \mathcal{M}_n = \{ m(w,h) : h \in \Theta_n   \}.
\end{align*}

\end{lemma}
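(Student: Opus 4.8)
The plan is to view the centered quantity as a higher moment of an empirical process indexed by the function class $\mathcal{G}_n = \{ g_h(Z) = [\Pi_K m(W,h)]' \Sigma(W) [\Pi_K m(W,h)] : h \in \Theta_n \}$, and to apply a maximal inequality of the form in \citep[Proposition 3.5.15]{gine2021mathematical} (the same tool used in Lemma \ref{emp-bloc}), which requires control of (i) the envelope of $\mathcal{G}_n$ in sup-norm, (ii) a uniform $L^2(\mathbb{P})$ bound on $\mathcal{G}_n$, and (iii) a bracketing/uniform entropy integral for $\mathcal{G}_n$. First I would establish the pointwise bound $\| \Pi_K m(w,h) \|_{\ell^2} \leq \zeta_{b,K} \|m(\cdot,h)\|_{L^2(\mathbb{P})} \lessapprox \sqrt{K}\,\gamma_n$ — this follows because $G_{b,K}^{-1/2}b^K(W)$ is an $L^2(\mathbb{P})$-orthonormal system, so $\|\Pi_K m(w,h)\|_{\ell^2} \le \sup_w \|G_{b,K}^{-1/2}b^K(w)\|_{\ell^2} \cdot \|\Pi_K m(\cdot,h)\|_{L^2(\mathbb{P})}$, combined with Condition \ref{fsbasis}$(i)$ and the definition of $\Theta_n$. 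Hence the envelope of $\mathcal{G}_n$ has order $\|\Sigma\|_{op}\, K\, \gamma_n^2 \lessapprox K\gamma_n^2$, and the uniform $L^2(\mathbb{P})$ radius is $\sup_{h\in\Theta_n}\|g_h\|_{L^2(\mathbb{P})} \le C \sup_w\|\Pi_K m(w,h)\|_{\ell^2}\cdot \gamma_n \lessapprox \sqrt{K}\gamma_n^2$.

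Next I would pass from the entropy of $\mathcal{G}_n$ to the entropy of $\mathcal{M}_n = \{m(w,h):h\in\Theta_n\}$. The key step is a Lipschitz-type estimate: for $h,h'\in\Theta_n$,
\begin{align*}
\| g_h - g_{h'} \|_{L^2(\mathbb{P})} &\le \big\| [\Pi_K(m_h-m_{h'})]'\Sigma[\Pi_K(m_h+m_{h'})]\big\|_{L^2(\mathbb{P})} \\
&\lessapprox \sqrt{K}\gamma_n \cdot \sup_w\|\Pi_K(m_h-m_{h'})(w)\|_{\ell^2} \le \sqrt{K}\gamma_n \cdot \zeta_{b,K}\|m_h-m_{h'}\|_{L^2(\mathbb{P})},
\end{align*}
using the same orthonormality bound on one factor and the envelope bound on the other. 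Since $\zeta_{b,K}\lessapprox\sqrt{K}$, this gives $\|g_h-g_{h'}\|_{L^2(\mathbb{P})} \lessapprox K\gamma_n\|m_h-m_{h'}\|_{L^2(\mathbb{P})}$, so a $\tau D\gamma_n$-net of $\mathcal{M}_n$ in $\|\cdot\|_{L^2(\mathbb{P})}$ induces (up to constants) a net of $\mathcal{G}_n$ at scale proportional to $\tau D K\gamma_n^2$, i.e. a fixed fraction of the envelope. Consequently the entropy integral of $\mathcal{G}_n$, rescaled to run up to its $L^2$-radius $\sqrt{K}\gamma_n^2$ (which is $K^{-1/2}$ times the envelope), is bounded by $\sqrt{K}\gamma_n^2 \cdot \mathcal{J}(K^{-1/2})$ after the change of variables, where $\mathcal{J}$ is exactly the integral defined in the statement. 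Feeding the envelope $K\gamma_n^2$, the $L^2$-radius $\sqrt{K}\gamma_n^2$, and this entropy integral into the maximal inequality produces a leading term of order $\sqrt{n}\,(\sqrt{K}\gamma_n^2)\cdot \sqrt{K}\mathcal{J}(K^{-1/2}) = \sqrt{n}\gamma_n^2 K\mathcal{J}(K^{-1/2})$ and a second-order term of order (entropy integral)$^2\times$(envelope)$/(L^2$-radius$)^2 \times \sqrt{K}$-type bookkeeping that collapses to $\gamma_n^2 K^3 \mathcal{J}^2(K^{-1/2})$, matching the two terms in the claimed bound.

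The main obstacle I anticipate is the bookkeeping in the second (quadratic) term of the maximal inequality: one must track the ratio of the sup-norm envelope to the $L^2(\mathbb{P})$ radius (here of order $\sqrt{K}$) and the powers of $K$ and $\gamma_n$ carefully so that the entropy integral $\mathcal{J}(\cdot)$ appears at the correct truncation level $K^{-1/2}$ rather than at $1$ — getting the upper limit of integration right is what forces the appearance of $K^{-1/2}$ and is easy to mishandle. A secondary technical point is justifying that $\mathcal{G}_n$ is uniformly bounded so that the Giné–Nickl inequality applies with the stated envelope; this is where Assumption \ref{residuals2}(i) and the boundedness of $\Sigma(W)$ enter, ensuring $\sup_{h\in\mathbf{H}^t(M)}\|m(W,h)\|_{\ell^2}$ is controlled and hence so is $\|g_h\|_\infty$. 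Everything else — the orthonormality identities, the equivalence of $\|\cdot\|_{L^2}$ and $\|\cdot\|_{L^2(\mathbb{P})}$ from Assumption \ref{data}, and the change of variables in the entropy integral — is routine.
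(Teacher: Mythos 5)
Your proposal follows the same path as the paper's proof: same class $\mathcal{G}_n$, same envelope bound $F\lessapprox K\gamma_n^2$ and $L^2(\mathbb{P})$-radius $\sigma\lessapprox\sqrt{K}\gamma_n^2$ via orthonormality of $G_{b,K}^{-1/2}b^K(W)$ and the $\Theta_n$ constraint, same pointwise Lipschitz estimate transferring the entropy of $\mathcal{G}_n$ at scale $\tau F$ to that of $\mathcal{M}_n$ at scale $\tau D\gamma_n$, and the same maximal inequality (the paper uses Theorem 3.5.4 of Gin\'e--Nickl rather than Proposition 3.5.15, but since your Lipschitz bound is actually pointwise in $w$ rather than merely in $L^2(\mathbb{P})$ as you wrote, the uniform-entropy version applies as needed). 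The final bookkeeping of the quadratic term is described loosely, but the conclusion matches, and the one small slip---writing $\|m(\cdot,h)\|_{L^2(\mathbb{P})}$ where the $\Theta_n$ constraint gives $\|\Pi_K m(\cdot,h)\|_{L^2(\mathbb{P})}$---is harmless since $\Pi_K$ is a contraction.
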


\begin{proof}[Proof of Lemma \ref{empsq}]
Define the class of functions \begin{align*}
\mathcal{F} = \{  g : g(.) = [\Pi_K m(.,h)]' \Sigma(.) [\Pi_K m(.,h)]  : h \in \Theta_n   \}.
\end{align*}
For every fixed $h \in \Theta_n$, we have that
\begin{align*}
  \Pi_{K}[ m(W,h)] &  = \sum_{i=1}^K c_{h,i} [  G_{b,K}^{-1/2} b^K(W) ]_{i} \; \;  , \; \; c_{h,i} =  \E \big[  \rho \big( Y , h(X)   \big) [  G_{b,K}^{-1/2} b^K(W) ]_{i}  \big]  \; ,  \end{align*}
where $[  G_{b,K}^{-1/2} b^K(W) ]_{i}$ denotes the $i^{th}$ element of the vector $  G_{b,K}^{-1/2} b^K(W) $.  For every $ l \in \{  1 , \dots , d_{\rho} \} $, denote by $c_{h}^l$ the coefficient vector \begin{align*}
c_{h}^l = \big \{    \E \big[  \rho_{l} \big( Y , h(X)   \big) [  G_{b,K}^{-1/2} b^K(W) ]_{i}  \big]                    \big   \}_{i=1}^K.
\end{align*}
Observe that  $ \sum_{l=1}^{d_{\rho}}  \| c_h^l \|_{\ell^2}^2 =  \E (  \|  \Pi_K m(W,h)  \|_{\ell^2}^2 ) $. Let $C > 0$ be such that $ \mathbb{P} \big( \| \Sigma(W) \|_{op} \leq C \big) = 1  $. By Cauchy-Schwarz and the definition of $\Theta_n$,  it follows that \begin{align*}
\sup_{g \in \mathcal{F}} \left|    g(W)     \right| \leq C  \sup_{h \in \Theta_n} \| \Pi_K m(W,h)    \|_{\ell^2}^2 &  \leq C \zeta_{b,K}^2  \sum_{l=1}^{d_{\rho}} \|   c_h^l \|_{\ell^2}^2  \\ & = C \zeta_{b,K}^2  \E (  \|  \Pi_K m(W,h)  \|_{\ell^2}^2 ) \\ & \leq C M \zeta_{b,K}^2 \gamma_n ^2.
\end{align*}
From the estimate $\zeta_{b,K} \lessapprox \sqrt{K}$, it follows that $\sup_{g \in \mathcal{F}} \left|  g(W)  \right|  \leq C \gamma_n ^2 K  $ for some constant $C < \infty$. It follows that we can take  $F = C \gamma_n ^2 K$ to be an envelope of $\mathcal{F}$.
From this bound and the definition of $\Theta_n$,  we also obtain \begin{align*}
\sup_{g \in \mathcal{F}} \E[ g^2(W) ] \leq F \sup_{g \in \mathcal{F}} \E[\left| g(W) \right|] 
 \lessapprox F  \sup_{h \in \Theta_n}  \E (  \|  \Pi_K m(W,h)  \|_{\ell^2}^2 ) \lessapprox \gamma_n^4 K.
\end{align*} 
From similar arguments to those employed above, we have for every fixed $h,h' \in \Theta_n$,  the bound \begin{align*}     
&   \sup_{w}  \left| [\Pi_K m(w,h)]' \Sigma(w) [\Pi_K m(w,h)] - [\Pi_K m(w,h')]' \Sigma(w) [\Pi_K m(w,h')] \right| \\ & \lessapprox    \sup_{w} \sup_{g \in \Theta_n}  \left| [\Pi_K m(w,h) - \Pi_K m(w,h')]' \Sigma(w) [\Pi_K m(w,g)]  \right|       \\ & \lessapprox  \sqrt{F}  \sup_{w} \|  \Pi_K m(w,h) - \Pi_K m(w,h')    \|_{\ell^2}         \\ & \lessapprox \gamma_n   K  \sqrt{   \E \big(  \|  \Pi_K m(W,h) - \Pi_K m(W,h')      \|_{\ell^2}^2 \big) }.\\ & \lessapprox \gamma_n   K  \sqrt{   \E \big(  \|   m(W,h) -  m(W,h')      \|_{\ell^2}^2 \big) }.
\end{align*}
In particular,  there exists a universal constant $c >0 $ such that
 \begin{align*} \sup_{Q}
 \log N(  \mathcal{F},   \| . \|_{L^2(Q)}  , \tau F  )  \leq  \log N(\mathcal{M}, \| . \|_{L^2(\mathbb{P})} , c \tau \gamma_n   ) \; \; \; \;\; \; \;\forall \; \; \tau \in (0,1) \;\; ,
\end{align*}
where the supremum is over all discrete probability measures $Q$ on $\mathcal{W}$. From an application of \citep[Theorem 3.5.4]{gine2021mathematical}, it follows that \begin{align*}  \E \bigg( \sup_{g \in \mathcal{F}}  \left|
\sum_{i=1}^n g(W_i) - \E g(W) \right| \bigg)  \lessapprox \sqrt{n} \gamma_n^2 K \mathcal{J}(K^{-1/2}) + \gamma_n^2 K^3 \mathcal{J}^2(K^{-1/2}) .
\end{align*}

\end{proof}

\begin{proof}[Proof of Theorem \ref{bvm}]
Let $C, C'$ denote generic universal constants that may change from line to line. Before proceeding with the proof, we introduce a few definitions, clarify some notation and state a few preliminary observations that will be used throughout the proof. 

Given a  positive semi-definite matrix $\Sigma \in \R^{\rho \times \rho}$, we denote the inner product and norm induced by $\Sigma$ as $ \langle v, w \rangle_{\Sigma} = v' \Sigma w $ and $\| v \|_{\Sigma}^2 = v' \Sigma v$, respectively. With this notation, the quasi-Bayes posterior can be expressed as  \begin{align}  \mu(.|\alpha,K,\mathcal{Z}_n) = \frac{   \exp\big(    - \frac{n}{2}  \E_n \big( \|  \widehat{m}(W,.)  \|_{\widehat{\Sigma}(W)}^2    \big)      \big) d \mu(.|\alpha,K)  }{\int_{} \exp\big(    - \frac{n}{2} \E_n \big( \|  \widehat{m}(W,h)  \|_{\widehat{\Sigma}(W)}^2    \big)      \big) d \mu(h|\alpha,K) } . 
 \label{qb-matrix} \end{align}

Given functions $h,g : \mathcal{X} \rightarrow \R $, for ease of notation, we denote the differenced empirical estimate and projection at $(h,g)$ by
 \begin{align}
  \label{h-g} &  \widehat{m}(W,h-g)  =     \widehat{m}(W,h) - \widehat{m}(W,g) \; ,  \\ & \label{h-g2} \Pi_K m(W,h-g) = \Pi_K m(W,h) - \Pi_K m(W,g).
\end{align}
For close variations of  $\widehat{m}$ introduced below such as $\widetilde{m}$ in (\ref{mwh2}), the notation $\widetilde{m}(W,h-g)$ is to be interpreted similar to (\ref{h-g}). Given a function $h : \mathcal{X} \rightarrow \R$ and $t \in \R$, we denote by $h_t$ the quantity \begin{align}
 \label{ht}   h_t = h - \frac{ t}{\sqrt{n}} \tilde{\Phi}.
\end{align}
Given a vector $v \in \R^n$, we denote the least squares projection of $v$ onto the subspace spanned by $ \{  b_1(W_i) , \dots , b_K(W_i)  \}_{i=1}^n $ by $\widehat\Pi_{K}[v]$. In particular, for every $h: \mathcal{X} \rightarrow \R$ and $l \in \{1 , \dots , d_{\rho}  \}$, we have
\begin{align}
    \label{emp-proj-op} \widehat{\Pi}_K [  \{ \rho_{l}(Y_i,h(X_i)  \}_{i=1}^n ] =   \{ \widehat{m}(W_i,h) \}_{i=1}^n.
\end{align}
From the representation in (\ref{g-series-4}), the Reproducing Kernel Hilbert Space (RKHS) associated to the Gaussian  random element  $G_{\alpha} $    can be represented as
\begin{align}
\label{RKHS-D-t2}  \mathbb{H}_{}  =  \bigg \{  h \in L^2(\mathcal{X})  : \|h \|_{\mathbb{H}_{}}^2 =   \sum_{i=1}^{\infty}     i^{ 1 +   2\alpha /d}  \left| \langle h ,  e_i \rangle_{L^2(\mathcal{X})}      \right|^2 < \infty               \bigg \} .
\end{align}
From this representation, it is straightforward to verify that the RKHS $(\mathbb{H}_n, \| . \|_{\mathbb{H}_n})$ of the scaled measure $ \mu (.|\alpha,K) \sim  G_{\alpha} / \sqrt{K \log n}$ is given by  \begin{align}
\label{RKHS-D2-t2}  \mathbb{H}_{n}  =  \bigg \{  h \in L^2(\mathcal{X})  : \|h \|_{\mathbb{H}_{n}}^2 = K \log(n)   \sum_{i=1}^{\infty}     i^{ 1 +   2\alpha /d}  \left| \langle h ,  e_i \rangle_{L^2(\mathcal{X})}      \right|^2 < \infty               \bigg \} .
\end{align}
Since $\tilde{\Phi} \in \mathbb{H}$, it follows immediately that $\tilde{\Phi} \in \mathbb{H}_n $. By definition of the RKHS, we have  $$ h \sim \mu(.|\alpha,K_n) \implies  \langle h , \tilde{\Phi}    \rangle_{\mathbb{H}_n} \sim N(0,\|  \tilde{\Phi}  \|_{\mathbb{H}_n}^2 ) .  $$
Define the sequences \begin{align}
\label{seqs-0}  & \epsilon_{n}^{} =    \frac{  \sqrt{ K}}{\sqrt{n}}     \; \; , \; \; \delta_{n} =  \begin{cases}     n^{- \frac{\alpha}{2[\alpha + \zeta] + d}} \sqrt{\log n}     & \text{mildly ill-posed}  \\  ( \log n)^{- \alpha/ \zeta} \sqrt{\log \log n}   & \text{severely ill-posed}.    \end{cases}  \\ & r_n = \begin{cases}     (\log n)^{-1} & \text{mildly ill-posed}   \\ (\log \log n)^{-1} & \text{severely ill-posed.}     \end{cases} \nonumber
\end{align}

Given any $D,M > 0$, define the set $\Theta_n = \Theta_n(M,D)$ by  \begin{align}
\label{thetan} \Theta_n =  \bigg \{ h \in \mathbf{H}^t(M)   :&  \|   m(W,h_{})   \|_{L^2(\mathbb{P})}  \leq   D \sqrt{\log n}  \epsilon_{n},  \;  \E_n \big( \| \widehat{m}(W,h) \|_{\ell^2}^2  \big)  \leq   D^2 \log(n)   \epsilon_{n}^2, \\ \nonumber &       \| h- h_0 \|_{L^2(\mathbb{P})} \leq D \delta_n \;, \big|  \langle h ,  \tilde{\Phi}   \rangle_{\mathbb{H}_n}    \big| \leq M \sqrt{n} \sqrt{\log n} \epsilon_n \|  \tilde{\Phi}  \|_{\mathbb{H}_n}, \\ \nonumber   &  \|h \|_{\mathcal{H}^{\alpha-r_n}} \leq M r_n^{-1/2} , \| D_{h_0}[h-h_0]  \|_{L^2(\mathbb{P})} \leq D \sqrt{\log n} \epsilon_n    \bigg \}.
\end{align}
The proof proceeds through several steps which we outline below.
\begin{enumerate}

\item[$(i)$] From the proof of Theorem \ref{rate} and an application of Lemma \ref{aux3} to the Gaussian random variable $ Z_n = \langle h ,  \tilde{\Phi}   \rangle_{\mathbb{H}_n}$, we can choose $D,M > 0 $ large enough such that \begin{align}  \label{theta-n-bound} \mu(\Theta_n^c|\alpha,K,\mathcal{Z}_n)  \leq e^{-R n \epsilon_n^2 \log n}    \end{align}
holds with $\mathbb{P}$ probability approaching $1$, where $R > 0$ is a universal constant (that depends on $D,M$).

Define the localized posterior measure generated by restricting $\mu(.|\alpha,K,\mathcal{Z}_n)$ to $\Theta_n$ by \begin{align} \label{mu-star} \mu^{\star}(A|\alpha,K,\mathcal{Z}_n)  = \frac{  \int_{A \cap \Theta_n} \exp\big(    - \frac{n}{2}  \E_n \big[ \|  \widehat{m}(W,h)  \|_{\widehat{\Sigma}(W)}^2    \big]      \big) d \mu(h|\alpha,K)  }{\int_{\Theta_n} \exp\big(    - \frac{n}{2} \E_n \big[ \|  \widehat{m}(W,h)  \|_{\widehat{\Sigma}(W)}^2    \big]      \big) d \mu(h|\alpha,K) }  
\end{align}
for every Borel set $A$. If $\| . \|_{TV}$ denotes the total variation metric, it follows that
  \begin{align}
\|  \mu(.|\alpha,K,\mathcal{Z}_n)  - \mu^{\star} (.|\alpha,K,\mathcal{Z}_n)    \|_{TV} \leq 2  \mu(\Theta_n^c|\alpha,K,\mathcal{Z}_n) \xrightarrow{\mathbb{P}} 0. \label{tvconv}
\end{align}
Therefore, it suffices to verify the weak convergence under the localized measure $\mu^{\star}(.|\alpha,K,\mathcal{Z}_n)$. 
\item[$(ii)$]
We verify that \begin{align*}
\sup_{h \in \Theta_n} \left| \E_n  \big(   \|   \widehat{m}(W,h-h_0)   \|_{\widehat{\Sigma}(W)}^2    \big) - \E \big \{  \Pi_K m(W,h) ' \Sigma(W) \Pi_K m(W,h)       \big \}                  \right| =o_{\mathbb{P}}(n^{-1}).
\end{align*}
This proceeds in several steps. From the definition of $\Theta_n$, we have that \begin{align*}
&  \sup_{h \in \Theta_n}   \left| \E_n  \big(  \|   \widehat{m}(W,h- h_0)   \|_{\widehat{\Sigma}(W)}^2    \big)  - \E_n  \big(  \|   \widehat{m}(W,h- h_0)   \|_{\Sigma(W)}^2    \big)     \right|  \\ & \leq    \E_n  \big(  \|   \widehat{m}(W,h- h_0)   \|_{\ell^2}^2  \|  \widehat{\Sigma}(W)  - \Sigma(W)    \|_{op}    \big)   \\  & \leq   \sup_{w \in \mathcal{W}}  \|  \widehat{\Sigma}(w)  - \Sigma(w)    \|_{op}     \E_n  \big(  \|   \widehat{m}(W,h- h_0)   \|_{\ell^2}^2 \big) \\ & =   \epsilon_{n}^2 \log(n)  O_{\mathbb{P}} \bigg( \sup_{w \in \mathcal{W}}  \|  \widehat{\Sigma}(w)  - \Sigma(w)    \|_{op}   \bigg)    \\ & =  n^{-1}  O_{\mathbb{P}} \big(  \gamma_n \log(n) K_n   \big) \\ & = n^{-1} o_{\mathbb{P}}(1) .
\end{align*}
For any fixed $h: \mathcal{X} \rightarrow \R $, the estimator $\widehat{m}(w,h)$ can be expressed as \begin{align}
    \label{mwh} \widehat{m}(w,h)  =     \E_n \big(   \rho(Y,h_{}(X)) \big[ G_{b,K}^{-1/2} b^K(W) \big]'    \big)               [ \widehat{G}_{b,K}^{o}  ]^{-1} G_{b,K}^{-1/2}   b^K(w).
\end{align}
It follows that \begin{align*}
   &  \E_n \big( \| \widehat{m}(W,h) \|_{\ell^2}^2  \big)   = \sum_{l=1}^{d_{\rho}}  [ \E_n (R_{h, l}^K)      ] ' [ \widehat G_{b,K}^o]^{-1}  [ \E_n (R_{h, l}^K)  ] \\ &  R_{h,l}^K(Z) =     \big[ G_{b,K}^{-1/2}  b^K(W) \big] \rho_{l}(Y,h_{}(X)). 
\end{align*}
By replacing $\widehat{G}_{b,K}^{o}$ with its population analog $I_K$, we define \begin{align}
    \label{mwh2} \widetilde{m}(w,h)  =     \E_n \big(   \rho(Y,h_{}(X)) \big[ G_{b,K}^{-1/2} b^K(W) \big]'    \big)             G_{b,K}^{-1/2}   b^K(w).
\end{align}
Observe that 
\begin{align*}
& \E_n \big( \| \widehat{m}(W,h) - \widetilde{m}(W,h)   \|_{\ell^2}^2  \big)  \leq  \bigg( \sum_{l=1}^{d_{\rho}}  [ \E_n (R_{h, l}^K)      ] '   [ \E_n (R_{h, l}^K)  ] \bigg) \| \big( [ \widehat{G}_{b,K}^{o}  ]^{-1} - I   \big)   \|_{op}^2 \| \widehat{G}_{b,K}^{o}   \|_{op}.
\end{align*}
With $\mathbb{P}$ probability approaching $1$, an application of Lemma \ref{aux2} implies that the first term on the right is bounded above (up to a constant) by $  \E_n \big( \| \widehat{m}(W,h) \|_{\ell^2}^2  \big)$. Similarly, the second term has asymptotic rate $  \sqrt{K} \sqrt{\log K} / \sqrt{n} $ and the third term is bounded above by a constant. By Assumption \ref{fsbasis}$(iii)$, the eigenvalues of $\Sigma(W)$ are bounded above with probability $1$. By Cauchy-Schwarz and the definition of $\Theta_n$, it follows that
\begin{align*}
& \sup_{h \in \Theta_n} \left| \E_n  \big[  \widehat{m}(W,h-h_0) \Sigma(W) \widehat{m}(W,h-h_0) \big] - \E_n \big[  \widetilde{m}(W,h-h_0) \Sigma(W) \widetilde{m}(W,h-h_0) \big] \right| \\ & = O_{\mathbb{P}} \bigg(  \sup_{h \in \Theta_n}  \sqrt{\E_n \|   \widehat{m}(W,h) - \widetilde{m}(W,h)   \|_{\ell^2}^2  } \sqrt{\E_n \|  \widehat{m}(W,h)  \|_{\ell^2}^2} \bigg)  \\ & =  O_{\mathbb{P}} \big(  \|   [ \widehat{G}_{b,K}^{o}  ]^{-1}  - I_K  \|_{op} \big) \log(n) \epsilon_n^2 \\ & = n^{-1} O_{\mathbb{P}}  \bigg(\frac{ \log(n) K \sqrt{K \log K}  }{\sqrt{n}} \bigg).
\end{align*}
Since $ \log(n) K \sqrt{K \log K} / \sqrt{n} = o(1) $, the preceding term is $o_{\mathbb{P}}(n^{-1}).$

Observe that $\Pi_K m(w,h)$ can be expressed as  \begin{align}
\label{pikmwh2} \Pi_K m(w,h) =     \E \big(   \rho(Y,h_{}(X)) \big[ G_{b,K}^{-1/2}  b^K(W) \big]'    \big)               G_{b,K}^{-1/2}  b^K(w).
\end{align}
By Lemma \ref{aux2}, \ref{emp-bloc} and Condition $\ref{misc1}(ii)$, there exists a sequence $r_n$ satisfying $r_n  \sqrt{K_n} \sqrt{\log n} \downarrow 0$ such that \begin{align*}
   \sup_{h \in \Theta_n} \E_n \| \widetilde{m}(W,h-h_0) - \Pi_K m(W,h-h_0)      \|_{\ell^2}^2  &  \leq  \sup_{h \in \Theta_n} \bigg( \sum_{l=1}^{d_{\rho}}  \|  \E_n  (R_{h, l}^K) - \E  (R_{h, l}^K)  \|_{\ell^2}^2       \bigg)  \| \widehat{G}_{b,K}^{o}   \|_{op} \\ & = O_{\mathbb{P}} \big( n^{-1} r_n^2     \big).
\end{align*}
By Cauchy-Schwarz, it follows that \begin{align*}
& \sup_{h \in \Theta_n} \left| \E_n \big[  \widetilde{m}(W,h-h_0) \Sigma(W) \widetilde{m}(W,h-h_0) \big] - \E_n \big[  \Pi_K m(W,h-h_0) \Sigma(W) \Pi_K m(W,h-h_0) \big] \right| \\ & = O_{\mathbb{P}} \bigg(  \sup_{h \in \Theta_n}  \sqrt{\E_n \|   \widetilde{m}(W,h) - \Pi_K m(W,h)   \|_{\ell^2}^2  } \sqrt{\E_n \|  \widetilde{m}(W,h)  \|_{\ell^2}^2} \bigg)  \\ & =  O_{\mathbb{P}} \big( n^{-1/2} r_n \sqrt{\log n} \epsilon_n    \big) \\ & = n^{-1} O_{\mathbb{P}} \big(  r_n \sqrt{\log n} \sqrt{K_n}   \big) \\ & = n^{-1} o_{\mathbb{P}}(1).
\end{align*}
Finally, by Lemma \ref{empsq} and Condition \ref{misc1}$(i)$, we obtain \begin{align*}
 & \sup_{h \in \Theta_n}  \left| \E_n \big \{   \Pi_K m(W,h)' \Sigma(W) \Pi_K m(W,h) \big \}   - \E \big \{  \Pi_K m(W,h)' \Sigma(W) \Pi_K m(W,h)        \big \} \right| \\ &  =  o_{\mathbb{P}}(n^{-1}).
\end{align*}
\item[\textbf{$(ii)$}] We verify that   \begin{align*}
 &  \sup_{h \in \Theta_n}  \left|     \E \big \{  \Pi_K m(W,h)' \Sigma(W) \Pi_K m(W,h)        \big \}  - \E \big (  \Pi_K D_{h_0}[h-h_0] ' \Sigma(W) \Pi_K  D_{h_0}[h-h_0]        \big )                         \right|  = o(n^{-1}).
\end{align*}
Denote the remainder obtained from linearizing the map at $h$ by \begin{align} \label{rem-h-proof}
R_{h_0}(h,W) = m(W,h) - m(W,h_0) -  D_{h_0}[h- h_0] .
\end{align}
Observe that \begin{align*}
   & \E \big \{  \Pi_K m(W,h)' \Sigma(W) \Pi_K m(W,h) \big \}   - \E \big (  \Pi_K D_{h_0}[h-h_0] ' \Sigma(W) \Pi_K  D_{h_0}[h-h_0]        \big )   \\ & = \E \big [  \Pi_K  R_{h_0}(h,W) ' \Sigma(W) \Pi_K  R_{h_0}(h,W) \big ] + 2 \E  \big [  \Pi_K R_{h_0}(h,W)  ' \Sigma(W)  \Pi_K D_{h_0}[h- h_0]    \big ] .
 \end{align*}
Since the eigenvalues of $\Sigma(.)$ are uniformly  bounded above, Cauchy-Schwarz yields \begin{align*}
   & n \sup_{h \in \Theta_n}  \left| \E \big \{  \Pi_K m(W,h)' \Sigma(W) \Pi_K m(W,h) \big \}   - \E \big (  \Pi_K D_{h_0}[h-h_0] ' \Sigma(W) \Pi_K  D_{h_0}[h-h_0]        \big )    \right| \\  & \lessapprox  n  \sup_{h \in \Theta_n} \bigg[ \| \Pi_K R_{h_0}(h,W)  \|_{L^2(\mathbb{P})}^2  + \| \Pi_K R_{h_0}(h,W)  \|_{L^2(\mathbb{P})} \| \Pi_K D_{h_0}[h-h_0]   \|_{L^2(\mathbb{P})}  \bigg] \\ & \lessapprox n  \sup_{h \in \Theta_n}\bigg[   \| \Pi_K R_{h_0}(h,W)  \|_{L^2(\mathbb{P})}^2 + \| \Pi_K R_{h_0}(h,W)  \|_{L^2(\mathbb{P})} \sqrt{\log n} \epsilon_n    \bigg] \\ & = n \sup_{h \in \Theta_n} \bigg[   \| \Pi_K R_{h_0}(h,W)  \|_{L^2(\mathbb{P})}^2 + \| \Pi_K R_{h_0}(h,W)  \|_{L^2(\mathbb{P})} \sqrt{\log n} \sqrt{K}n^{-1/2}   \bigg] .    
\end{align*}
The preceding quantity is $o(1)$ by Condition \ref{misc1}$(iii)$.

\item[\textbf{$(iii)$}]
By repeating the argument from parts $(i-ii)$, we  similarly obtain for every fixed $t \in \R$, the estimate
  \begin{align*}
& \sup_{h \in \Theta_n} \left| \E_n  \big(   \|   \widehat{m}(W,h_t-h_0)   \|_{\widehat{\Sigma}(W)}^2    \big) - \E \big (  \Pi_K D_{h_0}[h_t - h_0]  ' \Sigma(W) \Pi_K D_{h_0}[h_t - h_0]  \big )                  \right| \\ & =o_{\mathbb{P}}(n^{-1}).
\end{align*}

\item[\textbf{$(iv)$}]

Define  \begin{align} \label{smean}  S_n =    \E_n \big[  \langle  \rho(Y,h_0(X)), D_{h_0}[\tilde{\Phi} ](W)        \rangle_{\Sigma(W)}            \big].
\end{align}
For any fixed $t \in \R$, we aim to verify that \begin{align}
 \label{sn-verify} \sup_{h \in \Theta_n} \left| \E_n \big(   \langle  \widehat{m}(W,h_0)   ,  \widehat{m}(W,h-h_t)     \rangle_{\widehat{\Sigma}(W)}     \big) - \frac{t}{\sqrt{n}} S_n \right|   = o_{\mathbb{P}}(n^{-1}).
\end{align}
By a similar argument to parts $(i-ii)$, it is straightforward to verify that \begin{align*}
 & \sup_{h \in \Theta_n} \left| \E_n \big[   \langle  \widehat{m}(W,h_0)   ,  \widehat{m}(W,h-h_t)     \rangle_{\widehat{\Sigma}(W)}     \big] - \E_n \big[   \langle  \widehat{m}(W,h_0)   ,  \widehat{m}(W,h-h_t)     \rangle_{\Sigma(W)}     \big] \right|   = o_{\mathbb{P}}(n^{-1}) \\ & \sup_{h \in \Theta_n} \left| \E_n \big[   \langle  \widehat{m}(W,h_0)   ,  \widehat{m}(W,h-h_t)     \rangle_{\Sigma(W)}     \big] - \E_n \big[   \langle  \widehat{m}(W,h_0)   ,  \Pi_K m(W,h-h_t)     \rangle_{\Sigma(W)}     \big]            \right|  = o_{\mathbb{P}}(n^{-1}).
\end{align*}
By orthogonality of the least squares projection, we can write \begin{align*}
 \E_n \big[   \langle  \widehat{m}(W,h_0)   ,  \Pi_K m(W,h-h_t)     \rangle_{\Sigma(W)}     \big]  &  =  \E_n \big[   \langle  \widehat{m}(W,h_0)   ,  \Sigma(W) \Pi_K m(W,h-h_t)     \rangle_{}     \big] \\ & =   \E_n \big[   \langle  \rho(Y,h_0(X))  ,  \widehat{\Pi}_K \big[ \Sigma(W) \Pi_K m(W,h-h_t)  \big]   \rangle_{}     \big] \; ,
\end{align*}
where $\widehat{\Pi}_K$ is the operator as defined in (\ref{emp-proj-op}). By interchanging $\E_n$ and the inner product, the preceding term can be expressed as an inner product of two vectors in $\R^{d_{\rho}}$. In particular, $\E_n \big[   \langle  \widehat{m}(W,h_0)   ,  \Pi_K m(W,h-h_t)     \rangle_{\Sigma(W)}     \big] = \sum_{i=1}^{d_{\rho}}  V_i $ where
\begin{align*}
V_l =  \E_n \big(   [ \Sigma(W) \Pi_K m(W,h-h_t)  ]_{l} \big[ G_{b,K}^{-1/2}b^K  (W) \big]'      \big) [ \widehat{G}_{b,K}^{o}  ]^{-1}    \frac{1}{n}  \sum_{i=1}^n   G_{b,K}^{-1/2}   b^K(W_i)                 \rho_{l}(Y_i,h_0(X_i)).
\end{align*}
Similarly, we can express $\E_n \big[   \langle  \rho(Y,h_0(X))  , \Pi_K \big[ \Sigma(W) \Pi_K m(W,h-h_t)  \big]   \rangle_{}     \big] $ as $\sum_{i=1}^{d_{\rho}} \widetilde{V}_i $ where
\begin{align*}
  \widetilde{V}_l =   \E \big(   [ \Sigma(W) \Pi_K m(W,h-h_t)  ]_{l} \big[ G_{b,K}^{-1/2}b^K  (W) \big]'      \big)    \frac{1}{n}  \sum_{i=1}^n   G_{b,K}^{-1/2}   b^K(W_i)                 \rho_{l}(Y_i,h_0(X_i)).
\end{align*}
The $\|. \|_{\ell^2}$ norm of the sample average on the right is of order $ \sqrt{K} /  \sqrt{n}  $ (by Lemma \ref{emp-b}). As the eigenvalues of $\Sigma(.)$ are uniformly bounded above, a straightforward application of Lemma \ref{emp-bloc} and Condition \ref{misc1}$(ii)$  implies that \begin{align*}
  \E  \bigg [  \sup_{h \in \Theta_n} \|  (\E_n - \E) \big(   [ \Sigma(W) \Pi_K m(W,h-h_t)  ]_{l} \big[ G_{b,K}^{-1/2}b^K  (W) \big]'      \big)      \|_{\ell^2} \bigg ]  \leq \frac{r_n}{\sqrt{n}}
\end{align*}
for some sequence $r_n $ satisfying $r_n \sqrt{K} \sqrt{\log n} \downarrow 0$. Furthermore, by Lemma \ref{aux2}, we have $ \| [ \widehat{G}_{b,K}^{o}  ]^{-1}  -I_K  \|_{op} \leq C  \sqrt{K \log(K)} /\sqrt{n}       $ with $\mathbb{P}$ probability approaching $1$. From combining the preceding bounds and an application of Cauchy-Schwarz, we obtain
\begin{align*}
& \sup_{h \in \Theta_n}  \left| \E_n \big[   \langle  \widehat{m}(W,h_0)   ,  \Pi_K m(W,h-h_t)     \rangle_{\Sigma(W)}     \big]  - \E_n \big[   \langle  \rho(Y,h_0(X))  , \Pi_K \big[ \Sigma(W) \Pi_K m(W,h-h_t)  \big]   \rangle_{}     \big]  \right| \\ & = o_{\mathbb{P}}(n^{-1}).
\end{align*}
Next, write $m(W,h) = R_{h_0}(h,W) +D_{h_0}[h-h_0]$ to obtain \begin{align*}
& \E_n \big[   \langle  \rho(Y,h_0(X))  , \Pi_K \big[ \Sigma(W) \Pi_K m(W,h-h_t)  \big]   \rangle_{}     \big] \\ & = \E_n \big[   \langle  \rho(Y,h_0(X))  , \Pi_K \big[ \Sigma(W) \Pi_K R_{h_0}(h,W)  \big]   \rangle_{}     \big] - \E_n \big[   \langle  \rho(Y,h_0(X))  , \Pi_K \big[ \Sigma(W) \Pi_K R_{h_0}(h_t,W)  \big]   \rangle_{}     \big] \\ & + \E_n \big[   \langle  \rho(Y,h_0(X))  , \Pi_K \big[ \Sigma(W) \Pi_K D_{h_0}[h- h_t]   \big]   \rangle_{}     \big].
\end{align*}
By interchanging $\E_n$ and the inner product as above, the first two terms on the right side of the equality can be analyzed through the terms \begin{align*}
&  Q_{i,1} =  \E \big(   [ \Sigma(W) \Pi_K R_{h_0} (h,W)  ]_{l} \big[ G_{b,K}^{-1/2}b^K  (W) \big]'      \big)    \frac{1}{n}  \sum_{i=1}^n   G_{b,K}^{-1/2}   b^K(W_i)                 \rho_{l}(Y_i,h_0(X_i)) \; , \\ & Q_{i,2} =      -  \E \big(   [ \Sigma(W) \Pi_K R_{h_0} (h_t,W)  ]_{l} \big[ G_{b,K}^{-1/2}b^K  (W) \big]'      \big)    \frac{1}{n}  \sum_{i=1}^n   G_{b,K}^{-1/2}   b^K(W_i)                 \rho_{l}(Y_i,h_0(X_i)).
\end{align*}
The $\| . \|_{\ell^2}$ norm of the sample average on the right of both the preceding terms is of order $ \sqrt{K} /  \sqrt{n}  $ (by Lemma \ref{emp-b}). Furthermore, by the Bessel inequality, we obtain
\begin{align*}
  & \|  \E \big(   [ \Sigma(W) \Pi_K R_{h_0} (h,W)  ]_{l} \big[ G_{b,K}^{-1/2}b^K  (W) \big] \|_{\ell^2}^2 \leq  \|  [ \Sigma(W) \Pi_K R_{h_0} (h,W)  ]_{l}   \|_{L^2(\mathbb{P})}^2 \;, \\ & \|  \E \big(   [ \Sigma(W) \Pi_K R_{h_0} (h_t,W)  ]_{l} \big[ G_{b,K}^{-1/2}b^K  (W) \big] \|_{\ell^2}^2 \leq  \|  [ \Sigma(W) \Pi_K R_{h_0} (h_t,W)  ]_{l}   \|_{L^2(\mathbb{P})}^2.
\end{align*}
As the eigenvalues of $\Sigma(.)$ are uniformly bounded above, the preceding bounds imply the expansion \begin{align*}
   & \E_n \big[   \langle  \rho(Y,h_0(X))  , \Pi_K \big[ \Sigma(W) \Pi_K m(W,h-h_t)  \big]   \rangle_{}     \big] \\ & =   \E_n \big[   \langle  \rho(Y,h_0(X))  , \Pi_K \big[ \Sigma(W) \Pi_K D_{h_0}[h- h_t]   \big]   \rangle_{}     \big]    \\ &   + \frac{\sqrt{K}}{\sqrt{n}} O_{\mathbb{P}} \bigg(  \sup_{h \in \Theta_n}   \| \Pi_K R_{h_0}(h,W)  \|_{L^2(\mathbb{P})}  + \sup_{h \in \Theta_n}  \| \Pi_K R_{h_0}(h_t,W)  \|_{L^2(\mathbb{P})}     \bigg)
\end{align*}
uniformly over $h \in \Theta_n$. Hence, by Condition \ref{misc1}$(iii)$, it follows that \begin{align*}
    & \E_n \big[   \langle  \rho(Y,h_0(X))  , \Pi_K \big[ \Sigma(W) \Pi_K m(W,h-h_t)  \big]   \rangle_{}     \big] \\ &  =  \E_n \big[   \langle  \rho(Y,h_0(X))  , \Pi_K \big[ \Sigma(W) \Pi_K D_{h_0}[h- h_t]   \big]   \rangle_{}     \big] + o_{\mathbb{P}}(n^{-1})
\end{align*}
uniformly over $h \in \Theta_n$. By construction $h - h_t =  t \tilde{\Phi} / \sqrt{n}  $ and $D_{h_0}(.)$ is a linear operator. It follows that the preceding term can be expressed as \begin{align*}
\E_n \big[   \langle  \rho(Y,h_0(X))  , \Pi_K \big[ \Sigma(W) \Pi_K D_{h_0}[h- h_t]   \big]   \rangle_{}     \big] = \ \frac{t}{\sqrt{n}} \E_n \big[   \langle  \rho(Y,h_0(X))  , \Pi_K \big[ \Sigma(W) \Pi_K D_{h_0}[ \tilde{\Phi}  ]   \big]   \rangle_{}     \big] .
\end{align*}
Hence, to show (\ref{sn-verify}), it suffices to verify that \begin{align} \label{sn-verify-new}
     \E_n \big[   \langle  \rho(Y,h_0(X))  , \Pi_K \big[ \Sigma(W) \Pi_K D_{h_0}[ \tilde{\Phi}  ]   \big]   \rangle_{}     \big] =    \E_n \big[   \langle  \rho(Y,h_0(X))  ,   \Sigma(W)  D_{h_0}[ \tilde{\Phi}  ]     \rangle_{}     \big] + o_{\mathbb{P}}(n^{-1/2}).
\end{align}
Observe that the sample mean is over a mean zero random variable, since $\E[\rho(Y,h_0(X))|W] = m(W,h_0) =  0$. Furthermore, since $\E\big( \| \rho(Y,h_0(X)) \|_{\ell^2}^2|W )$ is bounded above (with $\mathbb{P}$ probability $1$), we have that \begin{align*}
  & n \E  \left|   \E_n \big[ \langle  \rho(Y,h_0(X))  , (\Pi_K -I)\big[ \Sigma(W) \Pi_K D_{h_0}[\tilde{\Phi}] (W)  \big]   \rangle  ]  \right|^2 \\ & = \E \bigg( \left|  \langle  \rho(Y,h_0(X))  , (\Pi_K -I)\big[ \Sigma(W) \Pi_K D_{h_0}[\tilde{\Phi}] (W)  \big]   \rangle       \right|^2    \bigg) \\ & \rightarrow 0.
\end{align*}
Similarly, we obtain \begin{align*}
  & n \E  \left|   \E_n \big[  \langle  \rho(Y,h_0(X))  , \big[ \Sigma(W) (\Pi_K-I) D_{h_0}[\tilde{\Phi}] (W)  \big]   \rangle  ]  \right|^2 \\ & = \E \bigg( \left|   \langle  \rho(Y,h_0(X))  , \big[ \Sigma(W) (\Pi_K-I) D_{h_0}[\tilde{\Phi}] (W)  \big]   \rangle    \right|^2    \bigg) \\  & \rightarrow 0.
\end{align*}
The expression in $(\ref{sn-verify-new})$ follows by Markov's inequality.

\item[\textbf{$(v)$}]

The preceding steps $(i-iv)$ show that  \begin{align*}
& \E_n  \big(   \|   \widehat{m}(W,h)   \|_{\widehat{\Sigma}(W)}^2    \big) - \E_n  \big(    \|   \widehat{m}(W,h_t)   \|_{\widehat{\Sigma}(W)}^2    \big) \\ & =   \E_n \big(  \| \widehat{m}(W,h-h_0)     \|_{\widehat{\Sigma}(W)}^2         \big) - \E_n \big(  \| \widehat{m}(W,h_t-h_0)     \|_{\widehat{\Sigma}(W)}^2         \big) + 2 \E_n \big[   \langle  \widehat{m}(W,h_0)   ,  \widehat{m}(W,h-h_t)     \rangle_{\widehat{\Sigma}(W)}      \big] \\ & = \E \big( \| \Pi_K D_{h_0}[h-h_0]   \|_{\Sigma(W)}^2      \big) - \E \big( \| \Pi_K D_{h_0}[h_t-h_0]   \|_{\Sigma(W)}^2      \big) +2 \frac{t}{\sqrt{n}} S_n + o_{\mathbb{P}}(n^{-1})
\end{align*}
uniformly over $h \in \Theta_n$, where $S_n$ is as in (\ref{smean}). Furthermore, since $D_{h_0}(.)$ is a linear operator, we obtain \begin{align*}
& \frac{n}{2} \bigg[ \E \big( \| \Pi_K D_{h_0}[h-h_0]   \|_{\Sigma(W)}^2      \big) - \E \big( \| \Pi_K D_{h_0}[h_t-h_0]   \|_{\Sigma(W)}^2      \big) \bigg] \\ & = - \frac{t^2}{2 } \E \big( \|   \Pi_K  D_{h_0}[\tilde{\Phi}]     \|_{\Sigma(W)}^2   \big)  + t \sqrt{n} \E \big[  \langle  \Pi_K  D_{h_0}[  h - h_0   ],  \Pi_K D_{h_0}[\tilde{\Phi}]           \rangle_{\Sigma(W)}        \big] .
\end{align*}
For the first term, continuity yields $$  - \frac{t^2}{2 } \E \big( \|   \Pi_K  D_{h_0}[\tilde{\Phi}]     \|_{\Sigma(W)}^2   \big) = - \frac{t^2}{2 } \E \big( \|    D_{h_0}[\tilde{\Phi}]     \|_{\Sigma(W)}^2   \big) + o(1). $$
For the second term, we expand it as \begin{align*}
& \E \big[  \langle  \Pi_K  D_{h_0}[  h - h_0   ],  \Pi_K D_{h_0}[\tilde{\Phi}]           \rangle_{\Sigma(W)}        \big]  \\ & = \E \big[  \langle  \Pi_K  D_{h_0}[  h - h_0   ],   \Pi_K \big\{ \Sigma(W)  \Pi_K D_{h_0}[\tilde{\Phi}] \big \}          \rangle        \big]  \\ & = \E \big[  \langle  \Pi_K  D_{h_0}[  h - h_0   ],   \Pi_K \big\{ \Sigma(W)  (\Pi_K-I) D_{h_0}[\tilde{\Phi}] \big \}          \rangle        \big]  +  \E \big[  \langle  \Pi_K  D_{h_0}[  h - h_0   ],   \Pi_K \big\{ \Sigma(W)   D_{h_0}[\tilde{\Phi}] \big \}          \rangle        \big] .
\end{align*}
Since the eigenvalues of $\Sigma(.)$ are uniformly bounded above, Cauchy-Schwarz yields \begin{align*}
   &  \sup_{h \in \Theta_n}  \sqrt{n} \left| \E \big[  \langle  \Pi_K  D_{h_0}[  h - h_0   ],   \Pi_K \big\{ \Sigma(W)  (\Pi_K - I) D_{h_0}[\tilde{\Phi}] \big \}          \rangle        \big]  \right| \\ & \lessapprox \sqrt{n} \epsilon_n \sqrt{\log n} \| (\Pi_K - I) D_{h_0}[\tilde{\Phi}]   \|_{L^2(\mathbb{P})} \\ & = \sqrt{K} \sqrt{\log n}  \| (\Pi_K - I) D_{h_0}[\tilde{\Phi}]   \|_{L^2(\mathbb{P})} \\ & = o(1).
\end{align*}
Next, by orthogonality we have that \begin{align*}
    & \E \big[  \langle  \Pi_K  D_{h_0}[  h - h_0   ],   \Pi_K \big\{ \Sigma(W)   D_{h_0}[\tilde{\Phi}] \big \}          \rangle        \big] \\ & = \E \big[  \langle    D_{h_0}[  h - h_0   ],     \Sigma(W)   D_{h_0}[\tilde{\Phi}]           \rangle        \big] + \E \big[  \langle  (\Pi_K-I)  D_{h_0}[  h - h_0   ],   (\Pi_K-I) \big\{ \Sigma(W)   D_{h_0}[\tilde{\Phi}] \big \}          \rangle        \big].
\end{align*}
By Cauchy-Schwarz, we obtain \begin{align*}
  & \sup_{h \in \Theta_n}  \sqrt{n} \left|   \E \big[  \langle  (\Pi_K-I)  D_{h_0}[  h - h_0   ],   (\Pi_K-I) \big\{ \Sigma(W)   D_{h_0}[\tilde{\Phi}] \big \}          \rangle        \big]    \right| \\ & \lessapprox  \sqrt{n} \epsilon_n \sqrt{\log n} \| (\Pi_K - I) \Sigma(W) D_{h_0}[\tilde{\Phi}]    \|_{L^2(\mathbb{P})}  \\ & = \sqrt{K} \sqrt{\log n}  \| (\Pi_K - I) \Sigma(W) D_{h_0}[\tilde{\Phi}]    \|_{L^2(\mathbb{P})}  \\ & = o(1).
\end{align*}
From combining the preceding bounds, we obtain the expansion
\begin{align*}
&  \frac{-n}{2} \bigg[ \E_n  \big(   \|   \widehat{m}(W,h)   \|_{\widehat{\Sigma}(W)}^2    \big) - \E_n  \big(    \|   \widehat{m}(W,h_t)   \|_{\widehat{\Sigma}(W)}^2    \big) \bigg] \\ & =    \frac{t^2}{2 } \E \big( \|    D_{h_0}[\tilde{\Phi}]     \|_{\Sigma(W)}^2   \big) - t \sqrt{n} \E \big[  \langle    D_{h_0}[  h - h_0   ],        D_{h_0}[\tilde{\Phi}]           \rangle_{\Sigma(W)}        \big] - t \sqrt{n} S_n +  o_{\mathbb{P}}(1)
\end{align*}
uniformly over $h \in \Theta_n$. Furthermore, by definition of the adjoint $D_{h_0}^*$ and Condition \ref{misc2}$(i)$, we can write \begin{align*}
    t \sqrt{n} \E  \big[ \langle    D_{h_0}[  h - h_0   ],        D_{h_0}[\tilde{\Phi}]           \rangle_{\Sigma(W)}        \big] &= t \sqrt{n}  \langle      h - h_0   ,       D_{h_0}^*  D_{h_0}[\tilde{\Phi}]           \rangle_{L^2(\mathbb{P})}     \\ & = t \sqrt{n}   \langle      h - h_0   ,       \Phi          \rangle_{L^2(\mathbb{P})} .
\end{align*}

\item[\textbf{$(vi)$}]
We compute the Laplace transform of the random variable $\sqrt{n} \big[ \langle h  - h_0 , \Phi \rangle_{L^2(\mathbb{P})} + S_n \big]$ where $h \sim \mu^*(.|\alpha,K,\mathcal{Z}_n)$. Fix any $t \in \R$. From the conclusion of part $(v)$, the Laplace transform is \begin{align*}
&  \E^*  \bigg[ \exp \bigg \{t  \sqrt{n} \big[ \langle  h - h_0 , \Phi       \rangle_{L^2(\mathbb{P})} + S_n  \big] \bigg \}  \bigg| \alpha,K,\mathcal{Z}_n   \bigg] \\ &  = \frac{ \int_{\Theta_n} \exp \bigg \{t  \sqrt{n} \big[ \langle  h - h_0 , \Phi       \rangle_{L^2(\mathbb{P})} + S_n   \big] \bigg \}     \exp \bigg \{   -\frac{n}{2} \bigg[ \E_n  \big(   \|   \widehat{m}(W,h)   \|_{\widehat{\Sigma}(W)}^2    \big)  - \E_n  \big(   \|   \widehat{m}(W,h_t)   \|_{\widehat{\Sigma}(W)}^2    \big)      \bigg]    \bigg \} }{\int_{\Theta_n}  \exp \big(  -\frac{n}{2} \E_n  \big(   \|   \widehat{m}(W,h)   \|_{\widehat{\Sigma}(W)}^2    \big)    \big) d \mu (h|\alpha,K) } \\ & \times \exp \bigg\{   -\frac{n}{2}    \E_n  \big(   \|   \widehat{m}(W,h_t)   \|_{\widehat{\Sigma}(W)}^2    \big)            \bigg\} d \mu(h|\alpha,K) \\ & = \exp \bigg[ \frac{t^2}{2}         \E \big[ (D_{h_0} \tilde{\Phi})' \Sigma(W) (D_{h_0} \tilde{\Phi})      \big]     + o_{\mathbb{P}}(1)     \bigg] \times \frac{\int_{\Theta_n}  \exp \big(  -\frac{n}{2} \E_n  \big(   \|   \widehat{m}(W,h_t)   \|_{\widehat{\Sigma}(W)}^2    \big)    \big) d \mu (h|\alpha,K)}{\int_{\Theta_n}  \exp \big(  -\frac{n}{2} \E_n  \big(   \|   \widehat{m}(W,h)   \|_{\widehat{\Sigma}(W)}^2    \big)    \big) d \mu (h|\alpha,K)}.
\end{align*}
Next, we verify that  \begin{align*}
    \frac{\int_{\Theta_n}  \exp \big(  -\frac{n}{2} \E_n  \big(   \|   \widehat{m}(W,h_t)   \|_{\widehat{\Sigma}(W)}^2    \big)    \big) d \mu (h|\alpha,K)}{\int_{\Theta_n}  \exp \big(  -\frac{n}{2} \E_n  \big(   \|   \widehat{m}(W,h)   \|_{\widehat{\Sigma}(W)}^2    \big)    \big) d \mu (h|\alpha,K)} \xrightarrow{\mathbb{P}} 1.
\end{align*}
Let $ \mu_{t,\tilde{\Phi}}(h|\alpha,K) $ denote the measure obtained from  translating $  \mu(.|\alpha,K) $ around $t \tilde{\Phi} / \sqrt{n}$. To be specific, it is the measure obtained via $$ \mu_{t,\tilde{\Phi}}(h|\alpha,K) \sim  \frac{G_{\alpha}}{\sqrt{K} \sqrt{\log n}} - \frac{t}{\sqrt{n}} \tilde{\Phi}.   $$

Since $\tilde{\Phi} \in \mathbb{H}$, it follows from \citep[Proposition I.20]{ghosal2017fundamentals} that $\mu_{t,\tilde{\Phi}}(h|\alpha,K)$ is absolutely continuous with respect to $\mu(.|\alpha,K)$ and admits a density function \begin{align} \label{gauss-cov}  \frac{d \mu_{t,\tilde{\Phi}}(h|\alpha,K)}{d \mu(h|\alpha,K)}  = \exp \bigg \{ \frac{t}{\sqrt{n}}  \langle  h , \tilde{\Phi}   \rangle_{\mathbb{H}_n}   - \frac{t^2}{2n} \|  \tilde{\Phi}   \|_{\mathbb{H}_n}^2    \bigg \} . \end{align}
From the definition of $\Theta_n$, we have   \begin{align*}
    \sup_{h \in \Theta_n} \left| \frac{t}{\sqrt{n}}  \langle  h , \tilde{\Phi}   \rangle_{\mathbb{H}_n} \right| \leq M \frac{t}{\sqrt{n}} \sqrt{n} \sqrt{\log n} \epsilon_n \|  \tilde{\Phi}  \|_{\mathbb{H}_n}  & = M t \sqrt{\log n} \epsilon_n \|  \tilde{\Phi}  \|_{\mathbb{H}_n} = M t \epsilon_n \log(n) \sqrt{K} \|  \tilde{\Phi}  \|_{\mathbb{H}} \; , 
\end{align*}
where we used the fact that $  \|  \tilde{\Phi}  \|_{\mathbb{H}_n} = \sqrt{K \log(n)}  \|  \tilde{\Phi}  \|_{\mathbb{H}}$. It follows that  \begin{align*}
   &  \sup_{h \in \Theta_n} \left| \frac{t}{\sqrt{n}}  \langle  h , \tilde{\Phi}   \rangle_{\mathbb{H}_n} \right| \lessapprox \frac{K \log(n)}{\sqrt{n}} = o(1) \; , \\ &  \frac{t^2}{2n} \|  \tilde{\Phi}  \|_{\mathbb{H}_n}^2 \lessapprox  \frac{K \log(n)}{\sqrt{n}} = o(1).
\end{align*}
Define$$ \Theta_{n,\tilde{\Phi}} = \Theta_n - \frac{t}{\sqrt{n}} \tilde{\Phi} = \bigg \{ g : g= h - \frac{t}{\sqrt{n}} \tilde{\Phi} \; , h \in \Theta_n      \bigg \} .  $$
By   Gaussian change of variables in (\ref{gauss-cov}) and the preceding bounds, we obtain
 \begin{align*}
    &   \frac{\int_{\Theta_n}  \exp \big(  -\frac{n}{2} \E_n  \big(   \|   \widehat{m}(W,h_t)   \|_{\widehat{\Sigma}(W)}^2    \big)    \big) d \mu (h|\alpha,K)}{\int_{\Theta_n}  \exp \big(  -\frac{n}{2} \E_n  \big(   \|   \widehat{m}(W,h)   \|_{\widehat{\Sigma}(W)}^2    \big)    \big) d \mu (h|\alpha,K)}   = e^{o(1)} \frac{\mu(\Theta_{n,\tilde{\Phi}}|\alpha,K,\mathcal{Z}_n)}{\mu(\Theta_n|\alpha,K,\mathcal{Z}_n)}.
\end{align*}
Since $\mu(\Theta_n^c|\alpha,K,\mathcal{Z}_n) \xrightarrow{\mathbb{P}} 0$, the preceding expression reduces to \begin{align*}
    &   \frac{\int_{\Theta_n}  \exp \big(  -\frac{n}{2} \E_n  \big(   \|   \widehat{m}(W,h_t)   \|_{\widehat{\Sigma}(W)}^2    \big)    \big) d \mu (h|\alpha,K)}{\int_{\Theta_n}  \exp \big(  -\frac{n}{2} \E_n  \big(   \|   \widehat{m}(W,h)   \|_{\widehat{\Sigma}(W)}^2    \big)    \big) d \mu (h|\alpha,K)}   = e^{o(1)} \frac{\mu(\Theta_{n,\tilde{\Phi}}|\alpha,K,\mathcal{Z}_n)}{1+ o_{\mathbb{P}}(1)}.
\end{align*}
By replacing $D,M$ in the definition of $\Theta_n$ in (\ref{thetan}) with a larger $D',M'$ if necessary, it is straightforward to verify that $\mu(\Theta_{n,\tilde{\Phi}}|\alpha,K,\mathcal{Z}_n) \xrightarrow{\mathbb{P}} 1$. From combining the preceding bounds, we obtain \begin{align}
    \label{laplace-bound} & \E^*  \bigg[ \exp \bigg \{t  \sqrt{n} \big[ \langle  h - h_0 , \Phi       \rangle_{L^2(\mathbb{P})} + S_n  \big] \bigg \}  \bigg| \alpha,K,\mathcal{Z}_n   \bigg] \\ & \nonumber  =  \exp \bigg[ \frac{t^2}{2}         \E \big[ (D_{h_0} \tilde{\Phi})' \Sigma(W) (D_{h_0} \tilde{\Phi})      \big]          \bigg] [1 + o_{\mathbb{P}}(1)].
\end{align}
Since this is true for every $t \in \R$, it follows from \citep[Lemma 1]{castillo2015bernstein}  and (\ref{tvconv}) that \begin{align} \label{weak-conv-1}
     \sqrt{n} \big( \langle  h - h_0 , \Phi       \rangle_{L^2(\mathbb{P})} + S_n  \big)\big| \mathcal{Z}_n \overset{\mathbb{P}}{\rightsquigarrow}  N \big( 0 ,  \E \big[ (D_{h_0} \tilde{\Phi})' \Sigma (D_{h_0} \tilde{\Phi})      \big]             \big) . 
\end{align}

\item[\textbf{$(vii)$}]
Recall that \begin{align} \label{smean2}  S_n =    \E_n \big[  \langle  \rho(Y,h_0(X)), D_{h_0}[\tilde{\Phi} ](W)        \rangle_{\Sigma(W)}            \big].
\end{align}
Since $S_n$ is the sample mean of a mean zero  random variable with finite variance, note that $ n \E[ S_n^2] = O(1)$. From (\ref{weak-conv-1}) and Lemma \ref{posmean-conv}, it follows from a uniform integrability in probability argument (see e.g. \cite{monard2021statistical}) that \begin{align}
    \label{weak-conv-mean} \langle \E \big[  h|\alpha,K,\mathcal{Z}_n  \big] , \Phi \rangle_{L^2(\mathbb{P})} =  \langle h_0 , \Phi \rangle_{L^2(\mathbb{P})} - S_n + o_{\mathbb{P}}(n^{-1/2}).
\end{align}
The first implication of this is that by substituting this identity back into (\ref{weak-conv-1}), we obtain \begin{align*}
\sqrt{n} \langle  h - \E \big[  h|\alpha,K,\mathcal{Z}_n  \big] , \Phi       \rangle \big| \mathcal{Z}_n  \overset{\mathbb{P}}{\rightsquigarrow} N \big( 0 ,  \E \big[ (D_{h_0} \tilde{\Phi})' \Sigma (D_{h_0} \tilde{\Phi})      \big]             \big) .  
\end{align*}
The second implication is that  $  \sqrt{n}\langle \E \big[  h|\alpha,K,\mathcal{Z}_n  \big] - h_0 , \Phi \rangle_{L^2(\mathbb{P})} $ is asymptotically equivalent to $ - \sqrt{n} S_n $. Hence, by the central limit theorem, we obtain \begin{align*}
    \sqrt{n} \langle  h_0 - \E \big[  h|\alpha,K,\mathcal{Z}_n  \big] , \Phi       \rangle  =   \sqrt{n} S_n + o_{\mathbb{P}}(1) \rightsquigarrow &   N(0, \E \big[ (D_{h_0} \tilde{\Phi})' \Sigma  \rho \rho'  \Sigma  (D_{h_0} \tilde{\Phi}) \big]   ) .
\end{align*}
The claim follows.

\end{enumerate}

\end{proof}

\begin{lemma}
\label{posmean-conv} Suppose the hypothesis of Theorem \ref{bvm} holds. Then \begin{align*} 
    n \E  \bigg[ \left|  \langle h - h_0 , \Phi    \rangle_{L^2(\mathbb{P})} \right|^2 \bigg| \alpha,K,\mathcal{Z}_n  \bigg]  = O_{\mathbb{P}}(1).
\end{align*}
\end{lemma}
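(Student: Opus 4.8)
The plan is to bound the second moment of the linear functional under the localized posterior $\mu^\star(\cdot|\alpha,K,\mathcal{Z}_n)$ and then transfer the bound to $\mu(\cdot|\alpha,K,\mathcal{Z}_n)$ using the total variation estimate \eqref{tvconv} together with the exponential tail control \eqref{theta-n-bound}. Since the paper has already established (in the proof of Theorem \ref{bvm}, steps $(i)$--$(vi)$) that for each fixed $t\in\mathbb{R}$ the Laplace transform satisfies \eqref{laplace-bound}, i.e. $\E^*[\exp\{t\sqrt{n}(\langle h-h_0,\Phi\rangle_{L^2(\mathbb{P})}+S_n)\}|\alpha,K,\mathcal{Z}_n] = \exp\{\tfrac{t^2}{2}\E[(D_{h_0}\tilde\Phi)'\Sigma(D_{h_0}\tilde\Phi)]\}(1+o_{\mathbb{P}}(1))$, the natural route is to differentiate (or Taylor-expand) this identity twice at $t=0$. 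Concretely, I would pick a fixed $t_0\neq 0$ (say $t_0=1$), use the inequality $x^2 \lessapprox \cosh(t_0 x)$ valid for all $x\in\mathbb{R}$ with an implied constant depending only on $t_0$, and apply it with $x=\sqrt{n}(\langle h-h_0,\Phi\rangle_{L^2(\mathbb{P})}+S_n)$; taking $\E^\star[\cdot|\alpha,K,\mathcal{Z}_n]$ and invoking \eqref{laplace-bound} with $t=\pm t_0$ yields $n\,\E^\star[(\langle h-h_0,\Phi\rangle_{L^2(\mathbb{P})}+S_n)^2|\alpha,K,\mathcal{Z}_n]=O_{\mathbb{P}}(1)$. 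Since $n\,\E[S_n^2]=O(1)$ (the sample mean of a mean-zero finite-variance variable, as noted in step $(vii)$) and $S_n$ is $\mathcal{Z}_n$-measurable, the triangle inequality in $L^2(\mu^\star)$ gives $n\,\E^\star[|\langle h-h_0,\Phi\rangle_{L^2(\mathbb{P})}|^2|\alpha,K,\mathcal{Z}_n]=O_{\mathbb{P}}(1)$.

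The second step is to pass from the localized measure $\mu^\star$ back to the posterior $\mu$. Write $\E[|\langle h-h_0,\Phi\rangle_{L^2(\mathbb{P})}|^2|\alpha,K,\mathcal{Z}_n] = \int_{\Theta_n}|\langle h-h_0,\Phi\rangle|^2\,d\mu + \int_{\Theta_n^c}|\langle h-h_0,\Phi\rangle|^2\,d\mu$. On $\Theta_n$ the integrand is controlled as in the first step (up to the normalizing-constant ratio $\mu(\Theta_n|\alpha,K,\mathcal{Z}_n)\le 1$, and $\mu(\Theta_n|\alpha,K,\mathcal{Z}_n)\xrightarrow{\mathbb{P}}1$), so the first piece is $O_{\mathbb{P}}(n^{-1})$. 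For the second piece I would use Cauchy--Schwarz: $\int_{\Theta_n^c}|\langle h-h_0,\Phi\rangle|^2\,d\mu \le \big(\int|\langle h-h_0,\Phi\rangle|^4\,d\mu\big)^{1/2}\mu(\Theta_n^c|\alpha,K,\mathcal{Z}_n)^{1/2}$. By \eqref{theta-n-bound}, $\mu(\Theta_n^c|\alpha,K,\mathcal{Z}_n)\le e^{-Rn\epsilon_n^2\log n}$ with $\mathbb{P}$-probability approaching one, so it decays faster than any polynomial in $n$. It therefore suffices to show the fourth moment $\int|\langle h-h_0,\Phi\rangle|^4\,d\mu$ grows at most polynomially (indeed exponentially would be overkill); this follows from $|\langle h-h_0,\Phi\rangle|\le \|\Phi\|_{L^2(\mathbb{P})}(\|h\|_{L^2(\mathbb{P})}+\|h_0\|_{L^2(\mathbb{P})})$, the bound $\|h\|_{L^2(\mathbb{P})}\lessapprox\|h\|_{\mathbf{H}^t}$ (Assumption \ref{data} and the Sobolev embedding), and the Gaussian tail estimate of Lemma \ref{aux3} applied to the prior $\mu(\cdot|\alpha,K)$ — since $\mu(\cdot|\alpha,K,\mathcal{Z}_n)$ is the prior reweighted by a density bounded by the reciprocal of the normalizing constant, and \eqref{conslbt2} lower-bounds that constant by $\exp(-Cn\log(n)\epsilon_n^2)$, the posterior fourth moment is at most $\exp(Cn\log(n)\epsilon_n^2)$ times a prior fourth moment that is $O(1)$ after the $\sqrt{K\log n}$ scaling is accounted for (it contributes a polynomial factor). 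This crude bound, multiplied by the super-polynomially small $\mu(\Theta_n^c|\alpha,K,\mathcal{Z}_n)^{1/2}$, is $o_{\mathbb{P}}(n^{-1})$.

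The main obstacle I anticipate is purely bookkeeping: making the crude fourth-moment bound on $\Theta_n^c$ genuinely rigorous while keeping track of the various $n$-dependent scalings ($K_n$, $\epsilon_n$, the $\sqrt{K\log n}$ prior rescaling, and the normalizing-constant lower bound $\exp(-Cn\log n\,\epsilon_n^2)$), and verifying that the product of the possibly-exponential fourth-moment bound with the super-exponentially small posterior mass of $\Theta_n^c$ is indeed negligible — this requires only that $e^{-R n\epsilon_n^2\log n}$ beats $e^{+C n\epsilon_n^2\log n}$, which holds once $R$ (controlled via the free constants $D,M$ defining $\Theta_n$) is taken large enough, exactly as in the choice of constants already made in step $(i)$ of the proof of Theorem \ref{bvm}. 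Everything else is a direct consequence of \eqref{laplace-bound}, \eqref{tvconv}, and \eqref{theta-n-bound}, which the lemma is entitled to invoke since it is stated after Theorem \ref{bvm}.
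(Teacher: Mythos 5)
Your proposal is correct and follows essentially the same route as the paper: bound the second moment of $\sqrt{n}\langle h-h_0,\Phi\rangle_{L^2(\mathbb{P})}$ under the localized measure by applying $x^2\lessapprox e^x+e^{-x}$ together with the Laplace-transform identity (\ref{laplace-bound}) at $t=\pm 1$, then handle the posterior mass outside the localization by Cauchy--Schwarz with a crude fourth-moment bound controlled via the normalizing-constant lower bound, noting that the super-exponential decay of the bad-set posterior mass (with a rate constant made as large as needed through $D,M$) wins. The only cosmetic difference is that the paper first splits on $\{\|h-h_0\|_{L^2(\mathbb{P})}\le E\delta_n\}$ and then on $\Theta_n$ (picking up the extra TV term $A_{1,2}$), whereas you split directly on $\Theta_n$ vs. $\Theta_n^c$; both decompositions rely on exactly the same ingredients, and your minor misstatement that the fourth moment ``grows at most polynomially'' (it is bounded by $\exp(cn\log n\,\epsilon_n^2)$, which you correctly account for in the end) does not affect the argument.
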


\begin{proof}[Proof of Lemma \ref{posmean-conv}] 
Let $C$ denote a generic universal constant that may change from line to line. Define the sequences \begin{align}
\label{seqs}  & \epsilon_{n}^{} =    \frac{  \sqrt{ K_n}}{\sqrt{n}}     \; \; , \; \; \delta_{n} =  \begin{cases}     n^{- \frac{\alpha}{2[\alpha + \zeta] + d}} \sqrt{\log n}     & \text{mildly ill-posed}  \\  ( \log n)^{- \alpha/ \zeta} \sqrt{\log \log n}   & \text{severely ill-posed}.    \end{cases}  
\end{align}
First, we state a few preliminary observations from the proof of Theorem \ref{rate}. There exists a universal constant $c > 0$ such that \begin{align} \label{lb-newproof}   \int  \exp\bigg(    - \frac{n}{2}  \E_n \big[    \widehat{m}(W,h) ' \widehat{\Sigma}(W)  \widehat{m}(W,h)            \big]       \bigg) d  \mu (h|\alpha,K)  \geq   \exp \big(   - c n  \log (n)  \epsilon_{n}^2   \big)   \end{align}
holds with $\mathbb{P}$ probability approaching $1$. Furthermore, for every $E' > 0$, there exists a sufficiently large $E$ (which depends on $E'$) such that \begin{align}
    \label{elarge} \mu \big( \| h- h_0 \|_{L^2(\mathbb{P})} \leq E \delta_n   \big| \alpha,K,\mathcal{Z}_n  \big) \geq 1- \exp(-E' n \log(n) \epsilon_n^2)
\end{align}
holds with $\mathbb{P}$ probability approaching $1$.

Fix any $E ' > c$ and let $E$ be as specified above. Write  \begin{align*}
  & \E  \bigg[ \left|  \langle h - h_0 , \Phi    \rangle_{L^2(\mathbb{P})} \right|^2 \bigg| \mathcal{Z}_n  \bigg]  \\ & = \E  \bigg[ \left|  \langle h - h_0 , \Phi    \rangle_{L^2(\mathbb{P})} \right|^2 \mathbbm{1} \{  \| h - h_0  \|_{L^2(\mathbb{P})} \leq E \delta_n  \} \bigg| \mathcal{Z}_n  \bigg]      \\ & + \E  \bigg[ \left|  \langle h - h_0 , \Phi    \rangle_{L^2(\mathbb{P})} \right|^2 \mathbbm{1} \{  \| h - h_0  \|_{L^2(\mathbb{P})} > E \delta_n  \} \bigg| \mathcal{Z}_n  \bigg]   \\ & = A_1  + A_2.
\end{align*}
For $A_2$, Cauchy-Schwarz yields \begin{align*}
    A_2^2 \leq \bigg ( \E  \bigg[ \left|  \langle h - h_0 , \Phi    \rangle_{L^2(\mathbb{P})} \right|^4  \bigg| \alpha,K,\mathcal{Z}_n  \bigg]   \bigg )   \mu \big( \| h- h_0 \|_{L^2(\mathbb{P})} > E \delta_n   \big| \alpha,K, \mathcal{Z}_n  \big)   .
\end{align*}
From $(\ref{lb-newproof})$, we obtain \begin{align*}
    & \E  \bigg[ \left|  \langle h - h_0 , \Phi    \rangle_{L^2(\mathbb{P})} \right|^4  \bigg| \mathcal{Z}_n  \bigg]  \\ &  = \frac{\int  \left|  \langle h - h_0 , \Phi    \rangle_{L^2(\mathbb{P})} \right|^4    \exp\bigg(    - \frac{n}{2}  \E_n \big[    \widehat{m}(W,h) ' \widehat{\Sigma}(W)  \widehat{m}(W,h)            \big]       \bigg) d  \mu (h|\alpha,K) }{\int \exp\bigg(    - \frac{n}{2}  \E_n \big[    \widehat{m}(W,h) ' \widehat{\Sigma}(W)  \widehat{m}(W,h)            \big]       \bigg) d  \mu (h|\alpha,K)  } \\ & \leq \exp(c n \log(n) \epsilon_n^2) \int  \left|  \langle h - h_0 , \Phi    \rangle_{L^2(\mathbb{P})} \right|^4    \exp\bigg(    - \frac{n}{2}  \E_n \big[    \widehat{m}(W,h) ' \widehat{\Sigma}(W)  \widehat{m}(W,h)            \big]       \bigg) d  \mu (h|\alpha,K) \\ & \leq \exp(c n \log(n) \epsilon_n^2) \|  \Phi \|_{L^2(\mathbb{P})}^4  \int  \| h -h_0 \|_{L^2(\mathbb{P})}^4  d  \mu (h|\alpha,K) \\ & \leq C \exp(cn \log(n) \epsilon_n^2).
\end{align*}
Hence, by (\ref{elarge}) it follows that \begin{align*}
    A_2^2 \leq C \exp(  (c- E')     n \log(n) \epsilon_n^2  ) .
\end{align*}
Since $E' > c$, it follows that $n A_2 = o_{\mathbb{P}}(1) $.

Let $\Theta_n$ be defined as (\ref{thetan}). In particular, by $(\ref{theta-n-bound})$, we have $\mu(\Theta_n^c|\alpha,K,\mathcal{Z}_n)  \leq e^{-R n \epsilon_n^2 \log n} $ for some universal constant $R > 0$. We denote by $\E^*(.|\alpha,K,\mathcal{Z}_n)$, the expectation with respect to the localized (to $\Theta_n$) posterior measure $\mu^*(.|\alpha,K,\mathcal{Z}_n)$  in $(\ref{mu-star})$. It follows that $A_1$ can be expressed as \begin{align*}
    A_1 & =  \E^*  \bigg[ \left|  \langle h - h_0 , \Phi    \rangle_{L^2(\mathbb{P})} \right|^2 \mathbbm{1} \{  \| h - h_0  \|_{L^2(\mathbb{P})} \leq E \delta_n  \} \bigg| \mathcal{Z}_n  \bigg]  \\ & +  \int  \left|  \langle h - h_0 , \Phi    \rangle_{L^2(\mathbb{P})} \right|^2 \mathbbm{1} \{  \| h - h_0  
 \|_{L^2(\mathbb{P})} \leq E \delta_n  \} d \big[ \mu(h|\alpha,K,\mathcal{Z}_n) - \mu^*(h|\alpha,K,\mathcal{Z}_n)   ] \\ & = A_{1,1} + A_{1,2}.
\end{align*}
From the general bound $x^2 \leq 2 \big( e^x + e^{-x} ) $ for every $x \in \R$, it follows from (\ref{laplace-bound}) with $t= \pm 1$ that \begin{align*}
    n A_{1,1} \leq C \big( e^{\sqrt{n} S_n} + e^{- \sqrt{n} S_n}      \big) \;,
\end{align*}
with $\mathbb{P}$ probability approaching $1$, where $S_n$ is defined as in (\ref{smean}). Since $S_n$ is a sample mean of a mean zero random variable with finite variance, the central limit theorem implies $nA_{1,1} = O_{\mathbb{P}}(1)$.

For $A_{1,2}$, if $\| . \|_{TV}$ denotes the total variation metric, we have that \begin{align*}
    A_{1,2} \leq E^2 \delta_n^2 \| \Phi  \|_{L^2(\mathbb{P})}^2 \| \mu - \mu^*   \|_{TV}  \leq E^2 \delta_n^2 2 \mu(\Theta_n^c|\alpha,K,\mathcal{Z}_n) \leq  C \delta_n^2 e^{-R n \epsilon_n^2 \log(n)}.
\end{align*}
It follows that $n A_{1,2} = o_{\mathbb{P}}(1)$.
\end{proof}

\begin{proof}[Proof of Corollary \ref{posgp}]
Let $\delta_n$ denote the stated contraction rate and $\epsilon_n = \sqrt{K_n} / \sqrt{n}$. From the proof of Theorem \ref{rate}, there exists a universal constant $D > 0$ such that for all sufficiently large $L > 0$, we have
\begin{align*}
    \mu \big(  \| h -h_0 \|_{L^2} > L \delta_n  \: \big| \: \mathcal{Z}_n     \big) \leq \exp(-D L \log(n) n \epsilon_n^2).
\end{align*}
with $\mathbb{P}$ probability approaching $1$. Suppose the preceding bound holds for all $ L \geq \overline{L} > 0$. We have that \begin{align*}
     & \|  h_0 -  \E \big[ h|\mathcal{Z}_n \big]  \|_{L^2}^2 \\ &   \leq \E \big(  \| h - h_0  \|_{L^2}^2 \big| \mathcal{Z}_n        ) \\ & = \int_{\|  h - h_0  \|_{L^2} < \overline{L} \delta_n }  \| h - h_0  \|_{L^2}^2 d \mu(h|\mathcal{Z}_n)   + \sum_{j=1}^{\infty}  \int \limits_{ j \overline{L} \delta_n \leq  \|  h - h_0  \|_{L^2}  < (j+1) \overline{L} \delta_n }  \| h - h_0  \|_{L^2}^2 d \mu(h|\mathcal{Z}_n)  \\ & \leq \overline{L}^2 \delta_n^2 +   \overline{L}^2 \delta_n^2 \sum_{j=1}^{\infty} (j+1)^2 \exp(-D j \overline{L} \log(n) n \epsilon_n^2   ) .
\end{align*}
Since the preceding sum is finite, the claim follows.
\end{proof}

\begin{proof}[Proof of Corollary \ref{bvm-col}] 
The set $C_n(\gamma)$ can equivalently be expressed as \begin{align*}
   & C_n(\gamma) = \{  t \in \R : \sqrt{n} \left|  t - \mathbf{L}\big( \E [h| \mathcal{Z}_n  ] \big)    \right|  \leq c_{1- \gamma}   \}\:, \\ & c_{1- \gamma} = (1- \gamma) \; \:  \text{quantile of} \:   \;  \sqrt{n} \left| \mathbf{L}(h) - \mathbf{L}\big( \E [h| \mathcal{Z}_n ]   \big)  \right| \;  ,  \; h \sim \mu(.|\alpha,K,\mathcal{Z}_n).
\end{align*}
Define \begin{align*}
    \sigma_{\Phi}^2 =  \E \big[ (D_{h_0} \tilde{\Phi} )' \{ \E[ \rho(Y,h_0(X)) \rho(Y,h_0(X))'|W  ] \}^{-1} (D_{h_0} \tilde{\Phi} )     \big]      .
\end{align*}

By Theorem \ref{bvm}$(i)$, we have \begin{align}
 \label{limvarc}   c_{1-\gamma} \xrightarrow{\mathbb{P}} (1- \gamma) \; \:  \text{quantile of} \:   \;  \left|Z \right|  \; \;, \; \; Z \sim N(0, \sigma_{\Phi}^2 ).
\end{align}
By Theorem \ref{bvm}$(ii)$, the distribution of $ \sqrt{n} \big( \mathbf{L}(h_0) - \mathbf{L}\big( \E [h| \mathcal{Z}_n ] \big) \big)$ is asymptotically Gaussian with variance $\sigma_{\Phi}^2$. From this observation and (\ref{limvarc}), it follows that the frequentist coverage of $C_n(\gamma)$ is given by
 \begin{align*}
    \mathbb{P} \big(  \sqrt{n} \left|  \mathbf{L}(h_0) - \mathbf{L}\big( \E [h| \mathcal{Z}_n ]  \big)   \right|  \leq c_{1- \gamma}      \big) = 1- \gamma + o_{\mathbb{P}}(1).
\end{align*}

\end{proof}

\bibliographystyle{style}

\bibliography{main.bib}

\end{document}